\newtheorem{thm}{Theorem}
\newtheorem{cor}[thm]{Corollary}
\newtheorem{lem}[thm]{Lemma}
\newtheorem{ex}[thm]{Example}
\newtheorem{rem}[thm]{Remark}
\newtheorem{problem}[thm]{Problem}
\begin{document}

\title{Ordinary differential equations associated with the heat equation.}

\author{Victor M. Buchstaber, Elena Yu. Bunkova}

\date{}

\maketitle

\section*{Abstract.} \text{}

This paper is devoted to the one-dimensional heat equation 
and the non-linear ordinary differential equations associated to it.

We consider homogeneous polynomial dynamical systems 
in the n-dimensional space, $n = 0, 1, 2, \dots$.
For any such system our construction matches a non-linear ordinary differential equation.
 We describe the algorithm that brings the solution of such 
 an equation to a solution of the heat equation. 
The classical fundamental solution of the heat equation
corresponds to the case $n=0$ in terms of our construction. Solutions of the heat equation defined by the elliptic theta-function lead to the Chazy-3 equation and correspond to the case $n=2$.
  
The group $SL(2, \mathbb{C})$ acts on the space of solutions of the heat equation. We show this action for each $n \geqslant 0$ induces the action of $SL(2, \mathbb{C})$ on the space of solutions of the corresponding ordinary differential equations. In the case $n=2$ this leads to the well-known action of this group on the space of solutions of the Chazy-3 equation.
An explicit description of the family of ordinary differential equations arising in our approach is given.

\section{Introduction.} \text{}

The classical elliptic theta-function is a solution of the heat equation. Using the ansatz
\begin{equation}
\theta_1(z, 2 \pi i t) = e^{- {1 \over 2} h(t) z^2 + r(t)} \sigma(z, g_2(t), g_3(t))
\end{equation}
where $\sigma(z, g_2, g_3)$ is the Weierstrass sigma-function with parameters $g_2, g_3$, 
we get a polynomial dynamical system on the functions $g_2(t)$, $g_3(t)$ and $h(t)$.
It follows that $h(t)$ is a solution to the Chazy-3 equation and $g_2(t)$, $g_3(t)$ are differential polynomials of $h(t)$.
Therefore we get a family of solutions to the heat equation, parametrized by the initial data of the Cauchy problem for the Chazy-3 equation (see details in \cite{Trudy}).

This paper is devoted to the construction and study of a sequence of families of solutions $\psi(z, t; n)$, $n = 0, 1, \dots$ to the heat equation. For each $n$ we get a family of solutions of the heat equation parametrized by the parameters of a polynomial dynamical system and the initial data of the Cauchy problem of some ordinary differential equation.
For $n=2$ we get a family which includes the one described above.

The content of the paper is the following:
In section 2 we collect the necessary facts on the classical one-dimensional heat equation, in particular describe the action of the group $SL(2, \mathbb{C})$ on the space of its solutions. In section 3 for the solutions of the heat equation we study the ansatz
\begin{equation} \label{2}
\psi(z, t) = e^{- {1 \over 2} h(t) z^2 + r(t)} \Phi(z, \textit{\textbf{x}}(t)),
\end{equation} 
where the coefficients of $\Phi(z, \textit{\textbf{x}}(t))$ as a series of $z$ are polynomials of $\textit{\textbf{x}}(t) = (x_2(t), x_3(t), \dots, x_n(t))$. This ansatz is closed under the action of the group $SL(2, \mathbb{C})$. We present a general construction 
 that reduces the problem of solution of the heat equation in this ansatz to the solution of a homogeneous polynomial dynamical system. In section 4 we reduce homogeneous polynomial dynamical systems of the previous section to ordinary differential equations of a special type.  Therefore for any $n$ we get a family of solutions to the heat equation of the form \eqref{2}. Each of this solutions  is defined by a finite-dimensional numerical vector. In section 5 we study the ordinary differential equations obtained in the previous section and their relations for different $n$. In section 6 we study the recursions for the coefficients of the function $\Phi(z, \textit{\textbf{x}}(t))$. In section 7 we obtain rational solutions for a series of the ordinary differential equations obtained. In the last section the special cases $n=0,1,2,3,4$ are considered.
We pay special attention to differential equations with the Painlev\'e property.

We express our gratitude to R. Kont and J. C. Zambrini for the useful discussions of results of this paper. The work is supported by the RFFI grants 11-01-00197-a, 11-01-12067-ofi-m-2011, RF Government grant №2010-220-01-077, ag. 11.G34.31.0005.

\section{The general properties of the heat equation.} \text{}

The heat equation
\begin{equation} \label{heat}
{\partial \psi \over \partial t} = {1 \over 2} {\partial^2 \psi \over \partial z^2}
\end{equation}
is linear with respect to $\psi$ and invariant with respect to shifts of arguments:\\
--- for solutions $\psi_1$ and $\psi_2$ the function $\psi_1 + \psi_2$ is a solution,\\
--- for a solution $\psi(z,t)$ the function $\psi(z + z_0,t + t_0)$ is a solution for any constant $z_0$ and $t_0$, 
and thus ${\partial \over \partial z} \psi(z,t)$ and ${\partial \over \partial t} \psi(z,t)$ are solutions.

Therefore without loose of generality we may consider only solutions $\psi(z,t)$ of the heat equation
that are even or odd functions of $z$ regular at $(z,t) = (0,0)$. Any other solution will be a sum of such solutions up to a shift of arguments.

Consider the problem of finding solutions to \eqref{heat} with the initial conditions $\psi(0,t) = \psi_0(t)$, $\psi'(0,t) = \psi_1(t)$.

For a solution of the form 
\begin{equation} \label{psi1}
\psi(z,t) = \sum_{k=0}^\infty \psi_k(t) {z^k \over k!} 
\end{equation}
equation \eqref{heat} takes the form
\begin{equation} \label{psi}
\sum_{k=0}^\infty \psi_k'(t) {z^k \over k!}  = {1 \over 2} \sum_{k=0}^\infty \psi_{k+2}(t) {z^k \over k!} 
\end{equation}
and therefore is equivalent to the system $\psi_{k+2}(t)  = 2 \psi_k'(t)$, $k = 0, 1, \dots$. Thus any solution of the form \eqref{psi1} is a sum of an even solution with the initial conditions $\psi(0,t) = \psi_0(t)$, $\psi'(0,t) = 0$ and an odd solution with the initial conditions $\psi(0,t) = 0$, $\psi'(0,t) = \psi_1(t)$.

Let us consider a function of the form
\begin{equation} \label{formp}
\psi(z,t) = e^{- {1 \over 2} h(t) z^2} \phi(z,t),
\end{equation}
where $\phi(z,t)$ is a polynomial of $z$. 

\begin{thm} \label{thm0}
The function $\psi(z,t)$ of the form \eqref{formp} is a solution to the heat equation \eqref{heat} 
if and only if it is a linear combination with constant coefficients of the function
\begin{equation} \label{fexp}
{1 \over \sqrt{t-c}} \exp\left( {- z^2 \over 2 (t-c)}\right)
\end{equation}
and its derivatives with respect to $z$. Here $c$ is a constant.
\end{thm}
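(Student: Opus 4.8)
The plan is to prove both implications, with the ``if'' part a short verification and the ``only if'' part carrying the real content.

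For the ``if'' direction I would first check by direct differentiation that the single function in \eqref{fexp}, say $G(z,t)=(t-c)^{-1/2}\exp\!\big(-z^2/(2(t-c))\big)$, solves \eqref{heat}: a routine computation shows that both $\partial_t G$ and $\tfrac12\partial_z^2 G$ equal $\tfrac12\big(-(t-c)^{-3/2}+z^2(t-c)^{-5/2}\big)\exp\!\big(-z^2/(2(t-c))\big)$. Since \eqref{heat} is linear and, as recorded at the start of this section, $\partial_z$ carries solutions to solutions, every finite constant-coefficient combination of $G$ and its $z$-derivatives is again a solution. Each $\partial_z^m G$ equals $\exp\!\big(-z^2/(2(t-c))\big)$ times a polynomial in $z$ with $t$-dependent coefficients, so any such combination has the form \eqref{formp} with $h(t)=1/(t-c)$.

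For ``only if'' I would substitute \eqref{formp} into \eqref{heat}; dividing by the exponential yields the linear equation
\[
\phi_t=\tfrac12\phi_{zz}-h z\,\phi_z-\tfrac12 h\,\phi+\tfrac12\big(h'+h^2\big)z^2\phi .
\]
Writing $\phi=\sum_{k=0}^N a_k(t)z^k$ with $a_N\not\equiv 0$, the only term on the right of degree exceeding $N$ is $\tfrac12(h'+h^2)z^2\phi$, whose leading coefficient is $\tfrac12(h'+h^2)a_N z^{N+2}$; matching against the left-hand side (degree $N$) forces $h'+h^2=0$. The nonzero solutions of this Riccati equation are exactly $h(t)=1/(t-c)$, which fixes the constant $c$; the degenerate branch $h\equiv 0$ gives the heat polynomials and I would treat it separately.

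Setting $\tau=t-c$ and $h=1/\tau$ kills the $z^2\phi$ term, and comparing coefficients of $z^k$ gives the triangular system
\[
a_k'+\frac{2k+1}{2\tau}\,a_k=\tfrac12(k+1)(k+2)\,a_{k+2},\qquad k=N,N-1,\dots,0,
\]
with $a_{N+2}=0$. I would solve it from the top down with the integrating factor $\tau^{(2k+1)/2}$: the homogeneous solutions are $C_k\tau^{-(2k+1)/2}$, and since by induction each $a_{k+2}$ is a combination of powers $\tau^{-(2j+1)/2}$ with $j\ge k+2$, the integrand $\tau^{(2k+1)/2}a_{k+2}$ is a combination of powers $\tau^{k-j}$ with $k-j\le -2$, whose exponents never equal $-1$; hence no logarithms appear and every $a_k$ remains a finite sum of powers of $\tau$. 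Consequently the space of polynomial-in-$z$ solutions of degree $\le N$ is parametrized freely by the integration constants $C_0,\dots,C_N$ and has dimension $N+1$. Finally I would exhibit a basis: the self-similar ansatz $\phi=\tau^{\alpha}g(z/\sqrt\tau)$ reduces the equation to the Hermite equation $g''-ug'-(2\alpha+1)g=0$, whose polynomial solutions are $g=\mathrm{He}_m$ with $\alpha=-(m+1)/2$, and the resulting $\psi$ is, up to the sign $(-1)^m$, precisely $\partial_z^m G$. As $\mathrm{He}_m$ has degree $m$, the solutions $\partial_z^m G$ for $0\le m\le N$ are linearly independent, hence a basis of the $(N+1)$-dimensional space, so every $\psi$ of the form \eqref{formp} is a constant-coefficient combination of $G$ and its $z$-derivatives. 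The main obstacle is this last structural step --- establishing that the degree-$\le N$ solution space is \emph{exactly} $(N+1)$-dimensional, i.e. controlling the integration constants and excluding logarithmic terms --- rather than any of the individual computations.
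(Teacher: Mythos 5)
Your proof is correct and follows essentially the same route as the paper's: both extract $h'+h^2=0$ from the coefficient of $z^{N+2}$, reduce to the same triangular ODE system for the coefficients of $\phi$, and identify the residual freedom at each level with the $z$-derivatives of the Gaussian \eqref{fexp} (your no-logarithm check, dimension count and Hermite basis are a more careful packaging of the paper's ``unique up to a homogeneous solution, that is $\psi_k$'' step). One remark: the degenerate branch $h\equiv 0$ that you rightly flag is silently discarded in the paper as well, and on that branch the statement as literally written fails (heat polynomials such as $z^2+t$ have the form \eqref{formp} with $h\equiv 0$ but are not constant-coefficient combinations of \eqref{fexp} and its $z$-derivatives for any $c$), so the theorem must be read as assuming a genuine Gaussian factor $h\not\equiv 0$.
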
 

\begin{proof}
If $\phi(z,t)$ is a polynomial of degree $n$, set $\phi(z,t) = \sum_{k=0}^n \phi_k(t) z^k$, $\phi_n(t) \not\equiv 0$. Substituting \eqref{formp} into \eqref{heat} and dividing by $e^{- {1 \over 2} h(t) z^2}$, we get an equality between two polynomials of $z$. The one in the left hand side has the coefficient $- {1 \over 2} h' \phi_n$ at $z^{n+2}$, and the one at the right hand side has the coefficient ${1 \over 2} h^2 \phi_n$ at $z^{n+2}$. Thus 
\[
h(t) = {1 \over (t - c)}
\]
for some constant $c$.
Now the left hand side has the coefficient $\phi_n'(t)$ at $z^{n}$, and the right hand side has the coefficient $- (n + {1 \over 2}) h(t) \phi_n(t)$ at $z^{n}$. 
Thus $\phi_n(t) = {c_n \over (t - c)^{n + {1 \over 2}}}$ for some constant $c_n \ne 0$.
A solution with this $\phi_n(t)$ is $\psi_n(t) = (- 1)^{n} c_n {\partial^n \over \partial z^n} \left( {1 \over \sqrt{t - c}} exp\left({- z^2 \over 2 (t - c)}\right) \right)$.
The other coefficients give the relations 
\[
\phi_{k+2} = {2 \over (k+2) (k+1)} \left( \phi_k'(t) + {(2 k +1) \over 2} h(t) \phi_k(t) \right), \quad k = 0, 1, 2, \dots, n-1
\]
which for fixed $\phi_{k+2}$ define $\phi_{k}$ uniquely up to the addition of a solution to $\phi_k'(t) = - {(2 k +1)\over 2} h(t) \phi_k(t)$, that is $\psi_k(t)$. Thus for any polynomial $\phi(z,t)$ the corresponding solution is the sum of $\psi_k(t)$ with some coefficients, and $\psi_k(t)$ themselves up to constants are derivatives of \eqref{fexp}.
\end{proof}

The \emph{Hermite polynomials} $He_k(x)$ 
are defined by the relation
\[
{d^k \over d x^k} \exp \left(- x^2 \over 2\right) = (-1)^k \exp \left( - x^2 \over 2 \right) He_k(x).
\]  

\begin{cor}
The function $\psi(z,t)$ is a solution of the form \eqref{formp} of the equation \eqref{heat}
if and only if $h(t) = {1 \over (t - c)}$ for some constant $c$ and $\phi(z,t)$ is a linear combination
with constant coefficients of the polynomials with respect to $z$
\[
{1 \over \sqrt{t - c}} He_k({z \over \sqrt{t - c}}).
\]
\end{cor}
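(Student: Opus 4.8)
The plan is to read the Corollary directly off Theorem \ref{thm0}. That theorem (together with the determination $h(t)=1/(t-c)$ made inside its proof) already asserts that every solution of the form \eqref{formp} is a constant-coefficient linear combination of the fundamental solution \eqref{fexp} and its $z$-derivatives. Since the prefactor $e^{-\frac12 h(t)z^2}$ equals $\exp\left(\frac{-z^2}{2(t-c)}\right)$ exactly when $h(t)=1/(t-c)$, dividing each basis solution $\frac{\partial^k}{\partial z^k}$ of \eqref{fexp} by this prefactor isolates its contribution to $\phi(z,t)$. So it suffices to identify these contributions with (scalar multiples of) the polynomials $\frac{1}{\sqrt{t-c}}He_k\left(\frac{z}{\sqrt{t-c}}\right)$ and to check that the passage $\psi\leftrightarrow\phi$ is linear and term-by-term invertible, so that the two ``linear combination'' statements correspond.

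The computational core is the evaluation of the $z$-derivatives of \eqref{fexp}. First I would freeze $t$ and pass to the self-similar variable $x=z/\sqrt{t-c}$, under which $\frac{\partial}{\partial z}=(t-c)^{-1/2}\frac{\partial}{\partial x}$ and the fundamental solution becomes $(t-c)^{-1/2}\exp\left(-\frac{x^2}{2}\right)$. Applying $\frac{\partial^k}{\partial z^k}$ and then the defining relation of the Hermite polynomials gives
\[
\frac{\partial^k}{\partial z^k}\left(\frac{1}{\sqrt{t-c}}\exp\left(\frac{-z^2}{2(t-c)}\right)\right)=\frac{(-1)^k}{(t-c)^{(k+1)/2}}\exp\left(\frac{-z^2}{2(t-c)}\right)He_k\left(\frac{z}{\sqrt{t-c}}\right).
\]
Factoring out the prefactor $\exp\left(\frac{-z^2}{2(t-c)}\right)$ shows that the $\phi$-part of $\frac{\partial^k}{\partial z^k}$ of \eqref{fexp} is a scalar multiple of $He_k\left(\frac{z}{\sqrt{t-c}}\right)$, carrying the power $(t-c)^{-(k+1)/2}$, i.e. it is $\frac{1}{\sqrt{t-c}}He_k\left(\frac{z}{\sqrt{t-c}}\right)$ up to a power of $(t-c)$.

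For the two directions I would argue as follows. In the forward direction I substitute the expansion provided by Theorem \ref{thm0} into $\phi=e^{\frac12 h(t)z^2}\psi$ and use the displayed derivative formula to express $\phi$ as a combination of the $He_k$-polynomials. In the converse direction I run the same correspondence backward, each $He_k$-polynomial being (up to a power of $(t-c)$) the $\phi$-part of the genuine solution $\frac{\partial^k}{\partial z^k}$ of \eqref{fexp}. The one point that really demands care — and the only genuine obstacle — is the bookkeeping of the powers of $(t-c)$ produced by the $k$-fold $z$-differentiation: matching the stated normalization $\frac{1}{\sqrt{t-c}}He_k\left(\frac{z}{\sqrt{t-c}}\right)$ requires tracking the extra factor $(t-c)^{-k/2}$, and I would verify carefully whether the intended normalization of the listed polynomials is $\frac{1}{\sqrt{t-c}}He_k\left(\frac{z}{\sqrt{t-c}}\right)$ or the homogeneous variant $(t-c)^{-(k+1)/2}He_k\left(\frac{z}{\sqrt{t-c}}\right)$, so that the map between the constants of the $\psi$-expansion and those of the $\phi$-expansion is a genuine bijection and the ``if and only if'' holds exactly.
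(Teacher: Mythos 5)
Your argument is essentially the paper's own: the corollary is stated there without a separate proof precisely because it is read directly off Theorem~\ref{thm0} together with the defining relation of the Hermite polynomials, which is exactly what you do via the substitution $x=z/\sqrt{t-c}$. Your closing caveat about normalization is well taken rather than a defect of your proof: the $k$-th $z$-derivative of \eqref{fexp} contributes $(-1)^k(t-c)^{-(k+1)/2}He_k\bigl(z/\sqrt{t-c}\bigr)$ to $\phi$, so it is the homogeneous variant that makes the ``if'' direction literally true (multiplying a solution by the $t$-dependent factor $(t-c)^{k/2}$ would destroy it), and the corollary's displayed normalization should be read accordingly.
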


In the focus of our interest is the construction of a sequence 
of solutions of the heat equation starting with \eqref{fexp}. 
Each of this solutions generates by differentiation 
with respect to $z$ a new sequence of solutions. 

In the case $t \in \mathbb{R}$, $z \in \mathbb{R}$ set 
\begin{equation} \label{I}
I(t) = \int_{- \infty}^{\infty} \psi(z,t) dz.
\end{equation}

Let $\psi(z,t)$ be a solution of the heat equation. Then if 
\[
\left. {\partial \psi \over \partial z} \right|_{- \infty}^{\infty} \equiv 0 \text{ for } t_0 < t < t_1,
\]
then $I(t) \equiv const$ on the interval $(t_0, t_1)$.

\begin{cor}
In the notions of theorem \ref{thm0} the conservation law for \eqref{I} holds:
\[
I(t) \equiv const \text{ for } t \in (c, \infty).
\]
\end{cor}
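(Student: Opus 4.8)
The plan is to reduce everything to the boundary criterion stated immediately before the corollary: if a solution $\psi$ of the heat equation satisfies $\left.\frac{\partial\psi}{\partial z}\right|_{-\infty}^{\infty}\equiv 0$ on an interval, then $I(t)$ is constant there. Thus the whole task is to verify that this boundary term vanishes for every $t \in (c,\infty)$.

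First I would invoke Theorem \ref{thm0}, which guarantees that any $\psi$ of the form \eqref{formp} is a finite linear combination, with constant coefficients, of the function \eqref{fexp} and its derivatives with respect to $z$. For fixed $t$, each such $z$-derivative of $\frac{1}{\sqrt{t-c}}\exp\left(\frac{-z^2}{2(t-c)}\right)$ is the product of that Gaussian with a polynomial in $z$ (concretely, a multiple of $\frac{1}{\sqrt{t-c}}He_k\left(\frac{z}{\sqrt{t-c}}\right)$, by the preceding Corollary). So $\psi(z,t)$ and $\frac{\partial\psi}{\partial z}(z,t)$ are each finite sums of Gaussian-times-polynomial terms.

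Next, for $t \in (c,\infty)$ we have $t-c>0$, hence the exponent $\frac{-z^2}{2(t-c)}$ tends to $-\infty$ as $z\to\pm\infty$ and the Gaussian dominates any polynomial factor. Consequently both $\psi$ and $\frac{\partial\psi}{\partial z}$ tend to $0$ as $z\to\pm\infty$, the integral $I(t)$ converges, and $\left.\frac{\partial\psi}{\partial z}\right|_{-\infty}^{\infty}=0$ throughout $(c,\infty)$. Applying the boundary criterion then yields $I(t)\equiv\mathrm{const}$ on $(c,\infty)$.

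I expect no genuine obstacle here; the only step requiring care is the differentiation under the integral sign used in the derivation $I'(t)=\frac{1}{2}\int \partial_z^2\psi\,dz$, which I would justify by the uniform Gaussian decay of $\partial_t\psi$ on compact subintervals of $(c,\infty)$. It is worth remarking why the restriction $t>c$ is essential: for $t<c$ the exponent becomes positive, the integrand grows, and $I(t)$ diverges, so the conservation law cannot be extended across $t=c$.
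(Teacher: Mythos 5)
Your argument is correct and is exactly the one the paper intends (the corollary is left without an explicit proof, but it is meant to follow by combining Theorem \ref{thm0}'s classification of such solutions as Gaussian-times-polynomial with the boundary criterion stated just before the corollary). Your added remarks on justifying differentiation under the integral sign and on why $t>c$ is needed are sound and only make the implicit argument more careful.
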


\subsection{Symmetry groups.} \text{}

The detailed description of the Lie algebra of heat equation transform group and one-parametric subgroups corresponding to the basis in this Lie algebra are given in \cite{Olver}. We will need the action of the group $SL(2,\mathbb{C})$ on the space of solutions of the heat equation.

Let $M = \begin{pmatrix} a & b \\ c & d \end{pmatrix} \in SL(2,\mathbb{C})$ and
\begin{equation} \label{shift}
\Gamma(M) \psi(z,t) = {1 \over \sqrt{c t + d}} \exp{\left( {- c z^2 \over 2 (c t + d)}\right)} \psi\left({z \over c t + d},{a t + b \over c t + d}\right).
\end{equation}

\begin{lem} \label{Le}
If $\psi(z,t)$ is a solution to the heat equation \eqref{heat}, then so is $\Gamma(M) \psi(z,t)$.
\end{lem}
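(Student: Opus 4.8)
The plan is to verify the claim by direct substitution, showing that every term produced by the prefactor and the change of variables cancels except for the heat equation evaluated at the transformed point. First I would introduce the shorthand $s = ct+d$, $Z = z/s$, $T = (at+b)/s$, and $P = s^{-1/2}\exp(-cz^2/(2s))$, so that $\Gamma(M)\psi(z,t) = P\,\psi(Z,T)$. The elementary partials are $Z_z = 1/s$, $Z_t = -cz/s^2$, $T_z = 0$, and crucially $T_t = (ad-bc)/s^2 = 1/s^2$, where the last equality is exactly the place the determinant condition $\det M = 1$ enters. I would also record $P_z = -(cz/s)P$ and $P_t = (-c/(2s) + c^2z^2/(2s^2))P$.

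Next I would compute $\partial_z(\Gamma(M)\psi)$ and $\partial_{zz}(\Gamma(M)\psi)$ by the chain rule, keeping in mind that $T_z = 0$ so that only $Z$-derivatives of $\psi$ appear. This yields $\partial_z(\Gamma(M)\psi) = (P/s)(-cz\,\psi + \psi_Z)$ and, after a second differentiation, a bracket containing the terms $c^2z^2\psi/s$, $-2cz\psi_Z/s$, $-c\psi$, and $\psi_{ZZ}/s$, all multiplied by $P/s$. Separately, $\partial_t(\Gamma(M)\psi)$ produces the terms $-c\psi/(2s)$, $c^2z^2\psi/(2s^2)$, $-cz\psi_Z/s^2$, and $\psi_T/s^2$, all times $P$.

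Forming the difference $\partial_t(\Gamma(M)\psi) - \tfrac12\partial_{zz}(\Gamma(M)\psi)$, I expect the quadratic-in-$z$ terms, the linear-in-$z$ terms (the $\psi_Z$ contributions), and the zeroth-order $\psi$ terms to cancel identically, leaving exactly $(P/s^2)(\psi_T - \tfrac12\psi_{ZZ})$. Since $\psi$ solves the heat equation \eqref{heat}, the factor $\psi_T - \tfrac12\psi_{ZZ}$ vanishes when evaluated at $(Z,T)$, and the conclusion follows. The computation is routine, so the main obstacle is organizational rather than conceptual: one must track the chain rule carefully and trust that the normalizing power $s^{-1/2}$ and the Gaussian exponent $-cz^2/(2s)$ are tuned precisely to annihilate all the spurious terms, while the determinant relation $ad-bc=1$ is what makes $T_t = 1/s^2$ match the coefficient coming from $\psi_{ZZ}$.

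As an alternative I would note that $SL(2,\mathbb{C})$ is generated by the matrices with $c=0$ (time shifts $\psi(z,t)\mapsto\psi(z,t+b)$, whose invariance is recorded in the remarks preceding Theorem \ref{thm0}, together with the parabolic scalings $\psi(z,t)\mapsto\sqrt{a}\,\psi(az,a^2t)$ that follow from the standard scaling invariance of \eqref{heat}) and the single lower-unipotent element $\begin{pmatrix}1&0\\c&1\end{pmatrix}$. Granting the homomorphism property $\Gamma(M_1M_2)=\Gamma(M_1)\Gamma(M_2)$, the lemma then reduces to the one genuinely nontrivial case, an Appell-type inversion. That reduction still requires the substitution computation above for the inversion generator, so I would present the direct verification as the primary route.
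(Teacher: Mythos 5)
Your proof is correct and follows exactly the route the paper takes: the paper's entire proof is the one-line remark that the lemma follows by straightforward substitution of \eqref{shift} into \eqref{heat} using $ad-bc=1$, and your computation (with $T_t=(ad-bc)/s^2$ as the point where the determinant condition enters, and all lower-order terms cancelling to leave $(P/s^2)(\psi_T-\tfrac12\psi_{ZZ})$) is precisely that substitution carried out in detail.
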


The proof is a straightforward substitution of \eqref{shift} into \eqref{heat} using $a d - b c = 1$.

Note $\Gamma(M) 1$ is the classical solution \eqref{fexp} of the heat equation.

\begin{lem} 
The group $SL(2, \mathbb{C})$ acts on the space of solutions of the heat equation.
\end{lem}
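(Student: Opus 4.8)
The plan is to deduce the action property from Lemma \ref{Le} together with the two axioms of a group action. By Lemma \ref{Le} every $\Gamma(M)$ sends a solution of \eqref{heat} to a solution, so each $\Gamma(M)$ is a well-defined linear operator on the space of solutions; it then remains to verify that $\Gamma(E) = \mathrm{id}$ for the identity matrix $E$ and that $\Gamma$ is compatible with matrix multiplication. The first is immediate: for $a=d=1$, $b=c=0$ the prefactor in \eqref{shift} equals $1$, the exponential equals $1$, and both arguments are unchanged, so $\Gamma(E)\psi = \psi$.

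The substance is the composition law, which I would obtain by substituting \eqref{shift} twice. Writing $M_i = \begin{pmatrix} a_i & b_i \\ c_i & d_i \end{pmatrix}$, one evaluates $\Gamma(M_1)\bigl(\Gamma(M_2)\psi\bigr)(z,t)$ by sending $z \mapsto z/(c_1 t + d_1)$ and $t \mapsto (a_1 t + b_1)/(c_1 t + d_1)$ in the inner function, and then matches three independent pieces against $\Gamma(M_2 M_1)$. First, the Möbius map on the $t$-slot composes contravariantly, $\frac{a_2 T + b_2}{c_2 T + d_2}\big|_{T = (a_1 t + b_1)/(c_1 t + d_1)} = \frac{(M_2 M_1)_{11}\, t + (M_2 M_1)_{12}}{c_{21} t + d_{21}}$, where $c_{21} = c_2 a_1 + d_2 c_1$ and $d_{21} = c_2 b_1 + d_2 d_1$; this is the standard fact that Möbius transformations represent matrix multiplication. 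Second, the automorphy factor $ct + d$ obeys the cocycle identity $c_2 T + d_2 = (c_{21} t + d_{21})/(c_1 t + d_1)$, so the two square-root prefactors telescope, $(c_1 t + d_1)^{-1/2}(c_2 T + d_2)^{-1/2} = (c_{21} t + d_{21})^{-1/2}$, which is exactly the prefactor of $\Gamma(M_2 M_1)$.

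The hard part is the third piece, the addition of the two Gaussian exponents. After the substitution the combined quadratic form in $z$ is $-\tfrac{z^2}{2(c_1 t + d_1)}\bigl(c_1 + \tfrac{c_2}{c_{21} t + d_{21}}\bigr)$, and I would show that this equals the single exponent $-\tfrac{c_{21} z^2}{2(c_{21} t + d_{21})}$ of $\Gamma(M_2 M_1)$. Clearing denominators reduces the claim to the polynomial identity $c_1(c_{21} t + d_{21}) + c_2 = c_{21}(c_1 t + d_1)$: the coefficients of $t$ coincide automatically, while the constant terms coincide if and only if $c_1 b_1 + 1 = a_1 d_1$, that is, precisely when $\det M_1 = a_1 d_1 - b_1 c_1 = 1$. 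This is the one point where membership in $SL(2,\mathbb{C})$, as opposed to $GL(2,\mathbb{C})$, is essential. Assembling the three pieces yields $\Gamma(M_1)\Gamma(M_2) = \Gamma(M_2 M_1)$ together with $\Gamma(E) = \mathrm{id}$, i.e. a (right) action of $SL(2,\mathbb{C})$ on the solution space; setting $M \cdot \psi := \Gamma(M^{-1})\psi$ converts it to a left action if preferred. The only residual delicacy is the branch of the square root in \eqref{shift}: the telescoping holds up to a sign, so $\Gamma$ is a priori a representation of the metaplectic double cover; fixing a consistent branch on domains where the relevant $ct+d$ remain in one half-plane makes the identity hold on the nose.
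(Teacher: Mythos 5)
Your proof is correct and follows essentially the same route as the paper's: both reduce the lemma to the composition law for $\Gamma$, verified by direct substitution via the M\"obius composition of the $t$-argument, the cocycle identity for the factor $ct+d$, and the addition of the two Gaussian exponents, which is exactly where $\det M_1 = 1$ is used. The paper's own proof is a one-line version that only records $\Gamma(M_2)\Gamma(M_1)1 = \Gamma(M_1M_2)1$ on the constant function $1$; your version carries out the verification on an arbitrary solution $\psi$ (which is what the statement actually requires) and your closing remark on the branch of the square root flags a genuine subtlety that the paper passes over in silence.
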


\begin{proof}
We have $\Gamma(M_2) \Gamma(M_1) 1 = \Gamma \left( M_1 M_2 \right)1$. Thus $\Gamma(M)$ induces a representation of $SL(2,\mathbb{C})$ on the space of solutions of the heat equation.
\end{proof}

\section{Heat equations and dynamical systems.} \text{}

Any even (odd) regular at $z = 0$ 
nonvanishing function $\psi(z,t)$ may be presented in the form 
\begin{equation} \label{form}
\psi(z,t) = e^{- {1 \over 2} h(t) z^2 + r(t)} z^{2 s} \phi(z,t),
\end{equation}
where $s$ is an non-negative integer and 
the function $\phi(z,t)$ in the vicinity of $z=0$ is given by the series
\begin{equation} \label{e3}
\phi(z,t) = z^\delta + \sum_{k \geqslant 2} \phi_k(t) {z^{2k+\delta} \over (2k+\delta)!},
\end{equation}
where $\delta = 0$ (correspondingly, $\delta = 1$). Note that this representation is unique, thus for any function $\psi(z,t)$ we may speak of the corresponding functions $h(t)$, $r(t)$, $\phi(z,t)$. It follows from \eqref{psi} that if $\psi$ solves the heat equation then $s=0$.
Further we assume $\delta = 0$ or $\delta = 1$.

Consider a function $\psi(z,t)$ in the form \eqref{form} with $s = 0$ and a function $h(t)$. 
Denote by $\gamma_1(M) h(t)$ and $\gamma_2(M) \exp(r(t))$ the corresponding functions in the representation in the form \eqref{form} of the function $\Gamma(M) \psi(z,t)$.

\begin{lem} \label{Let}
\[
\gamma_1(M) h(t) = {1 \over (c t + d)^2} h\left({a t + b \over c t + d}\right) + {c \over (c t + d)}, \qquad \gamma_2(M) \exp(r(t)) = {1 \over (c t + d)^{\delta} \sqrt{c t + d}} e^{r\left({a t + b \over c t + d}\right)}
\]
\end{lem}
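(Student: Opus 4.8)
The plan is to substitute the representation \eqref{form} of $\psi$ (with $s=0$) directly into the definition \eqref{shift} of $\Gamma(M)\psi$, and then identify the new Gaussian weight and the new normalizing constant by comparing the result with the canonical form \eqref{form}--\eqref{e3}. Introducing the shorthand $\tau=\frac{at+b}{ct+d}$ and $w=\frac{z}{ct+d}$, so that $\psi(w,\tau)=e^{-\frac12 h(\tau)w^2+r(\tau)}\phi(w,\tau)$, formula \eqref{shift} gives
\[
\Gamma(M)\psi(z,t)=\frac{1}{\sqrt{ct+d}}\exp\left(\frac{-cz^2}{2(ct+d)}-\frac12 h(\tau)w^2+r(\tau)\right)\phi(w,\tau).
\]
This is essentially the same straightforward substitution used for Lemma \ref{Le}; the work is purely in bookkeeping where each factor lands.

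First I would collect the part of the exponent quadratic in $z$. Using $w^2=\frac{z^2}{(ct+d)^2}$, the coefficient of $-\frac12 z^2$ equals exactly $\frac{c}{ct+d}+\frac{h(\tau)}{(ct+d)^2}$, which is the asserted expression for $\gamma_1(M)h(t)$. Since $\Gamma(M)\psi$ is again a solution of the heat equation by Lemma \ref{Le}, it has $s=0$, so it admits a representation of the form \eqref{form}, and this quadratic coefficient is a legitimate candidate for its Gaussian weight.

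Next I would track the remaining, non-Gaussian factors to extract $\tilde r(t)$. The key observation is that $\phi(w,\tau)$, read as a series in $z$, has leading term $w^\delta=(ct+d)^{-\delta}z^\delta$ and, crucially, no $z^{2+\delta}$ term, because the sum in \eqref{e3} starts at $k=2$; the rescaling $z\mapsto w$ multiplies each monomial by a power of $(ct+d)$ and hence preserves this absence. Thus the expression above already has the shape ``Gaussian $\times$ (series in $z$ beginning at $z^\delta$ with vanishing $z^{2+\delta}$ coefficient)''. Factoring the constant $(ct+d)^{-\delta}$ out of $\phi(w,\tau)$ so that the residual series is normalized to leading coefficient $1$, and combining it with the prefactor $(ct+d)^{-1/2}$ and with $e^{r(\tau)}$, yields
\[
e^{\tilde r(t)}=\frac{1}{(ct+d)^{\delta}\sqrt{ct+d}}\,e^{r(\tau)},
\]
which is precisely $\gamma_2(M)\exp(r(t))$.

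Finally I would invoke the uniqueness of the representation \eqref{form}--\eqref{e3} noted in the text: since the decomposition just produced is already in canonical form, it must be \emph{the} representation of $\Gamma(M)\psi$, so the functions read off are genuinely $\gamma_1(M)h$ and $\gamma_2(M)\exp(r)$. I expect no essential obstacle; the single point demanding care is the factor $(ct+d)^{-\delta}$ coming from the leading monomial $w^\delta$ of $\phi$. It is exactly this factor, together with the $(ct+d)^{-1/2}$ from the prefactor, that produces the $\delta$-dependent power in $\gamma_2(M)$, and overlooking it is the one way the computation can go astray.
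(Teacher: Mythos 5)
Your proposal is correct and follows essentially the same route as the paper: substitute the canonical form \eqref{form} into \eqref{shift}, read off the quadratic part of the exponent for $\gamma_1(M)h$, and pull the factor $(ct+d)^{-\delta}$ out of $\phi\bigl(\tfrac{z}{ct+d},\tfrac{at+b}{ct+d}\bigr)$ so that the rescaled series again has the normalized form \eqref{e3}, which fixes $\gamma_2(M)e^{r}$. Your explicit appeal to the uniqueness of the representation \eqref{form}--\eqref{e3} is exactly the implicit final step of the paper's argument.
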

\begin{proof}
For $\psi(z,t)$ of the form \eqref{form} with $s = 0$ we have
\[
\Gamma(M) \psi(z,t) = {1 \over \sqrt{c t + d}} \exp{\left( - {1 \over 2} \left({1 \over (c t + d)^2} h\left({a t + b \over c t + d}\right) + {c \over (c t + d)} \right) z^2 + r\left({a t + b \over c t + d}\right) \right)} \phi\left({z \over c t + d},{a t + b \over c t + d}\right)
\]
and $\phi({z \over c t + d},{a t + b \over c t + d}) = {1 \over (c t + d)^{\delta}} \hat{\phi}(z,t)$ with $\hat{\phi}(z,t) = z^\delta + \sum_{k \geqslant 2} \hat{\phi}_k(t) {z^{2k+\delta} \over (2k+\delta)!}$ for $\hat{\phi}_k(t) = {1 \over (c t + d)^{2 k}} \phi_k({a t + b \over c t + d})$.
We see $\hat{\phi}(z,t)$ has the from \eqref{e3},
thus the action on $e^{r(t)}$ and on $h(t)$ induced from the action of $\Gamma(M)$ on $\psi(z,t)$ in \eqref{form} is as in the lemma.
\end{proof}

\subsection{General construction.} \label{secgc} \text{}

\begin{thm} \label{thm1}
The function $\psi(z,t)$ of the form \eqref{form}
is a solution to the heat equation \eqref{heat} 
if and only if $s =0$, $r' = - (\delta + {1 \over 2}) h$, 
and the function $\phi(z,t)$ is a solution to
\begin{equation} \label{e4}
{\partial \over \partial t} \phi = \mathcal{H}_2 \phi - h \mathcal{H}_0 \phi,
\end{equation}
where
\[
\mathcal{H}_2 \phi = \left({1 \over 2} {\partial^2 \over \partial z^2} + u z^2 \right) \phi,
\qquad
u =  {1 \over 2} \left( h' + h^2 \right),
\qquad
\mathcal{H}_0 \phi = \left( z {\partial \over \partial z} - \delta \right) \phi.
\]
\end{thm}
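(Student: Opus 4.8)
The plan is to substitute the ansatz \eqref{form} with $s=0$ directly into the heat equation \eqref{heat} and separate the resulting identity into a part that determines $r'$ and a part that governs $\phi$. First I would compute the two derivatives of $\psi = e^{-\frac12 h z^2 + r}\phi$. The $t$-derivative yields $\psi_t = e^{-\frac12 h z^2 + r}\bigl( -\frac12 h' z^2 + r'\bigr)\phi + e^{-\frac12 h z^2 + r}\phi_t$, while $\frac12\psi_{zz}$ produces, after expanding $\partial_z\bigl(e^{-\frac12 h z^2}\phi\bigr) = e^{-\frac12 h z^2}(\phi_z - h z\,\phi)$ and differentiating once more, the expression $\frac12 e^{-\frac12 h z^2 + r}\bigl(\phi_{zz} - 2 h z\,\phi_z - h\,\phi + h^2 z^2\,\phi\bigr)$. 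Cancelling the common factor $e^{-\frac12 h z^2 + r}$ turns \eqref{heat} into a single identity among $\phi$ and its $z$-derivatives.

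Next I would collect terms by their structure in $z$. The program is to rewrite the right-hand side using the operators in the statement: the $\frac12\phi_{zz}$ term together with the $\frac12 h^2 z^2\phi$ term combines into $\mathcal H_2\phi$ once we recognize that $-\frac12 h' z^2\phi$ from the left-hand side should be moved over, so that the full coefficient of $z^2\phi$ becomes $\frac12(h'+h^2) = u$, matching the definition of $\mathcal H_2$. The first-order terms $-h z\,\phi_z$ and the constant-in-$z$ term $-\frac12 h\,\phi$ are the candidates for the $-h\mathcal H_0\phi = -h(z\partial_z - \delta)\phi$ contribution; here I expect the $\delta$ to enter precisely through the leading behaviour $\phi = z^\delta + \cdots$ of \eqref{e3}.

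The delicate point, and the one I would treat most carefully, is the separation of the $r'$ equation from the $\phi$ equation. The term $r'\phi$ on the left and the term $-\frac12 h\,\phi$ coming from $\frac12\psi_{zz}$ are both proportional to $\phi$ with no extra power of $z$, whereas the operator $\mathcal H_0$ contributes $-h\cdot(-\delta)\phi = \delta h\,\phi$. Matching the lowest-order coefficient in the series \eqref{e3} — where $\phi$ starts with $z^\delta$ and $z\partial_z$ acts on $z^\delta$ as multiplication by $\delta$ — is what forces $r' = -(\delta + \tfrac12)h$; the residual identity is then exactly \eqref{e4}. I would verify this by comparing the coefficient of the leading monomial $z^\delta$ on both sides, checking that the constant $-\frac12 h$ from $\frac12\psi_{zz}$ splits as $-\delta h$ (absorbed by $\mathcal H_0$) plus $-(\delta+\tfrac12)h = r'$, which closes the bookkeeping.

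Finally, to finish I would argue the converse direction, namely that if $r' = -(\delta+\frac12)h$ and $\phi$ solves \eqref{e4}, then reversing the substitution gives a genuine solution of \eqref{heat}; this is immediate since every step above is an equivalence once the common exponential factor is nonzero. The claim $s=0$ I would take directly from the remark following \eqref{e3}, which already notes that \eqref{psi} forces $s=0$ for heat-equation solutions, so no separate argument is needed beyond citing it. The main obstacle I anticipate is purely organizational: keeping the three families of $z$-terms (the $z^2\phi$, the $z\phi_z$, and the $\phi$ terms) correctly routed into $\mathcal H_2$, $\mathcal H_0$, and $r'$ without double-counting the constant-in-$z$ contribution that is shared between $\mathcal H_2$, $\mathcal H_0$, and the $r'$ relation.
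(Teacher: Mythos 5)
Your proposal is correct and follows essentially the same route as the paper: substitute \eqref{form} into \eqref{heat}, cancel the exponential factor, assemble $\mathcal H_2$ and $-h\mathcal H_0$ from the resulting terms, and read off $r'=-(\delta+\tfrac12)h$ from the coefficient of $z^\delta$ in the series \eqref{e3} (the paper additionally re-derives $s=0$ from the $z^{2s+\delta-2}$ coefficient rather than citing the earlier remark, but that is the same computation). One cosmetic slip: in your final bookkeeping sentence the piece absorbed by $\mathcal H_0$ should be $+\delta h$ (so that $-\tfrac12 h=\delta h+r'$), consistent with your own earlier correct statement that $-h\cdot(-\delta)\phi=\delta h\,\phi$; the conclusion is unaffected.
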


\begin{proof} Substituting \eqref{form} and \eqref{e3} into \eqref{heat} and presenting the right and left parts of this equation as series of $z$,
up to elements of order $2 s + \delta$ we get
\[
(2 s + \delta) (2 s + \delta - 1) e^{r(t)} z^{2 s + \delta - 2} + (z^{2 s + \delta}) = 0.
\]
Thus $s = 0$ and up to elements of order $\delta + 2$ we get
\[
( (\delta + {1 \over 2}) h(t) + r'(t)) e^{r(t)} z^{\delta} + (z^{\delta + 2}) = 0.
\]
Thus $r'(t) = - (\delta + {1 \over 2}) h(t)$, and the last statement of the theorem is obtained directly from substituting \eqref{form} into \eqref{heat} and dividing by $e^{- {1 \over 2} h(t) z^2 + r(t)}$. 
\end{proof}

Set $\textbf{\textit{x}} = (x_2, \dots, x_{n+1})$.
We will assume that the variables $x_q$ have the grading $- 4 q$, and all constants (that is the parameters not depending on $t$, $z$ nor $x$) have the grading $0$ if not stated otherwise.
Set
\begin{equation} \label{e5}
\Phi(z; \textbf{\textit{x}}) = z^\delta + \sum_{k \geqslant 2} \Phi_k(\textbf{\textit{x}}) {z^{2 k+ \delta} \over (2 k + \delta)!},
\end{equation}
where $z$ is a variable of grading $2$ independent of $\textbf{\textit{x}}$, 
and $\Phi_k(\textbf{\textit{x}})$ are homogeneous polynomials of degree $-4 k$.
For example, we have $\Phi_2(\textbf{\textit{x}}) = c_2 x_2$ for some constant $c_2$ and $\Phi_3(\textbf{\textit{x}}) = c_3 x_3$ for some constant $c_3$.
Therefore $\Phi(z; \textbf{\textit{x}})$ is a homogeneous function of degree $2 \delta$. 
Let us remark that functions homogeneous with respect to some degree are often called quasi-homogeneous. We will not make any difference and further on call all this functions homogeneous.

The homogeneity of $\Phi(z; \textbf{\textit{x}})$ implies 
\[
\mathcal{H}_0 \Phi = L_0 \Phi, 
\quad
\text{where}
\quad
L_0 = \sum_{k=2}^{n+1} 2 k x_k {\partial \over \partial x_k}.
\]

Let us consider in $\mathbb{C}^n$ with coordinates $(x_2, \dots, x_{n+1})$ 
a set of homogeneous polynomials $ p_{q}(\textbf{\textit{x}})$ with $\deg p_q = - 4 q$, $q = 3, \dots, n+2$, 
and introduce a homogeneous polynomial dynamical system
\begin{equation} \label{e11}
{d x_k \over d \tau} = p_{k+1}(\textbf{\textit{x}}), \quad k = 2, \dots, n+1, \quad \deg \tau = 4.
\end{equation}
Set
\[
L_2 \Phi = \sum p_{k+1} (x_2, \dots, x_{n+1}) {\partial \over \partial x_k} \Phi.
\]

We will need the following result: 

\begin{thm} \label{t2}
The function $\Phi(z; \textbf{\textit{x}})$
gives a solution to the equation
\begin{equation} \label{e6}
\mathcal{H}_2 \Phi = L_2 \Phi
\end{equation}
if and only if 
\begin{equation} \label{e7}
\Phi_2 = - 4 (1 + 2 \delta) u, \quad
\Phi_{3} = 2 {d \over d \tau} \Phi_2, \quad
\Phi_{k+1} = 2 {d \over d \tau} \Phi_k + {(2 k + \delta - 1) (2 k + \delta) \over 2 (1 + 2 \delta)} \Phi_2 \Phi_{k-1}, \quad k>2.
\end{equation}
\end{thm}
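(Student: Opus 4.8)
The plan is to reduce the functional identity \eqref{e6} to a comparison of the coefficients of the formal power series in $z$ on its two sides, and then read off \eqref{e7} order by order. First I would record that along the trajectories of the dynamical system \eqref{e11} the operator $L_2$ is nothing but total differentiation in $\tau$ acting on functions of $\textbf{\textit{x}}$: by the chain rule $L_2 \Phi = \sum_k p_{k+1}\,\partial_{x_k}\Phi = \frac{d}{d\tau}\Phi$. Since $z^\delta$ is $\textbf{\textit{x}}$-independent, this gives
\[
L_2 \Phi = \sum_{k \geqslant 2} \frac{d\Phi_k}{d\tau}\,\frac{z^{2k+\delta}}{(2k+\delta)!},
\]
so the right-hand side of \eqref{e6} has coefficient $\frac{d\Phi_k}{d\tau}$ at $z^{2k+\delta}/(2k+\delta)!$ and no other terms.

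Next I would expand the left-hand side. Writing $\mathcal{H}_2 \Phi = \frac{1}{2}\partial_z^2\Phi + u z^2 \Phi$, the two summands act oppositely on the grading of the series \eqref{e5}: the second derivative lowers the index by one, contributing $\frac12\Phi_{k+1}$ to the coefficient of $z^{2k+\delta}/(2k+\delta)!$, while multiplication by $u z^2$ raises it by one, contributing $u(2k+\delta)(2k+\delta-1)\Phi_{k-1}$ after clearing the factorial $(2k+\delta)!$. Setting $\Phi_0 = \delta!$ and $\Phi_1 = 0$ to absorb the $z^\delta$-term and the absent $z^{2+\delta}$-term, the identity \eqref{e6} becomes equivalent to the single family of scalar relations
\[
\tfrac12 \Phi_{k+1} + u\,(2k+\delta)(2k+\delta-1)\,\Phi_{k-1} = \frac{d\Phi_k}{d\tau}, \qquad k \geqslant 1,
\]
and, because these are formal power series in $z$, equation \eqref{e6} holds \emph{if and only if} every one of these relations holds; this yields both directions of the theorem simultaneously.

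Finally I would extract \eqref{e7} by specializing $k$. At the lowest order $k=1$ the right-hand side vanishes (there is no $\Phi_1$), giving $\Phi_2 = -2(2+\delta)!\,u$, which equals $-4(1+2\delta)u$ for the two admissible values $\delta \in \{0,1\}$; solving this for $u$ and substituting back turns the factor $-2(2k+\delta)(2k+\delta-1)u$ into $\frac{(2k+\delta-1)(2k+\delta)}{2(1+2\delta)}\Phi_2$, which is exactly the coefficient in the stated recursion. The order $k=2$ gives $\Phi_3 = 2\,\frac{d}{d\tau}\Phi_2$ (again using $\Phi_1=0$), and $k>2$ gives the general recursion. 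I expect the only delicate point to be the bookkeeping: keeping the factorial normalizations straight when the two operators shift the index in opposite directions, handling the two lowest orders as boundary cases, and checking the numerical coincidence $2(2+\delta)! = 4(1+2\delta)$ that reconciles my $\Phi_2$ with the form written in \eqref{e7}. I would also verify in passing that the recursion keeps each $\Phi_{k+1}$ a homogeneous polynomial of degree $-4(k+1)$, so that $\Phi$ stays of the form \eqref{e5}.
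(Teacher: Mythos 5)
Your proposal is correct and is precisely the ``direct substitution of \eqref{e5} into \eqref{e6}'' that the paper itself invokes as the entire proof: you identify $L_2$ with $\tfrac{d}{d\tau}$, compare coefficients at $z^{2k+\delta}/(2k+\delta)!$, and your boundary cases ($\Phi_0=\delta!$, $\Phi_1=0$) together with the check $2(2+\delta)!=4(1+2\delta)$ for $\delta\in\{0,1\}$ reproduce \eqref{e7} exactly, including the substitution $u=-\Phi_2/(4(1+2\delta))$ that converts the $u$-coefficient into the stated $\frac{(2k+\delta-1)(2k+\delta)}{2(1+2\delta)}\Phi_2$ factor. Nothing is missing; you have simply written out the computation the authors left implicit.
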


The proof follows from direct substitution of \eqref{e5} into \eqref{e6}.

\begin{cor}
Under the conditions of this theorem for a constant $c$ such that $\Phi_2(\textbf{\textit{x}}) = c x_2$ we have $u = - {c \over 4 (1 + 2 \delta)} x_2$.
\end{cor}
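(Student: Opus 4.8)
The plan is to read the result straight off the first relation in the system \eqref{e7}. Under the hypotheses of Theorem \ref{t2} the function $\Phi(z;\textbf{\textit{x}})$ solves \eqref{e6}, and \eqref{e7} records in particular that
\[
\Phi_2 = -4(1+2\delta)\,u.
\]
So the entire content of the corollary is already encoded in this one identity, and the task reduces to solving it for $u$ after inserting the given form of $\Phi_2$.

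First I would recall the grading conventions fixed after \eqref{e5}: each coordinate $x_k$ carries grading $-4k$, and $\Phi_k(\textbf{\textit{x}})$ is a homogeneous polynomial of degree $-4k$. Thus $\Phi_2$ is homogeneous of degree $-8$, and since $x_2$ is the unique coordinate of that degree, $\Phi_2$ is forced to be a scalar multiple $c\,x_2$. This is exactly the constant $c$ named in the statement, so the hypothesis $\Phi_2(\textbf{\textit{x}}) = c\,x_2$ is automatically consistent rather than an additional constraint. The single substantive step is then to substitute $\Phi_2 = c\,x_2$ into the displayed relation, giving
\[
u = -\frac{\Phi_2}{4(1+2\delta)} = -\frac{c}{4(1+2\delta)}\,x_2,
\]
which is the claimed formula. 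Since $\delta \in \{0,1\}$ we have $1+2\delta \in \{1,3\}$, so the denominator never vanishes and the division is legitimate.

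There is essentially no obstacle here: the proof is a one-line algebraic inversion of the first equation of \eqref{e7}, combined with the homogeneity observation that pins $\Phi_2$ to the line spanned by $x_2$. The only bookkeeping worth verifying is that the grading matches on both sides — by Theorem \ref{thm1} the quantity $u = \tfrac{1}{2}(h' + h^2)$ inherits degree $-8$ from $\Phi_2 = -4(1+2\delta)u$, which agrees with the degree of $x_2$, so the identity $u = -\frac{c}{4(1+2\delta)}x_2$ is homogeneous of the correct degree.
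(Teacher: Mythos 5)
Your proof is correct and follows the same (essentially only) route the paper intends: the corollary is an immediate inversion of the first relation $\Phi_2 = -4(1+2\delta)u$ of \eqref{e7} after inserting $\Phi_2 = c\,x_2$, and your homogeneity remark pinning $\Phi_2$ to a multiple of $x_2$ matches the paper's own observation following \eqref{e5}. Nothing is missing.
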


Recall a function $f(\textbf{\textit{x}})$ is called non-degenerate if for a vector-function $\phi(\textbf{\textit{x}})$ 
the equality $ \langle \phi(\textbf{\textit{x}}), grad \; f(\textbf{\textit{x}}) \rangle \equiv 0 $ implies $\phi(\textbf{\textit{x}}) \equiv 0$.

\begin{thm}
Let $\Phi(z; \textbf{\textit{x}})$ be a non-degenerate function of the form \eqref{e5} 
with coefficients satisfying \eqref{e7}.
The function   
\begin{equation} \label{e8}
\phi(z,t) = \Phi(z; x_2(t), \dots, x_{n+1}(t))
\end{equation}
is a solution of the equation \eqref{e4} if and only if the vector-function $\textbf{\textit{x}}(t) = (x_2(t), \dots, x_{n+1}(t))$ 
satisfies the dynamical system
\[
{d \over d t} x_k(t) = p_{k+1}(\textbf{\textit{x}}(t)) - 2 k h(t) x_k(t), \quad k = 2, \dots, n+1.
\]
\end{thm}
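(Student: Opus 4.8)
The plan is to substitute $\phi(z,t) = \Phi(z; x_2(t),\dots,x_{n+1}(t))$ directly into \eqref{e4} and compare the two sides, reducing the whole equation to a single vector relation in which the dynamical system appears as the vanishing of the coefficients. First I would differentiate the left-hand side by the chain rule, obtaining
\[
\frac{\partial}{\partial t}\phi = \sum_{k=2}^{n+1} \frac{\partial \Phi}{\partial x_k}(z;\textbf{\textit{x}}(t))\,\frac{d x_k}{dt}(t).
\]
For the right-hand side I would use that $\mathcal{H}_2$ and $\mathcal{H}_0$ act only on $z$, so they commute with the substitution $\textbf{\textit{x}}\mapsto\textbf{\textit{x}}(t)$. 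The homogeneity of $\Phi$ gives $\mathcal{H}_0\Phi = L_0\Phi = \sum_k 2k x_k \partial_{x_k}\Phi$, and since the coefficients of $\Phi$ satisfy \eqref{e7}, theorem \ref{t2} gives $\mathcal{H}_2\Phi = L_2\Phi = \sum_k p_{k+1}(\textbf{\textit{x}})\,\partial_{x_k}\Phi$. Evaluating both at $\textbf{\textit{x}}(t)$ turns the right-hand side of \eqref{e4} into $\sum_k\big(p_{k+1}(\textbf{\textit{x}}(t)) - 2k h(t) x_k(t)\big)\partial_{x_k}\Phi$.

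Subtracting, equation \eqref{e4} becomes equivalent to
\[
\sum_{k=2}^{n+1}\Big(\frac{dx_k}{dt} - p_{k+1}(\textbf{\textit{x}}(t)) + 2k h(t) x_k(t)\Big)\,\frac{\partial\Phi}{\partial x_k}(z;\textbf{\textit{x}}(t)) = 0,
\]
an identity in $z$ for each $t$. Writing $\beta_k(t)$ for the bracketed coefficients and $\beta(t)$ for the vector they form, this is exactly $\langle \beta(t),\,grad\,\Phi\rangle = 0$, where the gradient is taken in the variables $\textbf{\textit{x}}$. The \emph{if} direction is then immediate: if $\textbf{\textit{x}}(t)$ satisfies the dynamical system, every $\beta_k(t)$ vanishes, the identity holds trivially, and hence $\phi$ solves \eqref{e4}. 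No non-degeneracy is needed here.

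For the converse I would invoke the non-degeneracy hypothesis on $\Phi$: the relation $\langle \beta(t),\,grad\,\Phi\rangle\equiv 0$ forces $\beta(t)\equiv 0$, which is precisely the asserted dynamical system. Concretely, expanding in $z$, the vanishing of the combination amounts to $\sum_j \beta_j\,\partial_{x_j}\Phi_k = 0$ for every $k\geq 2$; the lowest relations, coming from $\Phi_2 = c_2 x_2$ and $\Phi_3 = c_3 x_3$, already give $\beta_2 = \beta_3 = 0$, and the recursion \eqref{e7} propagates this triangularly to the remaining $\beta_k$. I expect the main obstacle to be exactly this last step: making rigorous the passage from the vanishing of the single combination $\langle \beta(t),\,grad\,\Phi\rangle$ along the curve $\textbf{\textit{x}}(t)$ to the vanishing of each coefficient $\beta_k$. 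This is where non-degeneracy is indispensable, since without it the gradient components $\partial_{x_k}\Phi$ could satisfy a nontrivial linear relation and the equivalence would fail; the care lies in checking that non-degeneracy as defined delivers the independence of the $\partial_{x_k}\Phi$ needed at the points traversed by $\textbf{\textit{x}}(t)$, rather than merely over the field of functions of $\textbf{\textit{x}}$.
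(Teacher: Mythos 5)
Your proposal is correct and follows essentially the same route as the paper: substitute \eqref{e8} into \eqref{e4}, use homogeneity for $\mathcal{H}_0\Phi = L_0\Phi$ and theorem \ref{t2} for $\mathcal{H}_2\Phi = L_2\Phi$, reduce to $\sum_k \bigl( \dot{x}_k - p_{k+1} + 2khx_k \bigr)\partial_{x_k}\Phi = 0$, and invoke non-degeneracy to kill each coefficient. Your closing remark about applying the non-degeneracy definition along the curve $\textbf{\textit{x}}(t)$ rather than to a vector-function of $\textbf{\textit{x}}$ is a fair observation of a subtlety the paper passes over silently, but it does not change the argument.
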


\begin{proof}
With the substitution \eqref{e8} equation \eqref{e4} becomes
\[
\sum_{k=2}^{n+1}  \left({\partial x_k \over \partial t} - p_{k+1}(x_2, \dots, x_{n+1}) + 2 k h x_k \right) {\partial \Phi \over \partial x_k} = 0.
\]
The condition that $\Phi$ is non-degenerate finishes the proof.
\end{proof}

Let us summarize the results of this section in a single theorem.
Let $p_{k}(\textbf{\textit{x}})$, $k = 3, \dots, n+2$, be homogeneous polynomials of $\textbf{\textit{x}} = (x_2, \dots, x_{n+1})$, $\deg x_q = - 4 q$, $\deg p_k = - 4 k$, $c$ -- some constant, $\Phi(z; \textbf{\textit{x}})$ -- a homogeneous function defined by \eqref{e5}.

\begin{thm}[Key theorem] \label{t5}
Among the following conditions each two imply the third one:

1) 
The function
\begin{equation} \label{sol}
\psi(z,t) = e^{- {1 \over 2} h(t) z^2 + r(t)} \Phi(z; \textbf{\textit{x}}(t))
\end{equation}
solves the heat equation.

2) The coefficients of $\Phi(z; \textbf{\textit{x}})$ are defined by the recursion 
\[
\Phi_{q} = 2 \sum_{k=2}^{n+1} p_{k+1}(\textbf{\textit{x}}) {\partial \over \partial x_k} \Phi_{q-1} + {(2 q + \delta - 3) (2 q + \delta - 2) \over 2 (1 + 2 \delta)}  \Phi_2 \Phi_{q-2}, \qquad q = 4, 5, 6, \dots,
\]
with the initial conditions 
$\Phi_2(\textbf{\textit{x}}) = c x_2$, $\Phi_{3} = 2 c p_{3}(\textbf{\textit{x}})$.

3) The set of functions $(r(t), h(t), \textbf{\textit{x}}(t))$ solves the homogeneous polynomial dynamical system
\begin{equation} \label{e9}
{d \over d t} r = - (\delta + {1 \over 2}) h, \quad
{d \over d t} h = - h^2 - {c \over 2 (1 + 2 \delta)} x_2,\quad
{d \over d t} x_k = p_{k+1}(\textbf{\textit{x}}) - 2 k h x_k, \quad k = 2, \dots, n+1.
\end{equation}
\end{thm}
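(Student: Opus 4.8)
The final statement (the Key theorem) is a three-way equivalence of the form "any two imply the third," so the natural strategy is to prove it by routing everything through the theorems already established in this section. The plan is to show that each of the three conditions is equivalent to a cluster of statements coming from Theorem \ref{thm1}, Theorem \ref{t2}, and the preceding theorem on the dynamical system, and then to observe that these pieces interlock so that any two force the third.

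First I would unpack condition (1) using Theorem \ref{thm1}: the function $\psi(z,t)$ of the form \eqref{sol} solves the heat equation if and only if $r' = -(\delta + \tfrac12) h$ and $\phi(z,t) = \Phi(z; \textbf{\textit{x}}(t))$ solves \eqref{e4}, namely $\partial_t \phi = \mathcal{H}_2 \phi - h\, \mathcal{H}_0 \phi$. This already splits (1) into the $r$-equation (the first equation of \eqref{e9}) together with the requirement that $\phi$ solve \eqref{e4}. Next, using $\mathcal{H}_0 \Phi = L_0 \Phi$ and introducing the operator $L_2$ attached to the polynomials $p_{k+1}$, I would rewrite the condition that $\phi = \Phi(z; \textbf{\textit{x}}(t))$ solves \eqref{e4} as the conjunction of two facts: that $\Phi$ satisfies the operator identity $\mathcal{H}_2 \Phi = L_2 \Phi$ (equation \eqref{e6}), and that $\textbf{\textit{x}}(t)$ satisfies the dynamical system ${d \over dt} x_k = p_{k+1}(\textbf{\textit{x}}) - 2k\, h\, x_k$ (the $x$-equations of \eqref{e9}). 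The key observation tying these together is that $\partial_t \Phi(z;\textbf{\textit{x}}(t)) = \sum_k x_k'(t)\, \partial_{x_k}\Phi$, while $\mathcal{H}_2 \Phi - h L_0 \Phi = L_2 \Phi - h \sum_k 2k x_k \partial_{x_k}\Phi$ once \eqref{e6} holds, so matching coefficients of $\partial_{x_k}\Phi$ (using non-degeneracy) yields precisely the $x$-flow.

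The remaining bridge is between condition (2) and the operator identity \eqref{e6}. By Theorem \ref{t2}, the identity $\mathcal{H}_2 \Phi = L_2 \Phi$ holds if and only if the coefficients $\Phi_k$ obey the recursion \eqref{e7}, where $\Phi_2 = -4(1+2\delta) u$ and $u = \tfrac12(h' + h^2)$. Here I would use the Corollary: writing $\Phi_2 = c\, x_2$ forces $u = -\tfrac{c}{4(1+2\delta)} x_2$, hence $\tfrac12(h' + h^2) = -\tfrac{c}{4(1+2\delta)} x_2$, which rearranges to $h' = -h^2 - \tfrac{c}{2(1+2\delta)} x_2$, exactly the middle equation of \eqref{e9}. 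I would then check that the recursion \eqref{e7}, once $\Phi_2 = c x_2$ and $\tfrac{d}{d\tau}$ is read off as $L_2$ acting through the $p_{k+1}$, is literally the recursion stated in (2) (with the index shift $q = k+1$ and $\Phi_3 = 2 \tfrac{d}{d\tau}\Phi_2 = 2 c\, p_3$). So condition (2) is equivalent to: the coefficients of $\Phi$ satisfy \eqref{e7} \emph{with} the normalization $\Phi_2 = c x_2$, i.e.\ to \eqref{e6} together with the identification of $u$ in terms of $x_2$.

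With these translations in hand the equivalence collapses cleanly: (1) is "[$r$-equation] and [\eqref{e6}] and [$x$-flow]"; (2) is "[\eqref{e6}] and [$h$-equation]"; and (3) is "[$r$-equation] and [$h$-equation] and [$x$-flow]." I would then verify the three implications directly. Assuming (1) and (2) gives all of \eqref{e6}, the $r$-equation, the $x$-flow, and the $h$-equation, hence (3). Assuming (1) and (3) gives \eqref{e6} (from (1)) and the $h$-equation (from (3)), hence (2). Assuming (2) and (3) gives \eqref{e6} (from (2)), the $r$-equation and $x$-flow (from (3)); combined with \eqref{e6} and the $h$-equation these reassemble into (1) via Theorem \ref{thm1} run backwards. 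The main subtlety — and the step I expect to require the most care — is the bookkeeping that identifies the recursion of (2) with \eqref{e7}: one must correctly interpret $\tfrac{d}{d\tau}$ as the Lie-derivative operator $L_2 = \sum p_{k+1} \partial_{x_k}$ along the grading-homogeneous flow \eqref{e11}, confirm that the index shift and the binomial coefficient $\tfrac{(2q+\delta-3)(2q+\delta-2)}{2(1+2\delta)}$ match after substitution, and track the non-degeneracy hypothesis carefully, since it is precisely what lets one pass from an annihilation of a linear combination of $\partial_{x_k}\Phi$ to the vanishing of each coefficient in the $x$-flow.
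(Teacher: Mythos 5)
Your overall strategy---assembling the theorem out of Theorem \ref{thm1}, Theorem \ref{t2} with its corollary, and the non-degeneracy theorem---is exactly what the paper intends (it offers no separate proof, only the remark that the theorem summarizes the section). However, the device you use to make the three implications ``collapse cleanly,'' namely the unconditional decompositions $(1)\Leftrightarrow[\text{$r$-equation}]\wedge[\eqref{e6}]\wedge[\text{$x$-flow}]$ and $(2)\Leftrightarrow[\eqref{e6}]\wedge[\text{$h$-equation}]$, is false in both cases, and that is precisely where the content of the ``each two imply the third'' claim lives. Condition (1) makes no reference to the polynomials $p_k$ at all, so it cannot by itself yield either \eqref{e6} or the $x$-flow: from (1) and Theorem \ref{thm1} you get only the $r$-equation together with the \emph{single} identity \eqref{e4}, and splitting \eqref{e4} into [\eqref{e6}] plus [$x$-flow] genuinely requires one of the other two hypotheses (plus non-degeneracy). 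Symmetrically, condition (2) is a statement about polynomials in $\textbf{\textit{x}}$ that never mentions $h(t)$, so it cannot produce the $h$-equation; what is true is the reverse packaging: \eqref{e6} with $u=\tfrac12(h'+h^2)$ is equivalent to $[(2)]\wedge[\text{$h$-equation}]$ by Theorem \ref{t2} and its corollary, not the other way around.

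Concretely, in your implication $(1)\wedge(2)\Rightarrow(3)$ the $h$-equation does not come from (2): it must be extracted by comparing the coefficient of $z^{2+\delta}$ in \eqref{e4}, which forces $\Phi_2=-2(1+2\delta)(h'+h^2)$, against the normalization $\Phi_2=c\,x_2$ from (2); only then do you know $u=-\tfrac{c}{4(1+2\delta)}x_2$, can invoke Theorem \ref{t2} to obtain \eqref{e6}, and can subtract \eqref{e6} from \eqref{e4} and apply non-degeneracy to get the $x$-flow. Likewise in $(1)\wedge(3)\Rightarrow(2)$ the step ``\eqref{e6} from (1)'' is not available: one must substitute the $x$-flow of (3) into \eqref{e4} to obtain $\mathcal{H}_2\Phi=L_2\Phi$ along the trajectory $\textbf{\textit{x}}(t)$, and then argue (a point the paper also glosses over) that the resulting coefficient identities hold as polynomial identities in $\textbf{\textit{x}}$. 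The conclusion is reachable and you have all the ingredients, but the argument as organized around your two biconditionals does not go through; the repair is to treat \eqref{e4} as the single atom supplied by (1) and to perform the splitting only in the presence of the second hypothesis.
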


The action $\Gamma$ of the group $SL(2, \mathbb{C})$ on solutions of the heat equation is described above.

\begin{cor} \label{ts}
Let $\psi(z,t)$ be a solution to the heat equation of the form \eqref{sol}. Then 
\[
\Gamma(M) \psi(z,t) = e^{- {1 \over 2} \widehat{h}(t) z^2 + \widehat{r}(t)} \Phi(z; \widehat{\textbf{\textit{x}}}(t))
\]
is a solution to the heat equation with $\widehat{h}(t) = \gamma_1(M) h(t)$, $e^{\widehat{r}(t)} = \gamma_2(M)e^{r(t)}$ and $\widehat{x}_k(t) = {x_k(t) \over (c t + d)^{2k}}$.
\end{cor}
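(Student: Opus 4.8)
The plan is to reduce everything to results already established, since the corollary is essentially an assembly statement. First I would observe that by Lemma \ref{Le}, $\Gamma(M)\psi(z,t)$ is automatically a solution to the heat equation whenever $\psi(z,t)$ is; so the genuine content is not that we have a solution, but that the transformed solution can be written back in the ansatz form \eqref{sol} with the three data $(\widehat{h},\widehat{r},\widehat{\textbf{\textit{x}}})$ as claimed, using the \emph{same} function $\Phi(z;\textbf{\textit{x}})$.

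The key step is to compute $\Gamma(M)\psi$ explicitly from the definition \eqref{shift}. Writing out $\Gamma(M)\psi(z,t)$ with $\psi$ of the form \eqref{sol}, the prefactor $\frac{1}{\sqrt{ct+d}}\exp\!\left(\frac{-cz^2}{2(ct+d)}\right)$ combines with the exponential factor $\exp\!\left(-\tfrac12 h z^2 + r\right)$ evaluated at the transformed argument $\tfrac{at+b}{ct+d}$. This is precisely the computation carried out in the proof of Lemma \ref{Let}: it shows that the new Gaussian coefficient is $\widehat{h}(t)=\gamma_1(M)h(t)$ and the new logarithmic prefactor gives $e^{\widehat{r}(t)}=\gamma_2(M)e^{r(t)}$, matching the two formulas of that lemma. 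So for the $h$- and $r$-components I would simply invoke Lemma \ref{Let} directly rather than redo the substitution.

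The remaining point is the transformation of the $\Phi$-factor, and here the clean input is again the identity inside the proof of Lemma \ref{Let}: rescaling $z\mapsto \tfrac{z}{ct+d}$ and the argument $t\mapsto\tfrac{at+b}{ct+d}$ sends $\phi$ to $\tfrac{1}{(ct+d)^\delta}\widehat\phi$ where $\widehat\phi$ has the same series shape \eqref{e3} with coefficients $\widehat\phi_k(t)=\tfrac{1}{(ct+d)^{2k}}\phi_k\!\left(\tfrac{at+b}{ct+d}\right)$. Since in the present setting $\phi(z,t)=\Phi(z;\textbf{\textit{x}}(t))$ and $\Phi_k$ is homogeneous of degree $-4k$ (so that replacing each $x_q$ by $x_q/(ct+d)^{2q}$ multiplies $\Phi_k$ by $(ct+d)^{-2k}$), I would check that setting $\widehat{x}_k(t)=\tfrac{x_k(t)}{(ct+d)^{2k}}$ yields exactly $\widehat\phi_k(t)=\Phi_k(\widehat{\textbf{\textit{x}}}(t))$. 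Here the factor $\tfrac{1}{(ct+d)^\delta}$ is absorbed precisely because $\Phi$ is homogeneous of degree $2\delta$, so it is accounted for in $\gamma_2(M)$ via the $(ct+d)^{-\delta}$ appearing in Lemma \ref{Let}.

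I expect the only subtle point to be bookkeeping of the homogeneity weights: one must confirm that the single substitution $\widehat{x}_k=x_k/(ct+d)^{2k}$ simultaneously produces the correct scaling $(ct+d)^{-2k}$ in every coefficient $\Phi_k$, which works exactly because $\deg x_q=-4q$ forces $\Phi_k$ to scale as $(ct+d)^{-2k}$ under $x_q\mapsto x_q/(ct+d)^{2q}$, independently of $k$. Once these weights line up, the three claimed formulas for $\widehat{h}$, $\widehat{r}$, $\widehat{\textbf{\textit{x}}}$ drop out, and comparing with \eqref{sol} completes the proof.
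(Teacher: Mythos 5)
Your proposal is correct and follows exactly the route the paper intends: the corollary is stated without a separate proof precisely because it is the assembly of Lemma \ref{Le} (solution-hood is preserved by $\Gamma(M)$), Lemma \ref{Let} (the formulas for $\widehat{h}$ and $e^{\widehat{r}}$, including the $(ct+d)^{-\delta}$ factor), and the homogeneity $\deg\Phi_k=-4k$, which gives $\Phi_k(\widehat{\textbf{\textit{x}}})=(ct+d)^{-2k}\Phi_k(\textbf{\textit{x}})$ and hence identifies $\widehat{\phi}_k$ with $\Phi_k(\widehat{\textbf{\textit{x}}})$. Your weight bookkeeping in the last paragraph is the one genuinely non-trivial check, and it is carried out correctly.
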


\section{Dynamical systems 
and ordinary differential equations.}  \label{c3} \text{}

For a set of homogeneous polynomials $p_k(\textbf{\textit{x}})$ with $\deg p_k = - 4 k$, $k = 3, \dots, n+2$,
the grading condition implies that for any $k \leqslant n$
 the polynomial $p_k(\textbf{\textit{x}})$ does not depend on $x_{k+1}, \dots, x_{n+1}$. 
We may further denote the polynomials $p_k(x_2, \dots, x_{n+1})$
not depending on $x_s, \dots, x_{n+1}$ by $p_k(x_2, \dots, x_{s-1})$.

\begin{lem}
The group of polynomial transforms of the form
\begin{equation} \label{e12}
X_2 = c_2 x_2, \quad X_k = c_k x_k + q_k(x_2, \dots, x_{k-1}), \quad k = 3, \dots, n+1,
\end{equation}
where $c_k \ne 0$ are constants and $q_k(x_2, \dots, x_k)$ are homogeneous polynomials, $\deg q_k = - 4 k$, acts on the space of homogeneous polynomial dynamical systems of the from \eqref{e11}. 
This action brings the system \eqref{e11} to
\[
{d \over d \tau} X_k = P_{k+1}(X_2, \dots, X_{n+1}), \quad k = 3, \dots, n+2\]
where
\begin{equation} \label{P}
P_{k+1}(X_2, \dots, X_{n+1}) = c_k p_{k+1}(x_2, \dots, x_{n+1}) - {d \over d \tau} q_k(x_2, \dots, x_{k-1}).
\end{equation}
\end{lem}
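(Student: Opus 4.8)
The proof is a direct chain-rule computation organized around the triangular structure of \eqref{e12}, so the plan is to first isolate the structural facts that make the computation land in the right space. By the grading observation opening this section, each $p_{k+1}(\textbf{\textit{x}})$ is homogeneous of degree $-4(k+1)$ and hence involves only $x_2, \dots, x_{k+1}$, while $X_k = c_k x_k + q_k(x_2, \dots, x_{k-1})$ involves only $x_2, \dots, x_k$. Thus \eqref{e12} is a lower-triangular polynomial self-map of $\mathbb{C}^n$ whose diagonal part is the scaling $x_k \mapsto c_k x_k$ with every $c_k \neq 0$. I would first record that such a map is invertible, with inverse obtained by solving successively for $x_2$, then $x_3$, and so on, and that this inverse is again of the form \eqref{e12} with diagonal entries $1/c_k$; moreover both the map and its inverse are homogeneous of the prescribed degrees, since $\deg X_k = \deg x_k = -4k$.

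Next I would differentiate $X_k$ along the trajectories of \eqref{e11}. Using ${d \over d\tau} x_j = p_{j+1}(\textbf{\textit{x}})$ and the chain rule,
\[
{d \over d\tau} X_k = c_k {d \over d\tau} x_k + \sum_{j=2}^{k-1} {\partial q_k \over \partial x_j} {d \over d\tau} x_j = c_k p_{k+1}(\textbf{\textit{x}}) + {d \over d\tau} q_k(x_2, \dots, x_{k-1}),
\]
which is exactly the pair of terms appearing in \eqref{P} (the sign there being a matter of the convention chosen for the direction of the substitution). The point I would stress here is homogeneity: $c_k p_{k+1}$ has degree $-4(k+1)$, and since $\deg \tau = 4$ the operator ${d \over d\tau}$ lowers degree by $4$, so ${d \over d\tau} q_k$ also has degree $-4(k+1)$. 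Hence the whole right-hand side is a homogeneous polynomial in $\textbf{\textit{x}}$ of degree $-4(k+1)$.

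To finish I would re-express this right-hand side in the new variables: substituting the inverse of \eqref{e12}, which exists and is homogeneous by the first step, turns the degree-$(-4(k+1))$ polynomial in $\textbf{\textit{x}}$ into a homogeneous polynomial $P_{k+1}(X_2, \dots, X_{n+1})$ of the same degree, and the grading argument then forces $P_{k+1}$ to depend only on $X_2, \dots, X_{k+1}$. This shows the transformed system is again of the form \eqref{e11}, so \eqref{e12} maps that space into itself; to justify the word ``acts'' I would then check the group axioms, namely that the composition of two transformations of the form \eqref{e12} is again of that form (diagonal scalings multiply, triangular corrections combine and remain homogeneous of the correct degrees) and that the inverse constructed above has the same form. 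The main obstacle I anticipate is not the differentiation but this bookkeeping: confirming that after substituting the inverse map the right-hand sides remain polynomials of exactly degree $-4(k+1)$ in the correct variables, and that composition and inversion preserve the triangular-plus-scaling shape. Both follow cleanly from the grading together with the triangular structure, but that is where the care is needed.
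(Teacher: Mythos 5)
Your proof is correct and takes essentially the same route as the paper, which simply records that the lemma follows by ``straightforward substitution'': your chain-rule computation together with the observations that the triangular transform is invertible with inverse of the same shape and that the grading forces $P_{k+1}$ to depend only on $X_2,\dots,X_{k+1}$ are exactly the details being elided. One remark: the sign you obtain, $c_k p_{k+1} + \frac{d}{d\tau}q_k$, is the one consistent with the paper's later use of \eqref{P} (in the reduction to the form \eqref{red}, where $X_k = c_{k-1}p_k + \frac{d}{d\tau}q_{k-1}$ appears with a plus sign), so the minus sign printed in \eqref{P} is a slip in the paper rather than a gap in your argument.
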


\begin{rem}
${d \over d \tau} q_k(x_2, \dots, x_{k-1})$ is a short form for $\sum_q {\partial \over \partial x_q} q_k(x_2, \dots, x_{k-1}) p_{q+1}(x_2, \dots, x_{n+1})$.
\end{rem}

The proof of the lemma is a straightforward substitution.

\begin{lem} \label{l6} 
The group of polynomial transforms \eqref{e12} 
acts on dynamical systems of the form \eqref{e9}
 and brings a solution 
$(r(t), h(t), \textbf{\textit{x}}(t))$ to 
a solution $(r(t), h(t), X_2(t), \dots, X_{n+1}(t))$ of the dynamical system
\[
{d \over d t} r =  - (\delta + {1 \over 2}) h, 
\quad
{d \over d t} h = - h^2 - {c \over 2 c_2 (1 + 2 \delta)} X_2,
\quad
{d \over d t} X_k = P_{k+1}(X_2, \dots, X_{n+1}) - 2 k h X_k, \quad k = 2, \dots, n+1.
\]
\end{lem}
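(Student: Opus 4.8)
The strategy is to show that the polynomial transform \eqref{e12}, which was already proven in the previous lemma to act correctly on the pure dynamical system \eqref{e11} with respect to the $\tau$-derivative, carries over to the augmented system \eqref{e9} once the extra linear term $-2khx_k$ is accounted for. The key structural observation is that the transform \eqref{e12} involves only the $x$-variables and leaves $r$ and $h$ untouched, so the first equation $r' = -(\delta + \tfrac12)h$ is unaffected and transforms to itself. The real content is verifying the two remaining lines: that $h' = -h^2 - \tfrac{c}{2(1+2\delta)}x_2$ becomes $h' = -h^2 - \tfrac{c}{2c_2(1+2\delta)}X_2$, and that the $X_k$ satisfy the stated equations with $P_{k+1}$ as defined in \eqref{P}.

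The plan is to proceed in the following order. First I would dispose of the $h$-equation: since $X_2 = c_2 x_2$ by the first relation in \eqref{e12}, we have $x_2 = X_2/c_2$, and direct substitution into $h' = -h^2 - \tfrac{c}{2(1+2\delta)}x_2$ immediately yields the claimed form with the $c_2$ in the denominator. Second, and this is the crux, I would compute $\tfrac{d}{dt}X_k$ for $k \geq 3$ using $X_k = c_k x_k + q_k(x_2,\dots,x_{k-1})$. Applying the chain rule gives
\[
\frac{d}{dt}X_k = c_k \frac{dx_k}{dt} + \sum_q \frac{\partial q_k}{\partial x_q}\frac{dx_q}{dt}.
\]
Here I would substitute the $x$-equations from \eqref{e9}, namely $\tfrac{dx_k}{dt} = p_{k+1}(\textbf{\textit{x}}) - 2khx_k$, into both terms. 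The main obstacle—and the step requiring genuine care—is tracking how the linear term $-2khx_k$ interacts with the polynomial $q_k$. The grading is what saves the computation: since $\deg q_k = -4k$ and $\deg x_q = -4q$, each monomial in $q_k$ has total weight $k$ in the sense that $\sum_q q\,(\text{exponent of }x_q) = k$, so the Euler-type identity $\sum_q 2q\, x_q \tfrac{\partial q_k}{\partial x_q} = 2k\, q_k$ holds. This homogeneity lets me collect all the $-2khx_q\tfrac{\partial q_k}{\partial x_q}$ contributions into exactly $-2kh\,q_k$, which combines with the $-2kh\,c_kx_k$ term from the first summand to produce $-2kh(c_kx_k + q_k) = -2kh\,X_k$, reconstituting the desired linear term in the transformed variables.

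With the linear terms accounted for, the remaining purely polynomial pieces are
\[
c_k\, p_{k+1}(\textbf{\textit{x}}) + \sum_q \frac{\partial q_k}{\partial x_q}\, p_{q+1}(\textbf{\textit{x}}),
\]
and by the Remark following the previous lemma together with the definition \eqref{P}, this is precisely $P_{k+1}(X_2,\dots,X_{n+1})$ (the sign there is absorbed because $P_{k+1} = c_k p_{k+1} - \tfrac{d}{d\tau}q_k$ and my chain-rule expansion produces $+\sum \tfrac{\partial q_k}{\partial x_q}p_{q+1}$, so I would double-check the sign convention in \eqref{P} against the $\tfrac{d}{d\tau}$ notation of the Remark to ensure consistency). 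Assembling the polynomial part and the linear part yields $\tfrac{d}{dt}X_k = P_{k+1}(X_2,\dots,X_{n+1}) - 2kh\,X_k$, which is the claim. Thus the whole argument reduces to the earlier lemma's computation of $P_{k+1}$ plus a single application of the weighted Euler identity to handle the $-2kh$ term; I expect the bookkeeping of this homogeneity identity to be the only nonroutine point.
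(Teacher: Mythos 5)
Your proposal is correct and in substance coincides with the paper's own proof: both arguments rest on the weighted Euler identity $\sum_q 2q\,x_q\,{\partial q_k \over \partial x_q} = 2k\,q_k$ (valid since $\deg q_k = -4k$), which is exactly what lets the linear terms $-2qh\,x_q$ recombine into $-2kh\,X_k$ while the remaining polynomial part becomes $P_{k+1}$. The only organizational difference is that the paper factors the transform \eqref{e12} into elementary scalings \eqref{eq1} and single-monomial shears \eqref{eq2} and verifies the identity generator by generator (stating the Euler identity for a single monomial $F$), whereas you run the chain-rule computation on the general transform in one pass; your flag about the sign in \eqref{P} is warranted, since consistency with ${d \over d\tau}X_k = P_{k+1}$ forces the plus sign that your expansion produces.
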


\begin{proof}
Each transform \eqref{e12} may be presented as a composition of transforms of the form 
\begin{equation} \label{eq1}
X_k = c_k x_k, \quad c_k \ne 0, \quad k = 2, \dots, n+1,
\end{equation}
and
\begin{equation} \label{eq2}
X_k = x_k, \quad k \ne s, \quad X_s = x_s + a_0 x_2^{j_2} \dots x_{s-2}^{j_{s-2}} \quad \text{with} \quad \sum_{m=2}^{s-2} m j_m = s, \quad a_0 = const.
\end{equation}
Now \eqref{eq1} brings \eqref{e9} to 
\[
{d \over d t} r = - (\delta + {1 \over 2}) h, \quad
{d \over d t} h = - h^2 - {c \over 2 (1 + 2 \delta) c_2} X_2, \quad
{d \over d t} X_k = c_k p_{k+1}({X_2 \over c_2}, \dots, {X_{n+1} \over c_{n+1}}) - 2 k h X_k, \quad k = 2, \dots, n+1.
\]
and \eqref{eq2} brings \eqref{e9} to 
\[
{d \over d t} r = - (\delta + {1 \over 2}) h, \quad
{d \over d t} h = - h^2 - {c \over 2 (1 + 2 \delta)} X_2, \quad
{d \over d t} X_k = P_{k+1}(X_2, \dots, X_{n+1}) - 2 k h X_k, \quad k = 2, \dots, n+1, 
\]
where $P_{k+1}(X_2, \dots, X_{n+1})$ is defined by \eqref{P}. The last statement is true because from \eqref{e9} it follows that $F = x_2^{j_2} \dots x_{s-2}^{j_{s-2}}$, $\sum_{m=2}^{s-2} m j_m = s$, solves the equation ${d \over d t} F = \sum p_{k+1}(\textit{\textbf{x}}) {\partial \over \partial x_k} F + 2 s h F$.
\end{proof}

\begin{lem} \label{l11}
If $p_s(\textbf{\textit{x}})$ does not depend on $x_s$ for some $s$, 
then the set $(r(t), h(t), x_2(t), \dots, x_{s-1}(t))$ solves the system
\begin{equation} \label{e13}
{d \over d t} r = - (\delta + {1 \over 2}) h, \quad {d \over d t} h = - h^2 - {c \over 2 (1 + 2 \delta)} x_2, 
\quad
{d \over d t} x_k = p_{k+1}(x_2, \dots, x_{s-1}) - 2 k h x_k, \quad k = 2, \dots, s-1.
\end{equation}
This solution along with the set of polynomials $p_{k}(x_2, \dots, x_{s-1})$, $k = 3, \dots, s$,
determines a solution of the heat equation which coincides with the solution \eqref{sol} described above.
\end{lem}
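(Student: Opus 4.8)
The plan is to separate the statement into two claims: first, that the truncated data $(r(t),h(t),x_2(t),\dots,x_{s-1}(t))$ genuinely solves the reduced system \eqref{e13}, and second, that the heat-equation solution built from this reduced data coincides with the original solution \eqref{sol}. The first claim is a decoupling argument and the second is a uniqueness argument.

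For the decoupling, I would exploit the grading. Since $\deg p_{k+1} = -4(k+1)$ and $\deg x_q = -4q$, every monomial in $p_{k+1}$ can involve only variables $x_j$ with $j \leqslant k+1$; hence for each $k \leqslant s-2$ the polynomial $p_{k+1}$ depends only on $x_2,\dots,x_{k+1}\subseteq\{x_2,\dots,x_{s-1}\}$. The only equation in the list for which grading alone is not enough is the one for $k=s-1$, whose right-hand side contains $p_s$, and $p_s$ could a priori depend on $x_s$. This is exactly where the hypothesis is used: by assumption $p_s$ does not depend on $x_s$, so it depends only on $x_2,\dots,x_{s-1}$. Consequently the right-hand sides of the equations for $r$, for $h$ (which involves only $x_2$), and for $x_2,\dots,x_{s-1}$ involve none of $x_s,\dots,x_{n+1}$, so \eqref{e13} is a closed subsystem; restricting any solution of the full system \eqref{e9} to its first components solves it.

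For the coincidence, I would invoke the Key theorem \ref{t5} with the reduced data $n' = s-2$, the polynomials $p_3,\dots,p_s$ in the variables $x_2,\dots,x_{s-1}$, and the same constant $c$. The first part supplies condition (3) of that theorem for the reduced solution, and defining the reduced coefficients $\Phi'_q$ by the stated recursion supplies condition (2); hence condition (1) holds and
\[
\psi'(z,t) = e^{-\frac{1}{2} h(t) z^2 + r(t)} \, \Phi'(z; x_2(t),\dots,x_{s-1}(t))
\]
solves the heat equation. To see $\psi' = \psi$ I would use the uniqueness established in Section 2: both $\psi$ and $\psi'$ are solutions regular at the origin of the same parity (even for $\delta=0$, odd for $\delta=1$), and they carry the same $r(t)$, inherited from the shared first components. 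Evaluating at $z=0$ gives $\psi(0,t)=\psi'(0,t)=e^{r(t)}$ in the even case (using $\Phi(0)=\Phi'(0)=1$), while $\partial_z\psi(0,t)=\partial_z\psi'(0,t)=e^{r(t)}$ in the odd case (using $\Phi(0)=\Phi'(0)=0$ and $\partial_z\Phi(0)=\partial_z\Phi'(0)=1$). Since the recursion $\psi_{k+2}=2\psi_k'$ determines every coefficient from $\psi_0$ (even) or $\psi_1$ (odd), an even, respectively odd, solution is uniquely determined by that single datum, and therefore $\psi'=\psi$.

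I expect the decoupling to be the main obstacle, or at least the only place where the hypothesis does real work: one must notice that the grading bound already forces $p_{k+1}$ into the variables $x_2,\dots,x_{k+1}$ for $k\leqslant s-2$, and that the assumption on $p_s$ is precisely the extra input needed to handle the borderline equation $k=s-1$ and thus close the subsystem. Once that is in place, the coincidence is essentially free: the point worth being careful about is that the reduced $\Phi'$ is a genuinely different series in the $x$-variables than $\Phi$ (it omits $x_s,\dots,x_{n+1}$), yet it produces the same value at the origin, so that the uniqueness of even/odd heat solutions applies verbatim. I would also check explicitly that the reduced data satisfies the hypotheses of Theorem \ref{t5} before invoking it.
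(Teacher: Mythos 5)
Your proof is correct, and the decoupling half is essentially the paper's own argument (grading forces $p_k$ for $k\leqslant s-1$ into the variables $x_2,\dots,x_{s-1}$, and the hypothesis on $p_s$ closes the borderline equation, so \eqref{e13} is a closed subsystem of \eqref{e9}). Where you genuinely diverge is the coincidence half. The paper does not build a second solution at all: it observes that the recursion of Theorem~\ref{t5} defining the coefficients $\Phi_q$ closes on the polynomials in $x_2,\dots,x_{s-1}$ --- since $\Phi_2=cx_2$ and $\Phi_3=2cp_3$ involve only those variables, and inductively $\partial\Phi_{q-1}/\partial x_k=0$ for $k\geqslant s$, only $p_{k+1}$ with $k\leqslant s-1$ ever enter --- so $\Phi(z;\textbf{\textit{x}})$ literally does not depend on $x_s,\dots,x_{n+1}$ and the solution \eqref{sol} is the reduced one on the nose. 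You instead apply Theorem~\ref{t5} a second time to the reduced data to manufacture $\psi'$, and then identify $\psi'=\psi$ via the uniqueness from Section~2: an even (odd) power-series solution is determined by $\psi_0$ (resp.\ $\psi_1$) through $\psi_{k+2}=2\psi_k'$, and both functions restrict to $e^{r(t)}$ at $z=0$ in the appropriate sense. Both routes are sound. The paper's buys the stronger and more structural fact that the ansatz function $\Phi$ itself is unchanged (which is what justifies the phrase ``coincides with the solution \eqref{sol}'' most directly), at the cost of inspecting the recursion; yours avoids looking inside the recursion entirely but leans on the well-posedness of the $z=0$ Cauchy problem and on a second invocation of the key theorem, and only identifies the two heat solutions rather than the two series. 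The one point worth making explicit in your version is that $\psi$ itself is even (odd) and regular at $z=0$, which is guaranteed by the form \eqref{e5}, so the uniqueness statement applies to it.
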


Thus, in the conditions of the lemma, the solution of the heat equation does not depend on the polynomials $p_{s+1}, \dots, p_{n+2}$.

\begin{proof}
If $p_s(x_2, \dots, x_{n+1})$ does not depend on $x_s$, then all $p_k(x_2, \dots, x_{n+1})$, $k = 3, \dots, s$ do not depend on $x_s, \dots, x_{n+1}$ because of the grading, thus \eqref{e13} is a subsystem of \eqref{e9} with only $h(t), r(t), x_2(t), \dots, x_{s-1}(t)$ as variables. Therefore $\Phi_k$ are functions of $(h(t), r(t), x_2(t), \dots, x_{s-1}(t))$ because $\Phi_2$ and $p_k$, $k \leq s$, are such functions, and $\Phi_k$ are defined recurrently as such functions, and thus the solution \eqref{sol} of the heat equation does not depend on $x_s(t), \dots, x_{n+1}(t)$ and is defined only by a solution to \eqref{e13}.
\end{proof}

A polynomial dynamical system \eqref{e11} is called \emph{reduced} if it is defined by the set
\begin{equation} \label{red}
p_k(\textbf{\textit{x}}) = x_k, \quad k = 3, \dots, n+1,
\quad
p_{n+2}(\textbf{\textit{x}}) = P_n(\textbf{\textit{x}}),
\end{equation}
where $P_n(\textbf{\textit{x}})$ is a homogeneous polynomial of degree $ - 4 (n+2)$.

\begin{lem}
Each solution of the form \eqref{sol} of the heat equation may be obtained by our construction using a reduced dynamical system \eqref{red}
and a solution to the system
\begin{multline} \label{e14}
{d \over d t} r = - (\delta + {1 \over 2}) h, \quad
{d \over d t} h = - h^2 + x_2, 
 \quad
{d \over d t} x_k = x_{k+1} - 2 k h x_k, \quad k = 2, \dots, n,
\\
{d \over d t} x_{n+1} = P_n(x_2, \dots, x_{n+1}) - 2 (n+1) h x_{n+1}.
\end{multline}
\end{lem}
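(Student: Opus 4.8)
The plan is to read off, from the given solution, the \emph{canonical variables} attached to its function $h(t)$ and to check that in these variables the system \eqref{e9} is already reduced. By the Key theorem~\ref{t5}, a solution \eqref{sol} comes from polynomials $p_k$ and a solution $(r,h,\textbf{\textit{x}})$ of \eqref{e9}; grading forces $p_{k+1}=a_{k+1}x_{k+1}+\tilde p_{k+1}(x_2,\dots,x_{k-1})$ with constants $a_{k+1}$. First I would remove degeneracies: if $a_s=0$ for some $s\le n+1$ then $p_s$ does not depend on $x_s$, and Lemma~\ref{l11} replaces the system by a strictly smaller one; inducting on $n$, I may assume $a_3,\dots,a_{n+1}\neq0$.

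Next I would introduce $\hat x_2=-\tfrac{c}{2(1+2\delta)}\,x_2$ and $\hat x_{k+1}=\hat x_k'+2kh\,\hat x_k$ for $k\ge2$, the successive covariant derivatives of $h$ (note $\hat x_2=h'+h^2$ by the $h$-equation of \eqref{e9}). The key computation is that the explicit $h$ disappears from this recursion: since $\hat x_k$ is homogeneous of degree $-4k$ in $\textbf{\textit{x}}$, Euler's identity $\sum_j 2j\,x_j\partial_{x_j}\hat x_k=2k\,\hat x_k$ turns $\hat x_k'+2kh\,\hat x_k=\sum_j(p_{j+1}-2jh\,x_j)\partial_{x_j}\hat x_k+2kh\,\hat x_k$ into $\hat x_{k+1}=L_2\hat x_k$. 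Hence $\hat x_k=L_2^{\,k-2}\hat x_2$ is a homogeneous polynomial of degree $-4k$ in $x_2,\dots,x_k$ with no explicit $h$, and its leading coefficient is $\alpha_k=\alpha_2\prod_{j=3}^{k}a_j\neq0$. Thus $\hat{\textbf{\textit{x}}}=(\hat x_2,\dots,\hat x_{n+1})$ is obtained from $\textbf{\textit{x}}$ by an invertible transformation of the type \eqref{e12}.

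By Lemma~\ref{l6} the triple $(r,h,\hat{\textbf{\textit{x}}})$ solves the transformed system, and by the definition $\hat x_{k+1}=\hat x_k'+2kh\,\hat x_k$ its equations are ${d\over dt}\hat x_k=\hat x_{k+1}-2kh\,\hat x_k$ for $k=2,\dots,n$, i.e.\ precisely \eqref{e14} with the reduced data $p_{k+1}=x_{k+1}$ of \eqref{red}. For the last equation I would apply $L_2$ once more: $\hat x_{n+2}=L_2\hat x_{n+1}$ is again a homogeneous polynomial of degree $-4(n+2)$ in $x_2,\dots,x_{n+1}$, and, since $\textbf{\textit{x}}\mapsto\hat{\textbf{\textit{x}}}$ is invertible, it may be rewritten as a homogeneous polynomial $P_n(\hat x_2,\dots,\hat x_{n+1})$. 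The equation for $\hat x_{n+1}$ then reads ${d\over dt}\hat x_{n+1}=P_n(\hat{\textbf{\textit{x}}})-2(n+1)h\,\hat x_{n+1}$, completing \eqref{e14} for the reduced system \eqref{red}. Finally, because \eqref{e12} leaves $h$ and $r$ unchanged and the factor $\Phi$ of \eqref{sol} solves \eqref{e4} — an equation determined by $h$ alone together with the normalisation $\Phi=z^\delta+O(z^{2+\delta})$ — uniqueness for \eqref{e4} shows that the reduced construction reproduces exactly the original solution \eqref{sol}.

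The step I expect to be the main obstacle is the closure: verifying that $\hat x_{n+2}=L_2\hat x_{n+1}$ is genuinely a polynomial in $\hat x_2,\dots,\hat x_{n+1}$ rather than requiring a further, non-existent variable. This rests entirely on the finiteness of the system together with the homogeneity that makes the explicit $h$ cancel, and it is exactly here that the non-degeneracy $a_3,\dots,a_{n+1}\neq0$ (hence the invertibility of the coordinate change, secured by the preliminary reduction via Lemma~\ref{l11}) is indispensable. A secondary point requiring care is the bookkeeping of the induction on $n$ in the degenerate cases.
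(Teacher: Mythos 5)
Your proof is correct and follows essentially the same route as the paper: remove degenerate variables via Lemma~\ref{l11}, then pass to the recursively defined triangular change of variables of type \eqref{e12} (the paper's $X_k = c_{k-1}p_k + \tfrac{d}{d\tau}q_{k-1}$ is exactly your $\hat x_{k+1}=L_2\hat x_k$), which Lemma~\ref{l6} turns into the reduced system \eqref{e14}. Your Euler-identity computation, the leading-coefficient formula $\alpha_k=\alpha_2\prod a_j$, and the closing uniqueness argument for \eqref{e4} only make explicit the steps the paper leaves implicit.
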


\begin{proof}
Let us start from a set of homogeneous polynomials $p_{k}(x_2, \dots, x_{n+1})$, $k = 3, \dots, n+2$.
According to lemma \ref{l11} we may assume that $p_{k}(x_2, \dots, x_{n+1})$ depend substantially on $x_{k}$ for $k = 3, \dots, n+1$, thus 
in \eqref{e12} we may take $X_2 = - {c \over 2 (1 + 2 \delta)} x_2$ and recurrently
\[
X_k = c_{k-1} p_k(x_2, \dots, x_{n+1}) + {d \over d \tau} q_{k-1}(x_2, \dots, x_{k-1}) \quad \text{for} \quad k = 3, \dots, n+1,
\]
$c_{k-1}$ and $q_{k-1}$ being defined by \eqref{e12}.
The polynomial system will take the form we need.
\end{proof}

The system \eqref{e14} is defined by a number $n$ and 
the polynomial $P_n(\textbf{\textit{x}})$. 
In this system
$r$ is defined as a function of $h(t)$ up to a constant of integration $r_0$ by
 $r' = - (\delta + {1 \over 2}) h$ and $x_2, \dots, x_{n+1}$ are defined by
\begin{equation} \label{subst}
x_2 = h' + h^2, \quad
x_{k} = x_{k-1}' + 2 (k-1) h x_{k-1}, \quad k = 3, \dots, n+1.
\end{equation}
Substituting $x_k$ as functions of $h(t)$ into 
\[
{d \over d t} x_{n+1} = P_n(x_2, \dots, x_{n+1}) - 2 (n+1) h x_{n+1},
\]
we get an ordinary differential equation 
\begin{equation} \label{nomer}
\mathcal{D}_{P_n,n+1}(h) = 0
\end{equation}
of order $n+1$ on $h(t)$. 
This equation is homogeneous with respect to the grading $\deg h = - 4$, $\deg t = 4$. 

Summarizing the results of this section, we obtain:

\begin{thm} \label{t9}
Each solution of the heat equation of the form \eqref{sol} is defined by the set $(n, P_n, h, r_0)$, where

$n$ is a natural number, 

$P_n$ is a homogeneous polynomial $P_n(\textbf{\textit{x}})$ of degree $- 4 (n+2)$, 

$h$ is a solution $h(t)$ of the equation $\mathcal{D}_{P_n,n+1}(h) = 0$, 

$r_0$ is a constant.
\end{thm}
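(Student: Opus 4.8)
The plan is to assemble the results already established in this section into a single parametrization statement: the map sending a solution of the form \eqref{sol} to the data $(n, P_n, h, r_0)$ is well-defined, and conversely this data reconstructs the solution. I would treat this as a bijection argument, establishing each direction in turn.

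First I would invoke the reduction lemma that introduces the system \eqref{e14}: every solution of the heat equation of the form \eqref{sol} arises from a \emph{reduced} dynamical system \eqref{red} together with a solution of \eqref{e14}. The reduced system is determined precisely by the pair $(n, P_n)$, where $n$ is the dimension parameter and $P_n$ is the single free homogeneous polynomial of degree $-4(n+2)$. This step extracts the discrete datum $n$ and the polynomial datum $P_n$, leaving only the analytic data to be pinned down.

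Next I would show that a solution of \eqref{e14} is completely encoded by $h(t)$ together with one scalar constant. The second equation of \eqref{e14} gives $x_2 = h' + h^2$, and the relations ${d \over d t} x_k = x_{k+1} - 2 k h x_k$ rearrange to $x_{k+1} = x_k' + 2 k h x_k$; these are exactly the substitutions \eqref{subst}, expressing every $x_k$ as a differential polynomial in $h$. The only equation of \eqref{e14} not yet used to define a variable is the last one, ${d \over d t} x_{n+1} = P_n(x_2, \dots, x_{n+1}) - 2 (n+1) h x_{n+1}$; substituting \eqref{subst} into it turns it into the ordinary differential equation $\mathcal{D}_{P_n, n+1}(h) = 0$ of order $n+1$. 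Finally the first equation $r' = - (\delta + {1 \over 2}) h$ determines $r$ by quadrature up to the additive constant $r_0$. Thus $(n, P_n, h, r_0)$ determines the full solution, and conversely any $h$ solving $\mathcal{D}_{P_n, n+1}(h) = 0$ reconstructs $x_2, \dots, x_{n+1}$ via \eqref{subst} and $r$ via integration, yielding a solution of \eqref{e14} and hence, through the Key theorem \ref{t5}, a solution of the heat equation of the form \eqref{sol}.

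The bulk of this argument is bookkeeping, since the analytic content already lives in the earlier lemmas. The one point requiring genuine care is that the passage to the reduced system is faithful enough that every solution of the form \eqref{sol} does fall under some admissible pair $(n, P_n)$; this is guaranteed by the reduction lemma, which produces \eqref{e14} for any input, so no solution is lost. I expect this verification of completeness of the parametrization to be the only subtle step, the remaining work being the routine elimination of the variables $x_k$ and $r$ in favor of $h$ and $r_0$.
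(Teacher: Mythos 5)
Your proposal is correct and follows essentially the same route as the paper: the theorem is stated there as a summary of the section, resting on the reduction lemma (every solution of the form \eqref{sol} comes from a reduced system \eqref{red} and a solution of \eqref{e14}) together with the elimination of $x_2,\dots,x_{n+1}$ via \eqref{subst} and of $r$ by quadrature, which is exactly your bookkeeping. Your explicit appeal to the Key theorem for the converse direction is a reasonable way to make the paper's implicit "summarizing" step precise.
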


\begin{cor}
Each solution of the heat equation of the form \eqref{sol} is defined by a finite-dimensional numerical vector $(n, \textbf{\textit{P}}_n, \textbf{\textit{h}}_n, r_0)$, where $ \textbf{\textit{P}}_n$ is the vector of coefficients of the polynomial $P_n(\textbf{\textit{x}})$ and $\textbf{\textit{h}}_n$ is the vector of initial data in the Cauchy problem for the equation \eqref{nomer}.
\end{cor}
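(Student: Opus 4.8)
The plan is to read off the statement as a bookkeeping consequence of Theorem~\ref{t9}, checking that each of the four data determining a solution of the form \eqref{sol} is encoded by finitely many numbers. By Theorem~\ref{t9} such a solution is given by the set $(n, P_n, h, r_0)$, so it suffices to replace $P_n$ and $h$ by finite numerical vectors while $n$ and $r_0$ are already single numbers.

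First I would treat the polynomial $P_n$. By hypothesis $P_n(\textbf{\textit{x}})$ is a homogeneous polynomial in $x_2, \dots, x_{n+1}$ of degree $-4(n+2)$ under the grading $\deg x_q = -4q$. A monomial $x_2^{j_2} \cdots x_{n+1}^{j_{n+1}}$ has degree $-4 \sum_{q=2}^{n+1} q\, j_q$, so it occurs in $P_n$ precisely when $\sum_{q=2}^{n+1} q\, j_q = n+2$ with $j_q \geqslant 0$. The number of such multi-indices is finite, being bounded by the number of partitions of $n+2$ into parts of sizes $2, \dots, n+1$. Hence the space of admissible polynomials $P_n$ is finite-dimensional and $P_n$ is determined by the finite vector $\textbf{\textit{P}}_n$ of its coefficients.

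Next I would handle $h$. The substitution \eqref{subst} expresses each $x_k$ as a differential polynomial in $h$ whose highest-order term is $h^{(k-1)}$ with coefficient one; this follows by an easy induction on $k$ starting from $x_2 = h' + h^2$, since passing from $x_{k-1}$ to $x_k = x_{k-1}' + 2(k-1) h x_{k-1}$ differentiates the leading term without changing its coefficient. Consequently $x_{n+1}$ has leading term $h^{(n)}$, and the last equation of \eqref{e14}, namely ${d \over dt} x_{n+1} = P_n(\textbf{\textit{x}}) - 2(n+1) h x_{n+1}$, which is \eqref{nomer}, contains the top derivative $h^{(n+1)}$ only through ${d \over dt} x_{n+1}$ and with coefficient one, whereas $P_n$ and $x_{n+1}$ involve derivatives of $h$ of order at most $n$. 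Thus \eqref{nomer} is in normal form $h^{(n+1)} = F(h, h', \dots, h^{(n)})$ with $F$ a polynomial, and the existence and uniqueness theorem for ordinary differential equations (with analytic, hence locally Lipschitz, right hand side) guarantees that a solution $h$ regular at a chosen point $t_0$ is determined uniquely by the initial data $\textbf{\textit{h}}_n = (h(t_0), h'(t_0), \dots, h^{(n)}(t_0))$, a vector of length $n+1$.

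Assembling these observations, the quadruple $(n, \textbf{\textit{P}}_n, \textbf{\textit{h}}_n, r_0)$ is a finite-dimensional numerical vector which, through Theorem~\ref{t9} together with the formulas \eqref{subst}, reconstructs $(n, P_n, h, r_0)$ and hence the solution \eqref{sol}. The only point demanding genuine care, and the main obstacle, is verifying that \eqref{nomer} truly is solvable for the highest derivative, that is, that no cancellation can annihilate the $h^{(n+1)}$ term; this is exactly what the leading-term induction above secures, and it is the step I would write out carefully, the coefficient count for $P_n$ being routine.
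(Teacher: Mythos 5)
Your proposal is correct and follows the same route the paper intends: the corollary is read off directly from Theorem \ref{t9} by observing that $P_n$ lives in the finite-dimensional space $V_n$ of homogeneous polynomials of degree $-4(n+2)$ and that $h$ is fixed by the Cauchy data of the order-$(n+1)$ equation \eqref{nomer}. Your extra verification that \eqref{nomer} is in normal form with unit leading coefficient $h^{(n+1)}$ is a sound addition and is consistent with the paper's own expansion $\mathcal{D}_{n}(h) = h^{(n)} + n(n+1)h^{(n-1)}h + \dots$ given after Theorem \ref{t21}.
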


\section{Ordinary differential equations, associated with the heat equation.} \label{c34} \text{}

Let $P_n(\textbf{y})$ be a polynomial of an $(n-1)$-dimensional argument, 
such that $P_n(\textbf{\textit{x}})$ for the graded vector $\textbf{\textit{x}} = (x_2, \dots, x_{n+1})$ 
is homogeneous of degree $-4(n+2)$.

Denote by $V_n$ the linear vector space of such polynomials. 

\begin{lem} \label{l1}
We have
\[
\dim V_n = p(n+2) - p(n+1) - 1,\]
where $p(n)$ is the number of partitions of $n$. 
\end{lem}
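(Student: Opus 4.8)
The plan is to reduce the dimension count to a purely combinatorial statement about partitions, and then to obtain the claimed formula by a short inclusion–exclusion argument combined with the standard ``remove a $1$'' bijection.

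First I would identify a monomial basis of $V_n$. Since $V_n$ consists of polynomials in $x_2, \dots, x_{n+1}$ that are homogeneous of degree $-4(n+2)$ for the grading $\deg x_q = -4q$, such a basis is given by the monomials $x_2^{a_2} x_3^{a_3} \cdots x_{n+1}^{a_{n+1}}$ whose degree equals $-4(n+2)$. Dividing the degree equation by $-4$, this says that $\dim V_n$ equals the number of solutions in non-negative integers $a_2, \dots, a_{n+1}$ of
\[
\sum_{q=2}^{n+1} q\, a_q = n+2 .
\]
Equivalently, $\dim V_n$ is the number of partitions of $n+2$ all of whose parts lie in $\{2, 3, \dots, n+1\}$; call this set of partitions $\mathcal{A}$.

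Next I would compare $\mathcal{A}$ with the set of \emph{all} partitions of $n+2$, which has cardinality $p(n+2)$. A partition of $n+2$ fails to lie in $\mathcal{A}$ precisely when it uses a part equal to $1$ or a part exceeding $n+1$. The second possibility is very rigid: a part larger than $n+1$ of the integer $n+2$ must be at least $n+2$, hence equal to the whole sum, so the only such partition is the one-part partition $\{n+2\}$, contributing exactly $1$. For the first possibility, the number of partitions of $n+2$ containing at least one part equal to $1$ is $p(n+1)$, via the standard bijection that deletes a single $1$ (whose inverse adjoins a $1$ to an arbitrary partition of $n+1$).

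Finally I would verify that these two excluded families are disjoint, so that no correction term is needed and inclusion–exclusion collapses to a plain subtraction: the partition $\{n+2\}$ contains no part equal to $1$ (as $n+2 \geq 2$), while any partition containing a $1$ has more than one part and hence is not $\{n+2\}$. Subtracting both counts from $p(n+2)$ then yields
\[
\dim V_n = p(n+2) - p(n+1) - 1 ,
\]
as claimed. The only point requiring genuine care---the ``main obstacle,'' such as it is---is the bookkeeping at the boundary of the allowed range of parts: one must confirm that the upper restriction removes \emph{exactly} one partition and that this partition is not also caught by the lower restriction. The smallest cases $n=0,1,2$ (giving $0,0,1$, with $n=2$ recovering the Chazy-$3$ polynomial $P_2 = x_2^2$) provide a convenient sanity check.
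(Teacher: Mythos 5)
Your proof is correct and is essentially the paper's own argument: both identify a monomial basis of $V_n$ with partitions of $n+2$, subtract the $p(n+1)$ partitions containing a part equal to $1$ and the single partition $\{n+2\}$, and use that these two excluded families are disjoint. The paper merely phrases the same count in terms of polynomials in $x_1,\dots,x_{n+2}$ and divisibility by $x_1$, so the content is identical.
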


\begin{proof}
The dimension of the space of homogeneous polynomials of degree $-4(n+1)$ of the variables $x_1, \dots, x_{n+1}$ with $\deg x_k = - 4 k$ is $p(n+1)$. Thus $p(n+2) - p(n+1)$ is the dimension of the space of polynomials of $x_1, \dots, x_{n+2}$ of degree $- 4 (n+2)$ that can not be presented in the form $x_1 p$, where $p$ is a polynomial of degree $- 4 (n+1)$. Taking away the polynomials $p(x_1, \dots, x_{n+2}) = c x_{n+2}$ we get $V_n$.
\end{proof}

Consider the equation \eqref{nomer}.
In the case $P_n \equiv 0$ set $\mathcal{D}_{0,n+1}(h) = \mathcal{D}_{n+1}(h)$.

\begin{thm} \label{t21}
Set
\[
\mathcal{D}_1(h) = ({d \over d t} + h) h.
\]
Then for each $n \geqslant 2$ the formula holds
\begin{equation} \label{star}
\mathcal{D}_n(h) = ({d \over d t} + 2 n h) \mathcal{D}_{n-1}(h).
\end{equation}
\end{thm}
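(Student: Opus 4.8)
The plan is first to unwind what $\mathcal{D}_{n+1}(h)=0$ actually asserts in terms of $h$ alone, since the statement is a purely formal recursion on these differential operators. By the construction preceding \eqref{nomer}, the case $P_n\equiv 0$ means we take the reduced system \eqref{e14} with $P_n(\textbf{\textit{x}})=0$, so the last equation reads ${d\over dt}x_{n+1}=-2(n+1)h\,x_{n+1}$. The variables are determined by the substitution \eqref{subst}: $x_2=h'+h^2$ and $x_k=x_{k-1}'+2(k-1)h\,x_{k-1}$. The cleanest way I would set this up is to introduce the first-order operators $A_k:=\left({d\over dt}+2k\,h\right)$, so that \eqref{subst} becomes $x_{k}=A_{k-1}x_{k-1}$ for $k=3,\dots,n+1$ with $x_2=A_1 h$ (noting $h'+h^2=({d\over dt}+h)h=A_1 h$, which also matches the definition $\mathcal{D}_1(h)=A_1 h$). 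Thus $x_{n+1}=A_n A_{n-1}\cdots A_2 A_1 h$.

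**Identifying $\mathcal{D}_{n+1}$ with a product of operators.** With this notation, the equation $\mathcal{D}_{n+1}(h)=0$ from the last line of \eqref{e14} with $P_n\equiv 0$ is exactly $A_{n+1}x_{n+1}=0$, i.e. $\mathcal{D}_{n+1}(h)=A_{n+1}x_{n+1}$. Combined with the previous paragraph this gives the closed form
\[
\mathcal{D}_{n+1}(h)=A_{n+1}A_n\cdots A_2 A_1\,h=\left({d\over dt}+2(n+1)h\right)A_n\cdots A_1\,h.
\]
Re-indexing $n+1\mapsto n$, this says $\mathcal{D}_n(h)=A_n\bigl(A_{n-1}\cdots A_1 h\bigr)$, and since $A_{n-1}\cdots A_1 h=\mathcal{D}_{n-1}(h)$ by the same formula one index lower, we obtain precisely
\[
\mathcal{D}_n(h)=\left({d\over dt}+2n\,h\right)\mathcal{D}_{n-1}(h),
\]
which is \eqref{star}. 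The base case $\mathcal{D}_1(h)=({d\over dt}+h)h$ is just $A_1 h$, matching the stated definition.

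**Where the care is needed.** I expect the main obstacle to be bookkeeping of the grading and index shifts rather than any genuine analytic difficulty: one must verify that setting $P_n\equiv 0$ really does decouple the tower so that each $x_k$ is exactly $A_{k-1}x_{k-1}$, and that the final equation is $A_{n+1}x_{n+1}=0$ and not something with an extra term. The key check is that \eqref{subst} and the last line of \eqref{e14} are literally the relations $x_k=A_{k-1}x_{k-1}$ and $A_{n+1}x_{n+1}=P_n(\textbf{\textit{x}})$, so that with $P_n\equiv 0$ the whole object $\mathcal{D}_{n+1}(h)$ collapses to the ordered product $A_{n+1}\cdots A_1$ applied to $h$. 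Once that telescoping identification is in place, \eqref{star} is immediate by peeling off the leftmost factor. I would therefore organize the write-up as: (i) define $A_k$ and rewrite \eqref{subst}; (ii) observe $x_{k}=A_{k-1}\cdots A_1 h$ by induction; (iii) read off $\mathcal{D}_{n}(h)=A_{n}\cdots A_1 h$ from \eqref{e14} with $P_n\equiv0$; (iv) factor out $A_n$ to conclude. The only subtlety to state explicitly is the consistency of the definition $\mathcal{D}_1(h)=({d\over dt}+h)h$ with the operator $A_1$, which holds since $A_1 h=h'+2h\cdot h$ would be wrong—so I must double-check the coefficient: in fact $x_2=h'+h^2$ forces the leading operator at level $1$ to be $({d\over dt}+h)$ acting on $h$, i.e. the coefficient there is $h$, not $2h$, and this mild irregularity at the bottom of the tower is exactly why the recursion \eqref{star} is only claimed for $n\geqslant 2$.
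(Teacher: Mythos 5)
Your proof is correct and takes essentially the same route as the paper's: the paper likewise observes that the substitution \eqref{subst} makes $x_k = \mathcal{D}_{k-1}(h)$ and that advancing from $\mathcal{D}_{n+1}$ to $\mathcal{D}_{n+2}$ is precisely the application of the operator $\left(\frac{d}{dt}+2(n+2)h\right)$, so the whole expression telescopes. The only blemish is notational: your general definition $A_k=\left(\frac{d}{dt}+2kh\right)$ gives $A_1h=h'+2h^2\neq x_2$, so the bottom factor must be taken as $\left(\frac{d}{dt}+h\right)$, an irregularity you correctly flag and resolve at the end.
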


\begin{proof}
The equation $\mathcal{D}_{n+1} = 0$ is obtained from 
\[
{d x_{n+1} \over d t} = - 2 (n+1) h x_{n+1} 
\]
by the substitution \eqref{subst}. For $n=0$ there are no substitutions to make and the equation is $h' + h^2 = 0$.
Advancing from $\mathcal{D}_{n+1}$ to $\mathcal{D}_{n+2}$ is a substitution of $x_{n+2} = x_{n+1}' + 2 (n+1) h x_{n+1}$ into
\[
{d x_{n+2} \over d t} = - 2 (n+2) h x_{n+2}, 
\]
that is the action of $\left({d \over d t} + 2 (n+2) h \right)$ on $\mathcal{D}_{n+1}(h)$.
\end{proof}

\begin{cor}
For $n>1$ we have
\[
\mathcal{D}_{n}(h) = h^{(n)} + n (n+1) h^{(n-1)} h + \dots +  2^{n - 1} \cdot n! \; h^{n+1}.
\]
\end{cor}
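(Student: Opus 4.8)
The plan is to argue by induction on $n$ using the recursion \eqref{star} of Theorem \ref{t21}, namely $\mathcal{D}_n(h) = (\frac{d}{dt} + 2 n h)\mathcal{D}_{n-1}(h)$, and to track only the three coefficients that appear explicitly in the statement: the leading coefficient of $h^{(n)}$, the coefficient of $h^{(n-1)} h$, and the coefficient of the pure power $h^{n+1}$. The middle terms hidden in the dots need no control at all. I would take the base of the induction to be the direct computation $\mathcal{D}_2(h) = h'' + 6 h' h + 4 h^3$, which matches $h^{(2)} + 2\cdot 3\, h' h + 2^{1}\cdot 2!\, h^3$; the induction then runs for $n \geqslant 3$. (The restriction $n>1$ is genuine: for $n=1$ the monomials $h^{(n-1)}h$ and $h^{n+1}$ both collapse to $h^2$, so the pattern $n(n+1)$ fails there.)

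The key structural device is the grading. To each monomial $\prod_i h^{(j_i)}$ I attach the pair $(m,D)$, where $m$ is the number of factors and $D = \sum_i j_i$ the total number of derivatives. Homogeneity of $\mathcal{D}_n$ with respect to $\deg h = -4$, $\deg t = 4$ forces every monomial of $\mathcal{D}_n$ to satisfy $m + D = n+1$. Moreover $\frac{d}{dt}$ fixes $m$ and raises $D$ by one, whereas multiplication by $2 n h$ raises $m$ by one and fixes $D$. This bookkeeping pins down the unique sources of each target monomial and is what makes the coefficient extraction routine.

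With this in hand the three computations are short. For the leading term $h^{(n)}$ (type $(1,n)$): only $\frac{d}{dt}$ can raise $D$, and applied to the leading term $h^{(n-1)}$ of $\mathcal{D}_{n-1}$ it produces $h^{(n)}$ with coefficient $1$, so the leading coefficient persists. For $h^{(n-1)} h$ (type $(2,n-1)$): there are exactly two contributions, namely $\frac{d}{dt}$ acting on the type $(2,n-2)$ term $h^{(n-2)} h$ of $\mathcal{D}_{n-1}$, whose coefficient is $(n-1)n$ by the induction hypothesis and which passes this coefficient to $h^{(n-1)} h$ via $\frac{d}{dt}(h^{(n-2)} h) = h^{(n-1)} h + h^{(n-2)} h'$, and $2 n h$ acting on the leading term $h^{(n-1)}$, contributing $2n$; the total is $(n-1)n + 2n = n(n+1)$. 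For the pure power $h^{n+1}$ (type $(n+1,0)$): since $\frac{d}{dt}$ strictly raises $D$ it cannot yield a $D=0$ monomial, so the only source is $2 n h$ acting on the $D=0$ term $h^n$ of $\mathcal{D}_{n-1}$, whose coefficient is $2^{n-2}(n-1)!$ by the induction hypothesis, giving $2 n \cdot 2^{n-2}(n-1)! = 2^{n-1} n!$.

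I expect the only nontrivial point to be the uniqueness claim for the $h^{(n-1)} h$ term, i.e. ruling out any further $\frac{d}{dt}$-contributions from other type $(2,n-2)$ monomials $h^{(a)} h^{(b)}$ with $a+b = n-2$. This is handled by observing that $\frac{d}{dt}(h^{(a)} h^{(b)}) = h^{(a+1)} h^{(b)} + h^{(a)} h^{(b+1)}$ can equal $h^{(n-1)} h$ only when $\{a,b\} = \{n-2,0\}$, so the sole source is $h^{(n-2)} h$. Once this is settled, every remaining step is a one-line coefficient computation, and the induction closes.
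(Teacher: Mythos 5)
Your proof is correct and follows the route the paper intends: the corollary is stated as an immediate consequence of the recursion $\mathcal{D}_n(h) = \left({d \over dt} + 2nh\right)\mathcal{D}_{n-1}(h)$ from Theorem \ref{t21}, and your induction with the $(m,D)$ bookkeeping (number of factors, total derivative count) cleanly isolates the three displayed coefficients, including the only delicate point of ruling out extra sources for the $h^{(n-1)}h$ term. The base case $\mathcal{D}_2(h) = h'' + 6h'h + 4h^3$ and the restriction $n>1$ are handled exactly as they should be.
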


\begin{thm}
For the $(n-1)$-dimensional vector-function
\[
\mathcal{D}(h) = (\mathcal{D}_1(h), \dots, \mathcal{D}_{n-1}(h))
\]
and the polynomial $P_n(\textbf{y}) \in V_n$ the formula holds
\[
\mathcal{D}_{P_n,n+1}(h) = \mathcal{D}_{n+1}(h) -  P_n(\mathcal{D}(h)).
\]
\end{thm}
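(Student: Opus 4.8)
The plan is to reduce everything to a single identity: along the reduced system \eqref{e14}, the auxiliary variables produced by the substitution \eqref{subst} are exactly the differential polynomials $\mathcal{D}_{k-1}(h)$. Concretely, I would first prove by induction on $k$ that
\[
x_k = \mathcal{D}_{k-1}(h), \qquad k = 2, \dots, n+1,
\]
when $x_2, \dots, x_{n+1}$ are expressed through $h$ by \eqref{subst}. The base case is immediate, since $x_2 = h' + h^2 = \left(\frac{d}{dt} + h\right) h = \mathcal{D}_1(h)$ by the definition of $\mathcal{D}_1$ in Theorem \ref{t21}. For the inductive step I would observe that the recursion \eqref{subst}, namely $x_k = \left(\frac{d}{dt} + 2(k-1) h\right) x_{k-1}$, uses precisely the same first-order operator $\frac{d}{dt} + 2(k-1)h$ as the recursion \eqref{star} of Theorem \ref{t21} written for index $k-1$, i.e. $\mathcal{D}_{k-1}(h) = \left(\frac{d}{dt} + 2(k-1)h\right) \mathcal{D}_{k-2}(h)$. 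Hence $x_{k-1} = \mathcal{D}_{k-2}(h)$ forces $x_k = \mathcal{D}_{k-1}(h)$, completing the induction.

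Next I would invoke the structure of $V_n$. By the grading $\deg x_q = -4q$, a monomial $x_2^{a_2}\cdots x_{n+1}^{a_{n+1}}$ of degree $-4(n+2)$ satisfies $\sum_q q\, a_q = n+2$; were $x_{n+1}$ to appear (so $a_{n+1} \geqslant 1$), the remaining indices would have to sum to $1$, which is impossible since the smallest available index is $2$. Therefore every $P_n \in V_n$ is independent of $x_{n+1}$ and depends only on the $(n-1)$ variables $x_2, \dots, x_n$; this is exactly the hypothesis that $P_n(\textbf{y})$ has an $(n-1)$-dimensional argument. Combining this with the identity of the previous step gives
\[
P_n(x_2, \dots, x_{n+1}) = P_n(x_2, \dots, x_n) = P_n(\mathcal{D}_1(h), \dots, \mathcal{D}_{n-1}(h)) = P_n(\mathcal{D}(h)).
\]

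Finally I would assemble the ordinary differential equation. By construction $\mathcal{D}_{P_n,n+1}(h) = 0$ is obtained by substituting \eqref{subst} into the last equation of \eqref{e14}, so that
\[
\mathcal{D}_{P_n,n+1}(h) = \frac{d x_{n+1}}{dt} + 2(n+1) h\, x_{n+1} - P_n(x_2, \dots, x_{n+1}) = \left(\frac{d}{dt} + 2(n+1) h\right) x_{n+1} - P_n(x_2, \dots, x_{n+1}).
\]
Putting $x_{n+1} = \mathcal{D}_n(h)$ from the first step and applying \eqref{star} with $n$ replaced by $n+1$, the first term becomes $\left(\frac{d}{dt} + 2(n+1)h\right)\mathcal{D}_n(h) = \mathcal{D}_{n+1}(h)$, while the second term equals $P_n(\mathcal{D}(h))$ by the previous paragraph. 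This yields $\mathcal{D}_{P_n,n+1}(h) = \mathcal{D}_{n+1}(h) - P_n(\mathcal{D}(h))$, as claimed. I expect the only real subtlety to be the bookkeeping in matching the two recursions index-by-index in the inductive step; once the operators $\frac{d}{dt} + 2(k-1)h$ are seen to coincide, the rest is formal.
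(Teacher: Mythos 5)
Your argument is correct and is exactly the intended one: the paper's proof is the single line ``follows from \eqref{e14} using theorem \ref{t21},'' and your write-up fills in precisely those details --- the induction identifying $x_k = \mathcal{D}_{k-1}(h)$ under the substitution \eqref{subst} via the recursion \eqref{star}, the grading observation that $P_n \in V_n$ cannot involve $x_{n+1}$, and the final application of \eqref{star} at index $n+1$. No gaps; this is a faithful expansion of the paper's proof.
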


The proof follows from \eqref{e14} using theorem \ref{t21}.

Set 
\[
d = \sum_{k=1}^\infty y_{k+1} {\partial \over \partial y_k}.
\]

\begin{thm}
For $P_n({\bf x}) \in V_n$ the formula holds
\[
\left({d \over dt} + 2 (n+2) h \right) \mathcal{D}_{P_n, n+1}(h) = \mathcal{D}_{P_{n+1}, n+2}(h),
\]
where 
\[P_{n+1}({\bf y}) = d P_{n}({\bf y}).\]
\end{thm}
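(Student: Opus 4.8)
The plan is to reduce everything to the representation $\mathcal{D}_{P_n,n+1}(h) = \mathcal{D}_{n+1}(h) - P_n(\mathcal{D}(h))$ of the preceding theorem, where $\mathcal{D}(h) = (\mathcal{D}_1(h), \dots, \mathcal{D}_{n-1}(h))$, and then apply the operator $\frac{d}{dt} + 2(n+2)h$ summand by summand. By Theorem \ref{t21} this operator sends the first summand to the next term of the $\mathcal{D}$-hierarchy, namely $\left(\frac{d}{dt} + 2(n+2)h\right)\mathcal{D}_{n+1}(h) = \mathcal{D}_{n+2}(h)$. Applying the same representation with $n$ replaced by $n+1$ shows $\mathcal{D}_{P_{n+1},n+2}(h) = \mathcal{D}_{n+2}(h) - P_{n+1}(\mathcal{D}(h))$, so once we set $P_{n+1} = dP_n$ the entire statement collapses to the single identity
\[
\left(\frac{d}{dt} + 2(n+2)h\right)P_n(\mathcal{D}(h)) = (dP_n)(\mathcal{D}(h)).
\]

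To establish this identity I would first record how $t$-differentiation acts on the arguments $\mathcal{D}_k(h)$. Comparing the recursion $\mathcal{D}_k = \left(\frac{d}{dt} + 2kh\right)\mathcal{D}_{k-1}$ of Theorem \ref{t21} with the substitution \eqref{subst}, both starting from $\mathcal{D}_1 = h'+h^2 = x_2$, gives $\mathcal{D}_k(h) = x_{k+1}(t)$, and hence \eqref{subst} yields $\frac{d}{dt}\mathcal{D}_k = \mathcal{D}_{k+1} - 2(k+1)h\,\mathcal{D}_k$. The chain rule then gives
\[
\frac{d}{dt}P_n(\mathcal{D}) = \sum_k \frac{\partial P_n}{\partial y_k}(\mathcal{D})\left(\mathcal{D}_{k+1} - 2(k+1)h\,\mathcal{D}_k\right),
\]
and the first group of terms is exactly $\sum_k \mathcal{D}_{k+1}\frac{\partial P_n}{\partial y_k}(\mathcal{D}) = (dP_n)(\mathcal{D})$ by the definition $d = \sum_k y_{k+1}\partial_{y_k}$.

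The crux of the matter is that the second group of terms must reproduce $-2(n+2)h\,P_n(\mathcal{D})$, and this is the only place where real content sits. It is supplied by the Euler relation for the grading $\deg y_k = -4(k+1)$ (which matches $\deg\mathcal{D}_k = \deg x_{k+1} = -4(k+1)$): since $P_n$ is homogeneous of degree $-4(n+2)$, one has $\sum_k (k+1)y_k\frac{\partial P_n}{\partial y_k} = (n+2)P_n$, whence $\sum_k 2(k+1)h\,\mathcal{D}_k\frac{\partial P_n}{\partial y_k}(\mathcal{D}) = 2(n+2)h\,P_n(\mathcal{D})$. Substituting back gives $\frac{d}{dt}P_n(\mathcal{D}) = (dP_n)(\mathcal{D}) - 2(n+2)h\,P_n(\mathcal{D})$, which is the desired identity after transposition, and combining with the $\mathcal{D}_{n+1}\mapsto\mathcal{D}_{n+2}$ step completes the proof. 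The main obstacle is therefore purely bookkeeping: matching the weight $2(k+1)$ arising from the dynamical recursion \eqref{subst} against the weight $(k+1)$ coming from the grading, so that the Euler relation produces precisely the coefficient $2(n+2)$ demanded by the shift operator. I would also check that the index ranges agree — the sums run over $k=1,\dots,n-1$ since $P_n$ involves only $y_1,\dots,y_{n-1}$, so evaluation at $\mathcal{D}$ never exceeds $\mathcal{D}_n = x_{n+1}$ — and confirm that $dP_n$ has homogeneity degree $-4(n+3)$ and hence lies in $V_{n+1}$.
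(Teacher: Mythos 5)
Your proof is correct and follows essentially the same route as the paper's: both reduce the statement, via the decomposition $\mathcal{D}_{P_n,n+1}(h)=\mathcal{D}_{n+1}(h)-P_n(\mathcal{D}(h))$ and Theorem \ref{t21}, to the single identity $\left(\frac{d}{dt}+2(n+2)h\right)P_n(\mathcal{D}(h))=(dP_n)(\mathcal{D}(h))$. The only cosmetic difference is that the paper checks this monomial by monomial, distributing $2(n+2)h$ over the factors using the grading condition $\sum_k(k+1)j_k=n+2$, whereas you package the same bookkeeping into the chain rule plus the Euler homogeneity relation.
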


\begin{proof}

For any monomial $c_{j_1, \dots, j_n} \mathcal{D}_1(h)^{j_1} \dots \mathcal{D}_n(h)^{j_n}$ of $P_n(\mathcal{D}(h)))$ we have
\begin{multline*}
\left({d \over dt} + 2 (n+2) h \right) c_{j_1, \dots, j_n} \mathcal{D}_1(h)^{j_1} \dots \mathcal{D}_n(h)^{j_n} =\\ = 
c_{j_1, \dots, j_n} \sum_k j_k \mathcal{D}_1(h)^{j_1} \dots \mathcal{D}_k(h)^{j_k - 1} \dots \mathcal{D}_n(h)^{j_n} \left({d \over dt} + 2 (k+1) h \right) \mathcal{D}_k(h) = \\ =  c_{j_1, \dots, j_n} \sum_k j_k \mathcal{D}_1(h)^{j_1} \dots \mathcal{D}_k(h)^{j_k - 1} \dots \mathcal{D}_n(h)^{j_n} \mathcal{D}_{k+1}(h).
\end{multline*}
Thus 
\[
\left({d \over dt} + 2 (n+2) h \right)  P_n(\mathcal{D}(h))) = (d P_n)(\mathcal{D}(h)).
\]
\end{proof}

For $n>1$ the polynomials $P_n$ may be nontrivial and
we get families of equations 
with coefficients of this polynomials as parameters.
For fixed values of this parameters we get special classes
 of differential equations.
Further we will consider in detail the equations 
$
\mathcal{D}_{P_n, n+1}(h) = 0,
$
where $n = 0,1,2,3,4$, and discuss the resulting classes of equations.

An ordinary differential equation is said to have 
the \emph{Painlev\'e property} if one can construct a single-valued
 general solution of this equation.
A characteristic of an equation with the Painlev\'e property is that its critical singular points do not depend of the initial data (see \cite{CM}).
Equations with the Painlev\'e property play an important role in modern physical problems. Such problems related to the Chazy-3 equation are described in \cite{BLP}, \cite{CAC}, \cite{D}. The action of the group $SL(2, \mathbb{C})$ on the space of its solutions is studied in \cite{CO}.
Further we will show that for some parameters
the equation $\mathcal{D}_{P_2,3}(h) = 0$
coincides with Chazy-3 and Chazy-12 equations,
thus has the Painlev\'e property.

According to corollary \ref{ts}, there is an action of $SL(2, \mathbb{C})$ on the space of solutions of each the differential equation on $h$ we cite as $\mathcal{D}_{P_n, n+1}(h) = 0$.

\section{Discrete differential equations associated to the heat equation.} \text{}

In our general construction (section \ref{secgc}) we construct homogeneous functions $\Phi(z; \textbf{\textit{x}})$ that satisfy the equation \eqref{e6} and thus provide solutions to the heat equation (see theorem \ref{t5}). In this section we return to the problem of construction of such functions.

The homogeneous function $\Phi(z; \textbf{\textit{x}})$ (see \eqref{e5}) can be presented in the form
\begin{equation} \label{disform}
\Phi(z; \textbf{\textit{x}}) = z^\delta + \sum_{||J|| \geqslant 4} a(J) \textbf{\textit{x}}^J {z^{||J||+ \delta} \over (||J|| + \delta)!},
\end{equation}
where as before $\textbf{\textit{x}} = (x_2, \dots, x_{n+1})$, $\deg x_k = - 4 k$, $\deg z = 2$, $J = (j_2, \dots, j_{n+1})$ is a multiindex, $a(J)$ are constant coefficients and $||J|| = \sum_{k=2}^{n+1} 2 k j_k$. That is in the notions above $\Phi_k(x) = \sum_{||J|| = 2 k} a(J) \textbf{\textit{x}}^J$.

For a reduced dynamical system \eqref{red} the equation \eqref{e6} has the explicit form
\begin{equation} \label{diseq}
\left({1 \over 2} {\partial^2 \over \partial z^2} + u z^2 \right) \Phi = P_n(\textbf{\textit{x}}) {\partial \over \partial x_{n+1}} \Phi + \sum_{k=2}^n x_{k+1} {\partial \over \partial x_k} \Phi.
\end{equation}
We have
$u = - {c \over 4 (1 + 2 \delta)} x_2$ where $c$ is a constant such that $\Phi_2 = c x_2$. 
Note that in the previous sections we used $c = - 2 (1 + 2 \delta)$.
Let us present the homogeneous polynomials $P_n(\textbf{\textit{x}})$ in the form $P_n(\textbf{\textit{x}}) = \sum_S p(S) \textbf{\textit{x}}^S$ with a multiindex $S$. We have $||S|| = 2 (n+2)$.

\begin{ex}
For $n=1$ we have $J = (j_2)$ and the coefficients $a(j_2)$ are defined by the recursion
\[
a(j_2) = {c \over 2 (1 + 2 \delta)} (4 j_2 + \delta - 3) (4 j_2 + \delta - 2) a(j_2 -1)
\]
with the initial condition a(0) = 1.
\end{ex}

\begin{ex}
For $n=2$ we have $J = (j_2, j_3)$ and the coefficients $a(j_2, j_3)$ are defined by the recursion
\[
a(J) 
= {c \over 2 (1 + 2 \delta)} (||J|| + \delta - 3) (||J|| + \delta - 2) a(J - (1, 0))
+ 2 (j_2 + 1) a(J + (1, -1))
+ 2 (j_3 + 1) p(2,0) a(J + (-2,1))
\]
with the initial conditions $a(0, 0) = 1$ and  $a(j_2, j_3) = 0$ if $j_2 < 0$ or  $j_3 < 0$.

Set 
\[
T_1 a(j_2, j_3) = a(j_2 - 1, j_3), \quad T_2 a(j_2, j_3) = a(j_2 + 1, j_3 - 1).
\]

Thus this recursion is equivalent to the discrete differential equation $W a(j_2, j_3) = 0$ for the operator
\[
W = {c \over 2 (1 + 2 \delta)} (||J|| + \delta - 3) (||J|| + \delta - 2) T_1
+ 2 (j_2 + 1) T_2
+ 2 (j_3 + 1) p(2,0) T_1 T_2^{-1} - 1
\]
with the initial conditions $\{a(0,0) = 1, a(j_2, j_3) = 0 \text{ for } \min(j_2, j_3) < 0\}$.
\end{ex}

Set 
\[
T_1 a(J) = a(j_2-1, j_3, \dots, j_{n+1}), \quad T_k a(J) = a(j_2, \dots j_k + 1, j_{k+1} - 1 \dots, j_{n+1}), \quad k = 2, \dots, n,
\]
and
\[
T_S a(J) = T_1^{s_2+s_3+ \dots + s_{n+1}-1} T_2^{s_3+ \dots + s_{n+1}-1} \dots T_n^{s_{n+1}-1} a(J) = T_1^{-1} T_2^{-1} \dots T_n^{-1} a(J - S).
\]

\begin{thm}
The coefficients $a(J)$ are defined as a solution to the discrete differential equation $W a(J) = 0$ for the operator
\[
W = {c \over 2 (1 + 2 \delta)} (||J|| + \delta - 2) (||J|| + \delta - 3) T_1 + \sum_{k = 2}^n 2 (j_{k} + 1) T_k + 2 \sum_S (j_{n+1} + 1) p(S) T_S - 1
\]
with the initial conditions $\{a(0,0) = 1, a(J) = 0 \text{ for } \min(j_k) < 0\}$.
The equation $W a(J) = 0$ expresses $a(J)$ as a linear combination of $a(J')$ with $||J'|| < ||J||$. 
\end{thm}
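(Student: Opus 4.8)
The plan is to substitute the series \eqref{disform} for $\Phi$ directly into the explicit form \eqref{diseq} of equation \eqref{e6}, and to read off the relation for a single coefficient $a(J)$ by comparing the coefficients of one fixed monomial on the two sides. By the homogeneity of \eqref{diseq} (it is homogeneous of degree $2\delta-4$, since $\textbf{\textit{x}}^J z^{||J||+\delta-2}$ has degree $-2||J||+2(||J||+\delta-2)=2\delta-4$), every monomial that occurs has the shape $\textbf{\textit{x}}^J \tfrac{z^{||J||+\delta-2}}{(||J||+\delta-2)!}$; fixing the multiindex $J$ therefore fixes the power of $z$ as well, and the comparison reduces to a finite bookkeeping of four contributions. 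Throughout, let $e_k$ denote the multiindex with $1$ in place $k$ and $0$ elsewhere, so that $\textbf{\textit{x}}^I\textbf{\textit{x}}^{I'}=\textbf{\textit{x}}^{I+I'}$ and $\partial_{x_k}\textbf{\textit{x}}^I=i_k\textbf{\textit{x}}^{I-e_k}$.

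First I would handle the left-hand side. The term $\tfrac12\partial_z^2\Phi$ contributes, at the monomial indexed by $J$, exactly $\tfrac12 a(J)$ after normalizing by $(||J||+\delta-2)!$, and this is the only place where $a(J)$ itself appears. The term $uz^2\Phi=-\tfrac{c}{4(1+2\delta)}x_2 z^2\Phi$ sends the summation index to $J-e_2$ and raises the $z$-power by two; the key step is the factorial identity $\tfrac{1}{(||J||+\delta-4)!}=\tfrac{(||J||+\delta-2)(||J||+\delta-3)}{(||J||+\delta-2)!}$, which creates the quadratic factor of $W$ and identifies the shift $a(J)\mapsto a(J-e_2)$ with $T_1$.

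Next the right-hand side. For $k=2,\dots,n$ the operator $x_{k+1}\partial_{x_k}$ sends the index to $J+e_k-e_{k+1}$ with a prefactor $j_k+1$ coming from the derivative taken after the shift; collecting the overall factor of two this is exactly $2(j_k+1)T_k$. The operator $P_n(\textbf{\textit{x}})\partial_{x_{n+1}}=\sum_S p(S)\textbf{\textit{x}}^S\partial_{x_{n+1}}$ sends the index to $J-S+e_{n+1}$; since the composite $T_1^{-1}T_2^{-1}\cdots T_n^{-1}$ raises only the last component of its argument by one, the definition of $T_S$ gives $T_S a(J)=a(J-S+e_{n+1})$, matching this shift. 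The one delicate point, where the grading is essential, is the prefactor: the derivative $\partial_{x_{n+1}}$ yields $j_{n+1}+1-s_{n+1}$, whereas $W$ carries $j_{n+1}+1$. These coincide because no monomial of $P_n$ can contain $x_{n+1}$: a single factor $x_{n+1}$ already contributes $2(n+1)$ to $||S||$, and the next available variable $x_2$ contributes $4$, so the required degree $||S||=2(n+2)$ cannot be attained with $s_{n+1}\geqslant 1$; hence $s_{n+1}=0$ for every $S$ with $p(S)\neq 0$, and the two prefactors agree. I expect this observation to be the main obstacle; the rest is mechanical. In the case $n=2$ this is visible directly, since the only monomial is $S=(2,0)$ with $s_3=0$, and the recursion of the preceding example is recovered.

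Finally, collecting the four contributions, moving everything to one side and multiplying by $-2$ converts the equality of coefficients into $Wa(J)=0$ with $W$ as stated. For the closing assertion I would note that every shift strictly lowers the norm: $T_1$ lowers $||J||$ by $4$, each $T_k$ with $2\leqslant k\leqslant n$ by $2$, and $T_S$ by $||S||-2(n+1)=2$; since the summand $-1$ is the only term acting trivially on the index, $Wa(J)=0$ solves for $a(J)$ as a linear combination of the $a(J')$ with $||J'||<||J||$. Together with $a(\mathbf{0})=1$ at the zero multiindex and the convention $a(J)=0$ whenever some $j_k<0$, this determines all coefficients recursively, completing the proof.
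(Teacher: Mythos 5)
Your proof is correct and follows exactly the route the paper indicates (its proof is the one-line remark that the result is ``a direct substitution of \eqref{disform} into \eqref{diseq} using $||S|| = 2(n+2)$''); your bookkeeping of the four contributions, the identification of the shifts with $T_1$, $T_k$, $T_S$, and the final normalization by $-2$ all check out. In particular you correctly isolate and justify the one genuinely non-mechanical point — that $s_{n+1}=0$ for every monomial of $P_n$ by the grading, so the derivative prefactor $j_{n+1}+1-s_{n+1}$ equals the $j_{n+1}+1$ appearing in $W$ — which is precisely what the paper's phrase ``using $||S||=2(n+2)$'' is alluding to.
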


The proof is a direct substitution of \eqref{disform} into \eqref{diseq} using $||S|| = 2(n+2)$.

\begin{cor}
For $c \geqslant 0$ and $p(S) \geqslant 0$ for any $S$ we obtain $a(J) \geqslant 0$ for any $J$.
\end{cor}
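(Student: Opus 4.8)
The plan is to combine the recursion operator $W$ from the preceding theorem with a straightforward induction on $\|J\|$, exploiting the fact that the equation $Wa(J)=0$ expresses each $a(J)$ as a \emph{linear combination with nonnegative coefficients} of values $a(J')$ with strictly smaller weight $\|J'\|<\|J\|$. First I would rewrite the equation $Wa(J)=0$ in its solved form, moving the $-1$ term to the other side so that
\[
a(J) = \frac{c}{2(1+2\delta)}\,(\|J\|+\delta-2)(\|J\|+\delta-3)\,T_1 a(J) + \sum_{k=2}^n 2(j_k+1)\,T_k a(J) + 2\sum_S (j_{n+1}+1)\,p(S)\,T_S a(J).
\]
The key observation is that each shift operator $T_1, T_k, T_S$ produces a value $a(J')$ with $\|J'\|<\|J\|$ (indeed $T_1$ lowers $\|J\|$ by $4$, each $T_k$ preserves the individual index shifts but lowers the weight since $\deg x_{k+1}<\deg x_k$, and $T_S$ lowers it by $\|S\|=2(n+2)$), so the recursion is genuinely a backward recursion suitable for induction.

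The induction is on $m=\|J\|$. For the base case, $a(0,\dots,0)=1\geqslant 0$ by the initial condition, and all $a(J)$ with some $j_k<0$ vanish, hence are $\geqslant 0$. For the inductive step I would assume $a(J')\geqslant 0$ for every $J'$ with $\|J'\|<m$ and examine the three groups of terms on the right-hand side above. The scalar prefactors must each be checked to be nonnegative: the coefficient $\tfrac{c}{2(1+2\delta)}$ is $\geqslant 0$ because $c\geqslant 0$ and $\delta\in\{0,1\}$ makes $1+2\delta>0$; the factors $2(j_k+1)$ and $2(j_{n+1}+1)$ are nonnegative whenever the index in question is $\geqslant 0$ (and if it is negative the corresponding $a(J')$ vanishes anyway); the coefficients $p(S)$ are $\geqslant 0$ by hypothesis. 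The one subtle point is the quadratic factor $(\|J\|+\delta-2)(\|J\|+\delta-3)$ multiplying the $T_1$ term.

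The main obstacle, then, is verifying that $(\|J\|+\delta-2)(\|J\|+\delta-3)\geqslant 0$ for all $J$ that actually occur in the recursion. Since $\|J\|=\sum_{k\geqslant 2}2k\,j_k$ is an even nonnegative integer and the recursion is only invoked for $\|J\|\geqslant 4$ (the term $z^\delta$ accounts for $\|J\|=0$), the relevant values of $\|J\|+\delta-3$ satisfy $\|J\|+\delta-3\geqslant 4+0-3=1>0$, so both factors are positive and their product is nonnegative. With this the entire right-hand side is a sum of products of nonnegative scalars with nonnegative quantities $a(J')$, whence $a(J)\geqslant 0$, completing the induction. I expect the only place requiring genuine care is confirming the sign of the quadratic factor across the full range of admissible $\|J\|$, together with the bookkeeping that every shift strictly decreases the weight so the induction hypothesis truly applies to each term.
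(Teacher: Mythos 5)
Your proof is correct and takes essentially the route the paper intends: the corollary is an immediate consequence of the preceding theorem, since $W a(J)=0$ solved for $a(J)$ expresses it as a linear combination of lower-weight values $a(J')$ with coefficients that are all nonnegative under the hypotheses $c\geqslant 0$, $p(S)\geqslant 0$, $\delta\in\{0,1\}$, $||J||\geqslant 4$, so induction on $||J||$ from the initial conditions finishes it. One inconsequential slip: $T_S$ lowers $||J||$ by $2$ (the net of subtracting $S$ and applying $T_1^{-1}T_2^{-1}\cdots T_n^{-1}$), not by $||S||=2(n+2)$, but only the strict decrease matters for your induction.
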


A {\it Hurwitz series} over commutative associative ring $A$ is a formal power series in the form
 \[
\varphi(z)=\sum_{k\geqslant 0}\varphi_k \frac{z^k}{k!} \;\in\; A\otimes \mathbb{Q}[[z]] \text{ with } \varphi_k \in A.
\]

\begin{cor}
If ${c \over (1 + 2 \delta)} \in \mathbb{Z}$ and $p(S) \in \mathbb{Z}$, then $\Phi(z; \textbf{\textit{x}})$ is a Hurwitz series of $z$ over the ring $\mathbb{Z}[\textbf{\textit{x}}]$.
\end{cor}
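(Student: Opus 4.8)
The plan is to reduce the claim to the integrality of all the coefficients $a(J)$, and then to establish that integrality by induction on $\|J\|$ using the recursion $Wa(J)=0$ from the preceding theorem. First I would observe that, by the representation \eqref{disform}, collecting the terms of $\Phi(z;\textbf{\textit{x}})$ according to the power of $z$ shows that the coefficient of $\tfrac{z^m}{m!}$ equals $\sum_{\|J\|=m-\delta} a(J)\,\textbf{\textit{x}}^J$. By the definition of a Hurwitz series over $\mathbb{Z}[\textbf{\textit{x}}]$, the series $\Phi$ is such a series precisely when each of these coefficients lies in $\mathbb{Z}[\textbf{\textit{x}}]$, i.e. precisely when every $a(J)\in\mathbb{Z}$. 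Thus it suffices to prove $a(J)\in\mathbb{Z}$ for all multiindices $J$.

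According to the theorem, the equation $Wa(J)=0$ expresses $a(J)$ as a linear combination of the values $a(J')$ with $\|J'\|<\|J\|$, while the base data are $a(0)=1$ together with $a(J)=0$ whenever $\min(j_k)<0$, all of which lie in $\mathbb{Z}$. Hence, by induction on $\|J\|$, it is enough to check that each coefficient occurring in $W$ is an integer. The coefficients $2(j_k+1)$ and $2(j_{n+1}+1)\,p(S)$ are integers immediately, the latter because $p(S)\in\mathbb{Z}$ by hypothesis.

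The one coefficient requiring care, and the crux of the argument, is ${c \over 2(1+2\delta)}(\|J\|+\delta-2)(\|J\|+\delta-3)$, which a priori is only a half-integer multiple of ${c \over 1+2\delta}$. Here I would use that $\|J\|=\sum_{k=2}^{n+1} 2kj_k$ is even, so that $\|J\|+\delta-2$ and $\|J\|+\delta-3$ are two consecutive integers; their product is therefore divisible by $2$, and
\[
{c \over 2(1+2\delta)}(\|J\|+\delta-2)(\|J\|+\delta-3) = {c \over 1+2\delta}\cdot\frac{(\|J\|+\delta-2)(\|J\|+\delta-3)}{2}
\]
is a product of two integers, the first by the hypothesis ${c \over 1+2\delta}\in\mathbb{Z}$ and the second by the parity just noted. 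This settles the integrality of the last coefficient and closes the induction. The main obstacle is exactly this parity observation: without it the factor $\tfrac12$ would block integrality, and it is the evenness of $\|J\|$ forced by the grading $\deg x_k=-4k$ that rescues the argument.
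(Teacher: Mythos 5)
Your proposal is correct and is essentially the argument the paper leaves implicit: the corollary follows from the preceding theorem by induction on $\|J\|$, reducing everything to the integrality of the coefficients of $W$. One small quibble: the divisibility of $(\|J\|+\delta-2)(\|J\|+\delta-3)$ by $2$ holds simply because these are two consecutive integers, not because $\|J\|$ is even, so the parity of $\|J\|$ plays no real role; this does not affect the validity of your proof.
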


\section{Rational solutions.} \text{}

In this section we describe a construction of differential equations that leads to examples of rational solutions of differential equations $\mathcal{D}_{P_n, n+1} = 0$.

For some function $h = h(t)$ and a constant $b$ consider
 an $(n+2) \times (n+2)$ -matrix
\[
S_n(h) = 
\begin{pmatrix}
 b h & - 1 & \dots & 0\\
 b h' & b h & \dots & 0\\
\vdots & \vdots & & \\
 {b \over n!} h^{(n)} & {b \over (n-1)!} h^{(n-1)} & \dots & - (n+1) \\
 {b \over (n+1)!} h^{(n+1)} &  {b \over n!} h^{(n)} & \dots & b h
\end{pmatrix}.
\]

The equation
\begin{equation} \label{Sn}
{1 \over b} \det S_n(h) = 0
\end{equation} 
is an ordinary differential equation homogeneous with respect
to the grading $\deg h = -4$, $\deg t = 4$.
We have
\[
{1 \over b} \det S_0(h) = h' + b h^{2} 
\]
and
\[
{1 \over b} \det S_n(h) = h^{(n+1)} + (n+2)  b h h^{(n)} + \dots + b^{n+1} h^{n+2} \qquad \text{for} \quad n>0.
\]

\begin{thm}
For any integer $n \geqslant 0$ the function 
\begin{equation} \label{32}
h(t) = h_n(t) = {1 \over b} \sum_{k=1}^{n+1} {1 \over t - a_k}
\end{equation}
is the general solution to the equation \eqref{Sn}.
\end{thm}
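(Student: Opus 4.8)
The plan is to trivialize the differential operator concealed in $\det S_n(h)$ by a logarithmic-derivative (gauge) substitution, after which equation \eqref{Sn} becomes the statement that an auxiliary function is a polynomial. Throughout set $g = g(t) = b\,h(t)$. The central step is to prove the identity
\[
\det S_n(h) = \left(\frac{d}{dt} + g\right)^{n+2} 1,
\]
where the right-hand side means the result of applying the first-order operator $\frac{d}{dt}+g$ to the constant $1$ a total of $n+2$ times. This is consistent with the sample formulas in the text: $\left(\frac{d}{dt}+g\right)^2 1 = g' + g^2$ gives $\det S_0 = g'+g^2$ (equivalently $\frac1b\det S_0 = h' + bh^2$), and in general the extreme terms reproduce $h^{(n+1)}$ and $b^{n+1}h^{n+2}$ in $\frac1b\det S_n(h)$.

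To prove this identity I would compare two recursions. On the analytic side, writing $f_m = \left(\frac{d}{dt}+g\right)^m 1$ one has at once $f_m = f_{m-1}' + g\,f_{m-1}$, and unfolding Leibniz gives
\[
f_m = \sum_{j=0}^{m-1}\binom{m-1}{j} g^{(j)} f_{m-1-j}, \qquad f_0 = 1.
\]
On the determinant side, let $\delta_m$ be the $m\times m$ leading principal minor of $S_n(h)$, which is again of the same Hessenberg-in-derivatives shape. Expanding $\delta_m$ along its last row, with entries $\frac{1}{(m-i)!}g^{(m-i)}$, one checks that deleting row $m$ and column $i$ leaves a block that splits: a leading $(i-1)\times(i-1)$ part of determinant $\delta_{i-1}$ (because all entries strictly above the superdiagonal vanish), times a triangular part on the superdiagonal entries $-i,\dots,-(m-1)$ with determinant $(-1)^{m-i}(m-1)!/(i-1)!$. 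The cofactor sign $(-1)^{m+i}$ cancels the sign $(-1)^{m-i}$, and collecting terms yields the same binomial recursion $\delta_m = \sum_{j}\binom{m-1}{j}g^{(j)}\delta_{m-1-j}$, $\delta_0 = 1$. Hence $\delta_m = f_m$ for all $m$, so $\det S_n = \delta_{n+2} = f_{n+2}$.

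With the identity established, the remainder is conceptual. Put $P = P(t) = \exp\!\left(\int g\,dt\right)$, so $g = P'/P$; then $\frac{d}{dt}+g = P^{-1}\circ \frac{d}{dt}\circ P$ as operators, whence $\left(\frac{d}{dt}+g\right)^{n+2} 1 = P^{(n+2)}/P$. Therefore \eqref{Sn} is equivalent to $P^{(n+2)} = 0$, i.e. to $P$ being a polynomial of degree at most $n+1$. Since $h = \frac1b(\log P)'$ is unchanged under rescaling $P$, solutions correspond to polynomials of degree $\le n+1$ modulo scalars, an $(n+1)$-parameter family, matching the order $n+1$ of \eqref{Sn}. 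For generic such $P$, of degree exactly $n+1$ with distinct roots $a_1,\dots,a_{n+1}$, partial fractions give $P'/P = \sum_{k=1}^{n+1}(t-a_k)^{-1}$, hence $h = \frac1b\sum_{k=1}^{n+1}(t-a_k)^{-1} = h_n(t)$; the non-generic $P$ (repeated roots, or lower degree) appear as coincidence limits $a_j\to a_k$ and $a_k\to\infty$. This exhibits $h_n$ as the general solution.

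I expect the only real obstacle to be the determinant identity of the middle two paragraphs; once $\det S_n(h) = \left(\frac{d}{dt}+g\right)^{n+2}1$ is in hand, the reduction to $P^{(n+2)}=0$ and the extraction of the solution are immediate. The delicate points are the bookkeeping of cofactor signs and the verification that the minor genuinely factors into the claimed blocks, but both are forced by the lower-Hessenberg structure (all zeros strictly above the superdiagonal), so I anticipate care rather than difficulty.
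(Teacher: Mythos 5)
Your proof is correct, but it takes a genuinely different route from the paper's. The paper verifies the identity $\det S_n(h_n)\equiv 0$ by recognizing $S_n$ as the classical Newton-identity determinant: with $z_k=(a_k-t)^{-1}$ one has $-\tfrac{b}{q!}h_n^{(q)}=s_{q+1}$ (power sums), so $\det S_n(h_n)=\pm(n+2)!\,\sigma_{n+2}(z_1,\dots,z_{n+1},0)$, which vanishes because an elementary symmetric function of degree $n+2$ in $n+1$ nonzero arguments is zero; generality is then argued by counting the $n+1$ parameters against the order of the equation. You instead prove the operator identity $\det S_n(h)=\bigl(\tfrac{d}{dt}+bh\bigr)^{n+2}1=P^{(n+2)}/P$ with $P=\exp\bigl(b\int h\,dt\bigr)$, via matching binomial recursions for the leading principal minors and for the iterates $f_m$, and then integrate the equation completely: $\det S_n(h)=0$ iff $P$ is a polynomial of degree at most $n+1$. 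The two arguments are two faces of the same symmetric-function fact (Newton's determinant expresses the coefficients of $P$ through the power sums of its roots, i.e.\ through $(\log P)'$ and its derivatives), but your version buys something real: it produces \emph{all} solutions, including the confluent and lower-degree degenerations, so the claim that \eqref{32} is the general solution is established by explicit integration rather than by a parameter count. The paper's version is shorter and makes the link to $\sigma_{n+2}$ transparent. One small presentational point: the binomial recursion $f_m=\sum_j\binom{m-1}{j}g^{(j)}f_{m-1-j}$ does not fall out of $f_m=f_{m-1}'+gf_{m-1}$ by a one-line Leibniz unfolding; it is cleanest to first introduce $P$ and write $f_m=P^{(m)}/P$, from which $P^{(m)}=(gP)^{(m-1)}$ gives the recursion immediately — so you may want to reorder those two steps.
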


\begin{proof}
Consider the Newton polynomials $s_q = \sum_{k=1}^{n+1} z_k^q$ of $n+1$ variables where $z_k(t) = {1 \over (a_k - t)}$.
We have 
$
{- b \over q!} h^{(q)}(t) = s_{q+1}.
$
Using a classical result of the theory of symmetric functions we get $S_n(h) = (n+2)! \sigma_{n+2}(z_1, \dots, z_{n+1}, 0) \equiv 0$, where $\sigma_{n+2}()$ is the elementary symmetric function. 
Thus the function \eqref{32} is the solution to the differential equation \eqref{Sn} of order $(n+1)$. It defines the general solution to this equation because it depends of $(n+1)$ parameters $a_1$, $\dots$, $a_{n+1}$. 
\end{proof}

\begin{cor}
For each positive integer $n$ for the existence of a polynomial $P_n$ such that 
\[
\mathcal{D}_{P_n, n+1}(h) = {1 \over b} \det S_n(h)
\]
it is necessary that
\[
b = (n+1).
\]
\end{cor}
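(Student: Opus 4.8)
The plan is to establish necessity by comparing a single, carefully chosen coefficient on the two sides of the postulated identity $\mathcal{D}_{P_n, n+1}(h) = \frac{1}{b}\det S_n(h)$. The monomial I would track is $h\,h^{(n)}$, the product of $h$ with its $n$-th derivative. This is the decisive place to look: it is the lowest-order term in which $b$ actually enters on the right-hand side, while on the left-hand side the unknown polynomial $P_n$ is guaranteed not to interfere with it.

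First I would read off the right-hand side. By the expansion displayed just above the corollary, ${1 \over b}\det S_n(h) = h^{(n+1)} + (n+2)\,b\,h\,h^{(n)} + \dots + b^{n+1} h^{n+2}$ for $n>0$, so the coefficient of $h\,h^{(n)}$ there is exactly $(n+2)\,b$.

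Next I would analyze the left-hand side using the identity $\mathcal{D}_{P_n, n+1}(h) = \mathcal{D}_{n+1}(h) - P_n(\mathcal{D}(h))$ established earlier, where $\mathcal{D}(h) = (\mathcal{D}_1(h), \dots, \mathcal{D}_{n-1}(h))$. The key observation is that $P_n(\mathcal{D}(h))$ is an ordinary (algebraic) polynomial in the functions $\mathcal{D}_1(h), \dots, \mathcal{D}_{n-1}(h)$, and each $\mathcal{D}_k(h)$ contains no derivative of $h$ of order higher than $h^{(k)}$. Since the largest index appearing is $k = n-1$, forming products of these functions multiplies them but never raises the order of differentiation; hence no derivative $h^{(m)}$ with $m \geqslant n$ can occur anywhere in $P_n(\mathcal{D}(h))$. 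Consequently the subtraction of $P_n(\mathcal{D}(h))$ leaves every monomial containing $h^{(n)}$ untouched, so the coefficient of $h\,h^{(n)}$ in $\mathcal{D}_{P_n, n+1}(h)$ coincides with its coefficient in $\mathcal{D}_{n+1}(h)$. By the corollary to Theorem~\ref{t21}, $\mathcal{D}_{n+1}(h) = h^{(n+1)} + (n+1)(n+2)\,h^{(n)} h + \dots$, so this coefficient equals $(n+1)(n+2)$.

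Equating the two expressions gives $(n+1)(n+2) = (n+2)\,b$, and since $n+2 \neq 0$ we may cancel to obtain $b = n+1$, which is the asserted necessary condition. The only step requiring genuine care — and the point I would expect to be the main, if modest, obstacle — is the justification that $P_n(\mathcal{D}(h))$ carries no $h^{(n)}$ term; this rests on distinguishing an algebraic product of the functions $\mathcal{D}_k(h)$ from a composition of differential operators, the former keeping the differential order capped at $\max_k k = n-1$.
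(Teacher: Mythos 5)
Your proof is correct and follows essentially the argument the paper sets up: you compare the coefficient of $h\,h^{(n)}$ in the two expansions displayed in the text, namely $(n+1)(n+2)$ coming from the corollary to Theorem~\ref{t21} and $(n+2)\,b$ coming from ${1\over b}\det S_n(h)$. The one point genuinely requiring justification --- that $P_n(\mathcal{D}(h))$ cannot contribute to this coefficient --- you handle correctly by observing that $\mathcal{D}_1(h),\dots,\mathcal{D}_{n-1}(h)$ involve no derivative of $h$ of order exceeding $n-1$ and that taking algebraic products does not raise the differential order.
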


The calculations for $n=0,1,2,3,4$ show that this condition is sufficient (see below).

\begin{problem}
Prove that in the notions of the corollary the condition $b = (n+1)$ is sufficient.

\end{problem}

\section{Special cases.} \text{}

In this section we construct explicitly in the cases $n = 0, 1,2,3,4$ the formulas we have derived above. 

\subsection{Case $n = 0$.}
We have $\mathcal{D}_1(h) = h' + h^2$.
 In this case there are no variables $x_k$ and $P_0 = 0$.
The system \eqref{e14} has the form
\[
{d \over d t} r = - (\delta + {1 \over 2}) h, \qquad {d \over d t} h = - h^2.
\]
The general solution of this system is
\[
h(t) = {a \over (a t - b)}, \quad r(t) = - ({1 \over 2}  + \delta) \ln(a t - b) + r_0,
\]
where the parameters $r_0 \in \mathbb{C}$ and $(a:b) \in \mathbb{C}P^1$ are constant. Thus the function
\[
\psi(z,t) = {\exp{r_0} \over (a t - b)^{{1 \over 2} + \delta}} \exp\left(- {a z^2 \over 2 (a t - b)}\right) z^\delta
\]
solves the heat equation. In the case $\delta = 0$, $r_0 = 0$, $b/a = c$ it coincides with the solution \eqref{fexp} of the heat equation.
\begin{cor}
For $\delta = 0$ and $b/a < t < \infty$ the function $\psi(z,t)$ is the Gaussian density distribution.
\end{cor}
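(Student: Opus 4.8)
The plan is to take the explicit formula for $\psi(z,t)$ already derived in this case, specialize it to $\delta = 0$, and recognize the result as the density of a centered normal distribution with a time-dependent variance. First I would set $\delta = 0$, so that the factor $z^\delta$ disappears and
\[
\psi(z,t) = \frac{\exp(r_0)}{(a t - b)^{1/2}} \exp\left(-\frac{a z^2}{2 (a t - b)}\right).
\]
On the interval $b/a < t < \infty$ (with $a$ taken real and positive) we have $a t - b > 0$, so the prefactor is a well-defined positive real number and the exponent is real for every real $z$; hence $\psi(z,t) > 0$ throughout this range.

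Next I would read off the variance. Writing the exponent in the form $- z^2 / (2 \sigma^2)$ forces
\[
\sigma^2 = \frac{a t - b}{a} = t - \frac{b}{a},
\]
which is strictly positive and ranges over $(0, \infty)$ as $t$ ranges over $(b/a, \infty)$. Thus $\psi(\cdot, t)$ already has the exact functional shape $\exp\left(- z^2 / (2 \sigma^2)\right)$ of a centered Gaussian, and only the normalization constant remains to be checked.

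Finally I would normalize using the Gaussian integral $\int_{-\infty}^{\infty} \exp\left(- z^2 / (2 \sigma^2)\right)\,dz = \sqrt{2 \pi \sigma^2}$. Substituting $\sigma^2 = (a t - b)/a$ yields
\[
I(t) = \int_{-\infty}^{\infty} \psi(z,t)\,dz = \frac{\exp(r_0)}{(a t - b)^{1/2}} \sqrt{\frac{2 \pi (a t - b)}{a}} = \exp(r_0) \sqrt{\frac{2 \pi}{a}},
\]
which is manifestly independent of $t$, in agreement with the conservation law for $I(t)$ established earlier in the paper. Choosing the free constant $r_0 = \tfrac{1}{2} \ln \frac{a}{2 \pi}$ makes $I(t) \equiv 1$, so that $\psi(z,t)$ is exactly the density $\frac{1}{\sqrt{2 \pi \sigma^2}} \exp\left(- z^2 / (2 \sigma^2)\right)$ of the normal law with mean $0$ and variance $t - b/a$. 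I expect no genuine obstacle here: the only points requiring care are restricting to real positive $a$ (and $\exp(r_0)$) so that the expression is a bona fide probability density rather than merely a Gaussian-type solution of the heat equation, and matching the normalization constant correctly so that the total mass equals one.
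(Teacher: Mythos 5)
Your proof is correct and follows the only natural route, which is also the paper's implicit one: the corollary is stated without proof because it amounts to reading off the explicit formula for $\psi(z,t)$ at $\delta=0$ and recognizing a centered Gaussian of variance $\sigma^2 = t - b/a > 0$ on $(b/a,\infty)$. Your extra care about positivity of $a$, the normalization via $r_0 = \tfrac{1}{2}\ln\frac{a}{2\pi}$, and the consistency check with the conservation law $I(t)\equiv const$ are all sound and, if anything, more explicit than the paper.
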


\subsection{Case $n = 1$.}
We have
\[
\mathcal{D}_2(h) = ({d \over d t} + 4 h) \mathcal{D}_{1}(h)
\]
and $\deg x_2 = - 8$, $\deg P_1 = - 12$, thus $P_1(x_2) \equiv 0$.
The system \eqref{e14} has the form
\[
{d \over d t} r = - (\delta + {1 \over 2}) h, \qquad
{d \over d t} h = - h^2 + x_2, \qquad
{d \over d t} x_{2} = - 4 h x_{2}.
\]

Therefore $h(t)$ is a solution to
\begin{equation} \label{odd1}
h'' + 6 h h' + 4 h^3 = 0.
\end{equation}

\begin{rem} Differentiate \eqref{odd1} and put $y(t) = 2 h(t)$ 
to get Chazy-4 equation:
\[
y''' = - 3 y y'' - 3 y'^2 - 3 y^2 y'.
\]
\end{rem}

The general solution to \eqref{odd1} 
has the form
\begin{equation} \label{h24}
h(t) = {1 \over 2} \left({1 \over t-a} + {1 \over t-b}\right).
\end{equation}

We have $x_2(t) = - {1 \over 4} \left({1 \over(t-a)} - {1 \over(t-b)}\right)^2$.

In the case $a=b$ we come to the case $n=0$.

According to the key theorem, we have
\begin{equation} \label{Phz}
\Phi(z; x_2) = z^\delta + \sum_{q = 1}^\infty \Phi_{2q}(x_2) {z^{4 q+\delta} \over (4 q + \delta)!},
\end{equation}
where
\[
\Phi_2 = - 2 (1 + 2 \delta) x_2, \quad \Phi_{2 q} = - (4 q + \delta - 3) (4 q + \delta - 2) x_2 \Phi_{2q-2}, \; \; q = 2, 3 \dots.
\]

Note the function $\Phi(z; x_2)$ solves
\begin{equation} \label{n1eq}
{d^2 \over d z^2} \Phi(z; x_2) = - x_2 z^2 \Phi(z; x_2).
\end{equation}

\begin{lem} $\Phi(z; x_2) = z^\delta \gamma(z^4; x_2, \delta)$, 
where $\gamma(v; x_2, \delta)$ is a solution to the differential equation
\[
\gamma'(v) + {4 v \over 3 + 2 \delta} \gamma''(v) = \lambda \gamma(v), \qquad \text{with} \quad \lambda = - {1 \over 4 (3 + 2 \delta)} x_2
\]
with the initial condition $\gamma(0) = 1$.
\end{lem}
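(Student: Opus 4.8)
The plan is to pass from the variable $z$ to $v = z^4$ and to read off $\gamma$ by stripping the prefactor $z^\delta$ from $\Phi$. First I would observe that in \eqref{Phz} only the powers $z^{4q+\delta}$, $q \geqslant 0$, occur, so the factorization $\Phi(z; x_2) = z^\delta \gamma(z^4; x_2, \delta)$ is legitimate with
\[
\gamma(v; x_2, \delta) = 1 + \sum_{q \geqslant 1} \Phi_{2q}(x_2) \frac{v^q}{(4q + \delta)!}.
\]
In particular $\gamma(0; x_2, \delta) = 1$, which settles the initial condition at once.

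The main step is a direct substitution of this factorization into \eqref{n1eq}, which $\Phi$ already satisfies. Differentiating $\Phi = z^\delta \gamma(z^4)$ twice by the chain rule gives
\[
\Phi'' = \delta(\delta - 1) z^{\delta - 2} \gamma + 4(2\delta + 3) z^{\delta + 2} \gamma' + 16\, z^{\delta + 6} \gamma''.
\]
Here the crucial point --- and essentially the only thing to verify --- is that the term $\delta(\delta - 1) z^{\delta - 2} \gamma$ vanishes, which is exactly where the standing hypothesis $\delta \in \{0, 1\}$ enters; without it the factorization would not yield an equation purely in $v$. Equating $\Phi''$ with $-x_2 z^2 \Phi = -x_2 z^{\delta + 2} \gamma$, cancelling the common factor $z^{\delta + 2}$, and writing $v = z^4$ produces
\[
16\, v \gamma'' + 4(2\delta + 3) \gamma' = -x_2 \gamma.
\]
Dividing by $4(3 + 2\delta)$ gives precisely $\gamma' + \frac{4v}{3 + 2\delta} \gamma'' = \lambda \gamma$ with $\lambda = -\frac{1}{4(3 + 2\delta)} x_2$, as claimed.

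Since invoking \eqref{n1eq} already does all the work, the argument is short and the only care needed is the bookkeeping of powers of $z$ under the chain rule. As an optional consistency check (which I would omit from the final write-up) one may instead verify directly that the ODE forces the coefficient recursion $4(q+1)(4q + 2\delta + 3)\, c_{q+1} = -x_2\, c_q$ for $c_q = \Phi_{2q}/(4q + \delta)!$, and confirm that this reproduces the given recursion $\Phi_{2q} = -(4q + \delta - 3)(4q + \delta - 2) x_2 \Phi_{2q-2}$ for both $\delta = 0$ and $\delta = 1$, using $(4q + \delta + 4)(4q + \delta + 3) = 4(q+1)(4q + 2\delta + 3)$ in each case. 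There is no genuine obstacle beyond the vanishing of the $z^{\delta - 2}$ term.
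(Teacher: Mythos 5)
Your proof is correct and follows exactly the route the paper intends: the lemma is stated immediately after the observation that $\Phi$ satisfies \eqref{n1eq}, and the paper leaves the verification (substituting $\Phi = z^\delta\gamma(z^4)$ and cancelling $z^{\delta+2}$, using $\delta(\delta-1)=0$ for $\delta\in\{0,1\}$) to the reader. Your chain-rule bookkeeping and the consistency check against the recursion for $\Phi_{2q}$ are both accurate, so there is nothing to add.
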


Thus the function $\gamma(v)$ is the eigenfunction of the generalized
shift operator, defined by the generator
\[
d = {d \over d v} + {4 \over 3 + 2 \delta} v {d^2 \over d v^2}.
\]

\subsection{Case $n = 2$.}
This case leads to remarkable equations with Painlev\'e properties (see, e.g., \cite{CM}).

We have $P_2(x_2, x_3) = c_4 x_2^2$, where $c_4$ is a constant and
\[
\mathcal{D}_3(h) = ({d \over d t} + 6 h) \mathcal{D}_{2}(h), \qquad \mathcal{D}_{P_2,3}(h) = \mathcal{D}_{3}(h) -  c_4 \mathcal{D}_1(h)^2.
\]

The system \eqref{e14} has the form
\[
{d \over d t} r = - (\delta + {1 \over 2}) h, \quad 
{d \over d t} h = - h^2 + x_2,
 \quad
{d \over d t} x_2 = x_{3} - 4 h x_2, \quad 
{d \over d t} x_{3} = c_4 x_2^2 - 6 h x_{3}.
\]

Therefore $h(t)$ is a solution to
\begin{equation} \label{odd2}
h''' + 12 h h'' - 18 (h')^2 + ( 24 - c_4) ( h' + h^2)^2 = 0.
\end{equation}
Equation \eqref{odd2} is brought by the substitution $y(t) = - 6 h(t)$ to
\begin{equation} \label{Chn}
y''' = 2 y y'' - 3 (y')^2 + {24 - c_4 \over 216} (6 y' - y^2)^2.
\end{equation}

For ${24 - c_4 \over 216} = - {4 \over k^2 - 36}$ the equation \eqref{Chn} is Chazy-12.
For $c_4 = 24$ this equation is Chazy-3. 

 For $c_4 = 6$ the equation \eqref{Chn} 
takes the form
\begin{equation} \label{odd2v}
y''' = 2 y y'' - y^2 y' + {1 \over 12} y^4.
\end{equation}
The equation \eqref{odd2v} is linear with respect to derivatives.

We have
\[
{1 \over b} \det S_2(h) =  h''' + 4 b h h'' + 3 b (h')^2 + 6 b^2 h^2 h' + b^3 h^4.
\]
For $b= 3$ we obtain the equation \eqref{odd2} with $c_4 = -3$.

\begin{cor} 
The function 
\[
 - 2 \left({1 \over t-a_1} + {1 \over t-a_2} + {1 \over t - a_3} \right)
 \] is the general solution of the equation Chazy-12 (see \eqref{Chn})
  with $c_4 = -3$ ($k^2= 4$).
\end{cor}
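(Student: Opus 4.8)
The plan is to obtain the corollary by chaining together the rational-solution theorem, the necessary-condition corollary on $b$, and the explicit identification of the determinantal equation with \eqref{odd2} carried out just above in the case $n=2$. First I would specialize the theorem on rational solutions to $n=2$: it asserts that
\[
h(t) = {1 \over b}\left({1 \over t-a_1} + {1 \over t-a_2} + {1 \over t-a_3}\right)
\]
is the general solution of ${1 \over b}\det S_2(h) = 0$, general precisely because it carries the three free parameters $a_1,a_2,a_3$, matching the order of the third-order equation.

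Next I would fix the value of $b$. The corollary preceding this section forces $b = n+1 = 3$ for the determinantal equation to coincide with some $\mathcal{D}_{P_2,3}$. Substituting $b=3$ into the displayed expansion
\[
{1 \over b}\det S_2(h) = h''' + 4 b\, h h'' + 3 b\, (h')^2 + 6 b^2\, h^2 h' + b^3\, h^4
\]
and comparing with \eqref{odd2} after expanding $(h'+h^2)^2 = (h')^2 + 2 h^2 h' + h^4$, I would verify termwise that the two polynomials in $h$ and its derivatives agree exactly when $24 - c_4 = 27$, that is $c_4 = -3$: indeed $3b = 9 = -18 + (24-c_4)$, $6b^2 = 54 = 2(24-c_4)$, $b^3 = 27 = 24-c_4$, with $h'''$ and $4bh h'' = 12 h h''$ matching outright. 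This coefficient matching is the only genuine computation, and it is precisely the identity already recorded immediately above the corollary.

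With the equation identified as the $c_4=-3$ instance of \eqref{odd2}, I would transfer the solution through the substitution $y = -6h$ that passes from \eqref{odd2} to \eqref{Chn}. The Chazy-12 parameter correspondence ${24-c_4 \over 216} = -{4 \over k^2-36}$ gives, for $c_4 = -3$, the value ${27 \over 216} = {1 \over 8} = -{4 \over k^2-36}$, hence $k^2 = 4$, which is the Chazy-12 case claimed. Since $y = -6h = -2\sum_{k=1}^3 (t-a_k)^{-1}$, this is exactly the asserted function. The only point requiring care is that generality is preserved: because $y = -6h$ is a fixed invertible affine rescaling of the dependent variable, the three-parameter family is carried to a three-parameter family of solutions of the third-order equation \eqref{Chn}, so it is again the general solution. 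All the substantive analytic content lives in the rational-solution theorem and the prior coefficient computation; the remaining work is bookkeeping.
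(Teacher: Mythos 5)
Your proposal is correct and follows exactly the route the paper intends: the rational-solution theorem for $n=2$ with $b=3$, the coefficient identification of ${1\over 3}\det S_2(h)$ with \eqref{odd2} at $c_4=-3$ (your termwise check $9=-18+27$, $54=2\cdot 27$, $27=24-c_4$ is right), and the transfer through $y=-6h$ with the verification $k^2=4$. Nothing to add.
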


We will describe solutions of the heat equation in the case
 when $y(t) = - 6 h(t)$ is a solution to Chazy-3. 
Such solutions may be constructed explicitly in terms of the Weierstrass sigma-function.

\subsection{Necessary facts from the theory of elliptic functions.}

Details of the basic facts from the theory of elliptic functions see e.g. \cite{WW}.

Consider the elliptic curve in standard Weierstrass form
\[
V = \{ (\lambda, \mu) \in \mathbb{C}^2 : \mu^2 = 4 \lambda^3 - g_2 \lambda - g_3 \}.
\]
It is non-degenerate for $g_2^3 \ne 27 g_3^2$. Set
\[
2 \omega_k = \oint_{a_k} {d\lambda \over \mu}, \qquad 2 \eta_k = - \oint_{a_k} {\lambda d\lambda \over \mu}, \qquad k = 1,2,
\]
where ${d\lambda \over \mu}$ and ${\lambda d\lambda \over \mu}$ are basis holomorphic differentials 
and $a_k$ are basis cycles on the curve such that
\[
\eta_1 \omega_2 - \omega_1 \eta_2 = {\pi i \over 2}.
\]

A plane non-degenerate algebraic curve~$V$ 
defines the rank $2$ lattice $\Gamma \subset \mathbb{C}$ 
generated by $2 \omega_1$ and $2 \omega_2$ with ${\bf Im} {\omega_2 \over \omega_1} > 0$. 

The {\it Jacobian} of the curve $V$ is the complex torus $\mathbb{T} = \mathbb{C}/\Gamma$.

An {\bf elliptic function} is a meromorphic function on $\mathbb{C}$ such that
\[
f(z + 2 \omega_1) = f(z), \quad f(z + 2 \omega_2) = f(z).
\]

The {\bf Weierstrass function $\wp(z; g_2, g_3)$} is the unique elliptic function with periods $2 \omega_1$, $2 \omega_2$ and poles only in lattice points such that 
\[ 
\underset{z\to 0}{\lim}\left( \wp(z)-\frac{1}{z^2}
\right)=0.
\]
It defines the uniformization of the elliptic curve 
in the standard Weierstrass form
\[
\wp'(z)^2 = 4 \wp(z)^3 - g_2 \wp(z) - g_3.
\]

{\bf The Weierstrass $\sigma$-function}
is the entire {\bf odd} function $\sigma(z) = \sigma(z; g_2, g_3)$ such that 
\[\big(\ln \sigma(z; g_2, g_3)\big)'' =  - \wp(z; g_2, g_3) \qquad
\text{and}  \qquad \underset{z\to 0}{\lim}\left( \frac{\sigma(z)}{z} \right)=1.\]

Periodic properties:
\[
\sigma(z+2\omega_k) = - \sigma(z) \exp\big(2\eta_k(z+\omega_k)\big), \quad k = 1,2.
\]

Degenerate case:
\[
\sigma\left(z;  {4 \over 3} a^4, {8 \over 27} a^6\right) = {1 \over a} \exp\left({1 \over 6} a^2 z^2\right) \sin{a z} .
\]

Consider the fields on $\mathbb{C}^2$ 
\[
l_0 = 4 g_2 {\partial \over \partial g_2} + 6 g_3 {\partial \over \partial g_3}, \qquad
l_2 = 6 g_3 {\partial \over \partial g_2} + {1 \over 3} g_2^2 {\partial \over \partial g_3}.
\]
We have $[l_0, l_2] = 2 l_2$, \; $l_0 \Delta = 12 \Delta$, \; $l_2 \Delta = 0$, \; $\langle l_0, l_2\rangle = {4 \over 3} \Delta$.

\begin{thm}[The Weierstrass theorem]
The operators 
\[
Q_0 = z {\partial \over \partial z} - 1 - l_0, \quad Q_2 = {1 \over 2} {\partial^2 \over \partial z^2} + {1 \over 24} g_2 z^2 - l_2
\]
annihilate the sigma-function, that is
\[
Q_0 \sigma(z; g_2, g_3) = 0, \qquad Q_2 \sigma(z; g_2, g_3) = 0.
\]
\end{thm}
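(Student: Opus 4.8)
The statement is exactly the elliptic instance of the general construction of Section~\ref{secgc}, specialised to $n=2$, $\delta=1$, and $u=\tfrac{1}{24}g_2$. Indeed, with $\mathcal{H}_0,\mathcal{H}_2$ as in Theorem~\ref{thm1} one has $Q_0=\mathcal{H}_0-l_0$ and $Q_2=\mathcal{H}_2-l_2$, where $l_2$ is the operator $L_2$ of the reduced dynamical system
\[
\frac{d}{d\tau}g_2=6g_3,\qquad \frac{d}{d\tau}g_3=\tfrac{1}{3}g_2^2,
\]
in which $g_2,g_3$ play the roles of $x_2,x_3$ (the gradings match, $\deg g_2=-8$, $\deg g_3=-12$). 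The plan is to prove $Q_0\sigma=0$ and $Q_2\sigma=0$ separately, the first being a homogeneity statement and the second the genuine content.

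For $Q_0\sigma=0$ I would use only the scaling law of the sigma-function. Under $\Gamma\mapsto\lambda\Gamma$ the invariants transform by $g_2\mapsto\lambda^{-4}g_2$, $g_3\mapsto\lambda^{-6}g_3$, and the Weierstrass product gives
\[
\sigma(\lambda z;\lambda^{-4}g_2,\lambda^{-6}g_3)=\lambda\,\sigma(z;g_2,g_3).
\]
Differentiating in $\lambda$ and setting $\lambda=1$ yields $z\sigma_z-4g_2\sigma_{g_2}-6g_3\sigma_{g_3}=\sigma$, which is precisely $(z\partial_z-1-l_0)\sigma=0$. Equivalently, $\sigma$ is a homogeneous function of the form \eqref{e5} with $\delta=1$, so $\mathcal{H}_0\sigma=L_0\sigma=l_0\sigma$ holds automatically.

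For $Q_2\sigma=0$, i.e. $\mathcal{H}_2\sigma=L_2\sigma$, I would invoke Theorem~\ref{t2}: since the Taylor coefficients $\Phi_k(g_2,g_3)$ of $\sigma$ are polynomials with $\Phi_2=-\tfrac12 g_2$, the identity is equivalent to the recursion \eqref{e7}, which here reads
\[
\Phi_{k+1}=2\,l_2\Phi_k-\frac{(2k+1)(2k)}{12}\,g_2\,\Phi_{k-1}.
\]
It then remains to verify that the actual coefficients of $\sigma$ satisfy this three-term relation, and this is where the real work lies. The cleanest route is the heat equation for the theta-function: writing $\sigma(z)=\tfrac{2\omega_1}{\theta_1'(0)}\exp\!\big(\tfrac{\eta_1}{2\omega_1}z^2\big)\theta_1\!\big(\tfrac{z}{2\omega_1}\,\big|\,\tau\big)$ and using $\partial_\tau\theta_1=\tfrac{1}{4\pi i}\partial_v^2\theta_1$, one converts $\tfrac12\partial_z^2\sigma$ into a $\tau$-derivative plus terms produced by the quadratic exponential prefactor. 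The main obstacle is the change-of-variables bookkeeping: the $\tau$-derivative of $(g_2,g_3)$ differs from $l_2$ by a quasimodular ($E_2$, equivalently $\eta_1$) correction, and one must check that this correction is cancelled exactly by the prefactor contributions together with the term $\tfrac{1}{24}g_2z^2$ in $Q_2$, the constants matching only because $\eta_1\omega_2-\omega_1\eta_2=\tfrac{\pi i}{2}$. An alternative, avoiding theta-functions, is to derive the same recursion directly from $(\log\sigma)''=-\wp$ together with the classical formulas for $\partial_{g_2}\wp$ and $\partial_{g_3}\wp$. Either way, $Q_2\sigma=0$ is the crux, while $Q_0\sigma=0$ comes essentially for free.
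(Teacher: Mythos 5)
The paper itself offers no proof of this theorem: it is presented as a classical fact in the subsection ``Necessary facts from the theory of elliptic functions,'' with the reader referred to the literature (Whittaker--Watson). So there is nothing internal to compare against, and your attempt has to be judged as a free-standing argument. Your treatment of $Q_0\sigma=0$ is complete and correct: the homogeneity $\sigma(\lambda z;\lambda^{-4}g_2,\lambda^{-6}g_3)=\lambda\,\sigma(z;g_2,g_3)$ follows from the Weierstrass product over the rescaled lattice, and differentiating at $\lambda=1$ gives exactly $(z\partial_z-1-l_0)\sigma=0$. Your identification of $Q_2$ with $\mathcal{H}_2-L_2$ for the reduced system $g_2'=6g_3$, $g_3'=\tfrac13 g_2^2$ (i.e. $x_2=\tfrac1{12}g_2$, $x_3=\tfrac12 g_3$, $c_4=24$, $\delta=1$, $c=-6$, $u=\tfrac1{24}g_2$) is also correct and consistent with the paper's normalizations.

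The gap is in the second half. Invoking Theorem \ref{t2} to replace $Q_2\sigma=0$ by the three-term recursion \eqref{e7} for the coefficients $\Phi_k(g_2,g_3)$ is a valid equivalence but gains nothing: that recursion \emph{is} the Weierstrass theorem in coefficient form, so the ``reduction'' leaves the entire content untouched. You then name two legitimate classical routes --- the theta-function representation with the heat equation $\partial_\tau\theta_1=\tfrac1{4\pi i}\partial_v^2\theta_1$ and the cancellation of the quasimodular ($\eta_1$) correction via the Legendre relation, or the Frobenius--Stickelberger formulas for $\partial_{g_2}\wp$ and $\partial_{g_3}\wp$ combined with $(\log\sigma)''=-\wp$ --- but you carry out neither. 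The phrase ``one must check that this correction is cancelled exactly'' is precisely the step that constitutes the proof; as written, the crux is described rather than established. Either route would work if executed (the theta route is essentially the computation the paper performs elsewhere, cf. the ansatz \eqref{vid} and the system \eqref{ds}), so the plan is sound, but the attempt is a correct outline rather than a proof.
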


\begin{thm}[See \cite{Trudy}]
The function $\psi(z, t)$ such that
\begin{equation}\label{vid}
\psi(z, t) = e^{- {1 \over 2} h(t) z^2 + r(t)} \sigma\left(z, g_2(t), g_3(t)\right)
\end{equation}
for some functions $r(t)$, $h(t)$, $g_2(t)$ and $g_3(t)$  satisfies the heat equation
\begin{equation}\label{F-11}
{\partial \over \partial t} \psi(z, t) = {1 \over 2}
{\partial^2 \over \partial z^2} \psi(z, t)
\end{equation}
if and only if the functions $r(t)$, $h(t)$, $g_2(t)$ and $g_3(t)$ satisfy the homogeneous polynomial dynamical system
in $\mathbb{C}^4$ 
with coordinates $(r, h, g_2, g_3)$, $\deg h = -4$, $\deg r = 0$:
\begin{equation} \label{ds}
r' = - {3 \over 2} h, \quad
h' = - h^2 + {1 \over 12} g_2, \quad 
g_2' = 6 g_3 - 4 h g_2, \quad 
g_3' = {1 \over 3} g_2^2 - 6 h g_3. 
\end{equation}
\end{thm}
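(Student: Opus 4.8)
The plan is to substitute the ansatz \eqref{vid} into the heat equation \eqref{F-11} directly, and to reduce the resulting identity, by means of the Weierstrass theorem stated above, to a single vanishing linear combination of the four functions $\sigma$, $z^2\sigma$, $\partial\sigma/\partial g_2$ and $\partial\sigma/\partial g_3$, where throughout $\sigma = \sigma(z; g_2(t), g_3(t))$. The equivalence asserted by the theorem will then be read off from the linear independence of these four functions of $z$.

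First I would set $E = \exp(-\tfrac12 h z^2 + r)$, so that $\psi = E\,\sigma$, and compute both sides of \eqref{F-11}. Differentiating in $t$ brings in $r'$, $h'$ and the parameter derivatives $g_2'\,\partial\sigma/\partial g_2 + g_3'\,\partial\sigma/\partial g_3$, while two $z$-differentiations of $E\sigma$ produce $\sigma$, $z\,\partial\sigma/\partial z$, $\partial^2\sigma/\partial z^2$ and $z^2\sigma$. Dividing by $E$ and collecting everything on one side, the heat equation becomes
\[
\frac{1}{2}\frac{\partial^2\sigma}{\partial z^2} - h z \frac{\partial\sigma}{\partial z} + \frac12(h' + h^2)z^2\sigma - \Big(\tfrac12 h + r'\Big)\sigma - g_2'\frac{\partial\sigma}{\partial g_2} - g_3'\frac{\partial\sigma}{\partial g_3} = 0.
\]

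Next I would use the Weierstrass theorem to eliminate $\partial^2\sigma/\partial z^2$ and $z\,\partial\sigma/\partial z$. The relations $Q_2\sigma = 0$ and $Q_0\sigma = 0$ give, respectively,
\[
\frac{1}{2}\frac{\partial^2\sigma}{\partial z^2} = -\frac{1}{24}g_2 z^2\sigma + 6 g_3\frac{\partial\sigma}{\partial g_2} + \frac13 g_2^2\frac{\partial\sigma}{\partial g_3}, \qquad z\frac{\partial\sigma}{\partial z} = \sigma + 4 g_2\frac{\partial\sigma}{\partial g_2} + 6 g_3\frac{\partial\sigma}{\partial g_3}.
\]
Substituting these and collecting the coefficients of $\partial\sigma/\partial g_2$, $\partial\sigma/\partial g_3$, $z^2\sigma$ and $\sigma$ turns the identity above into
\[
\big(6 g_3 - 4 h g_2 - g_2'\big)\frac{\partial\sigma}{\partial g_2} + \Big(\tfrac13 g_2^2 - 6 h g_3 - g_3'\Big)\frac{\partial\sigma}{\partial g_3} + \Big(\tfrac12 h' + \tfrac12 h^2 - \tfrac{1}{24}g_2\Big)z^2\sigma - \Big(\tfrac32 h + r'\Big)\sigma = 0.
\]

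The final step, which is the only one requiring an argument rather than bookkeeping, is the linear independence of $\sigma$, $z^2\sigma$, $\partial\sigma/\partial g_2$, $\partial\sigma/\partial g_3$ as functions of $z$. For this I would use the expansion $\sigma(z) = z - \tfrac{g_2}{240}z^5 - \tfrac{g_3}{840}z^7 - \cdots$: the lowest powers of $z$ occurring in the four functions are $z^1$, $z^3$, $z^5$ and $z^7$ respectively, all distinct, so in any vanishing combination the coefficients of $z$, $z^3$, $z^5$, $z^7$ force the four bracketed expressions to vanish one after another. Setting each bracket to zero gives exactly $g_2' = 6 g_3 - 4 h g_2$, $g_3' = \tfrac13 g_2^2 - 6 h g_3$, $h' = -h^2 + \tfrac{1}{12}g_2$ and $r' = -\tfrac32 h$, i.e. the system \eqref{ds}; conversely, if \eqref{ds} holds then every bracket vanishes and \eqref{F-11} is satisfied, so the two conditions are equivalent. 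I expect no serious obstacle: the only delicate point is organising the independence argument, and the $z$-expansion of $\sigma$ settles it immediately. This is also the $\delta = 1$, $n = 2$ instance of the Key Theorem \ref{t5}, after the rescaling $x_2 = \tfrac{1}{12}g_2$, $x_3 = \tfrac12 g_3$, which provides an alternative route via the general construction.
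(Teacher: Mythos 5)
Your proposal is correct and follows essentially the route the paper intends: the paper defers the proof to the cited reference and realizes the statement as the $n=2$, $\delta=1$, $c_4=24$ instance of the Key Theorem (with $x_2 = \tfrac{1}{12}g_2$, $x_3 = \tfrac12 g_3$), where the Weierstrass theorem $Q_0\sigma = Q_2\sigma = 0$ supplies exactly the two operator identities $\mathcal{H}_0\Phi = L_0\Phi$ and $\mathcal{H}_2\Phi = L_2\Phi$ that your substitution uses, and your bracketed coefficients reproduce the system \eqref{ds} exactly. The one point where you add something the paper leaves implicit is the ``only if'' direction: the general construction assumes $\Phi$ is non-degenerate, and your linear-independence argument via the expansion $\sigma(z) = z - \tfrac{g_2}{240}z^5 - \tfrac{g_3}{840}z^7 - \cdots$ (lowest orders $z, z^3, z^5, z^7$ in the four functions) is precisely the verification of that hypothesis for the sigma-function, and it is carried out correctly.
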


\begin{ex}
The function
\[
\psi(z,t) = \exp\left(- {1 \over 2} a^2 t\right) {\sin a z \over a}, \quad a = const
\]
is a {\bf periodic odd} function of $z$ with the initial conditions $\psi(0,t) = 0$,
$\psi'(0,t) = \exp\left(- {1 \over 2} a^2 t\right)$.
It is a classical solution of the heat equation.
In this case we have
\[
r = - {1 \over 2} a^2 t, \quad h = {1 \over 3} a^2, \quad g_2 = {4 \over 3} \gamma^4, \quad g_3 = {8 \over 27} \gamma^6.
\]
\end{ex}

\begin{ex}
For the classical solution 
\[
\psi(z,t) = \psi_*(z,t) - \psi_*(- z,t), \quad
\text{where} \quad
\psi_*(z,t) =  {1 \over \sqrt{t}} \exp{\left(- {(z-a)^2 \over 2 t}\right)},
\]
which is decreasing when $z \to \pm \infty$, we have $\gamma = - {i a \over t}$ and
\[ h = - {a^2 - 3 t \over 3 t^2}, \quad r = \ln({2 a \over \sqrt{t^{3}}}) - {a^2 \over 2 t}, \quad g_2 = {4 \over 3} {a^4 \over t^4}, \quad g_3 = - {8 \over 27} {a^6 \over t^6}.
\]
For this solution we have $\psi(0,t) = 0$ and 
$\psi'(z,t) = {2 a \over t \sqrt{t}} \exp{\left(- {a^2 \over 2 t}\right)}.$
\end{ex}

Consider the dynamical system \eqref{ds}.
It corresponds to the system generated by the field $l_2$:
\begin{equation} \label{systeml2}
{d \over d \tau} g_2 = 6 g_3, \qquad {d \over d \tau} g_3 = {1 \over 3} g_2^2.
\end{equation}
For $x_2 = {1 \over 12} g_2$, $x_3 = {1 \over 2} g_3$ we get the case $n=2$, $\delta = 1$, $c_4 = 24$.

In this case the equation \eqref{odd2} takes the form
\[
h''' + 12 h h'' - 18 (h')^2 = 0
\]
and is brought by the substitution $y(t) = - 6 h(t)$ 
to the Chazy-3 equation
\[
y''' = 2 y y'' - 3 (y')^2.
\]

The system \eqref{systeml2} has the solution
\begin{equation} \label{b20}
g_2(\tau) = 3 \wp(\tau + d; 0, b_3), \qquad g_3(\tau) = {1 \over 2} \wp'(\tau + d; 0, b_3),
\end{equation}
where $b_3 = {4 \over 27} g_2(0)^3 - 4 g_3(0)^2$, and $d$ is the solution of 
the compatible system $\wp(d; 0, b_3) = {1 \over 3} g_2(0)$, $\wp'(d; 0, b_3) = 2 g_3(0)$.

In the terms of our construction,
$x_2 = {1 \over 4} \wp(\tau + d; 0, b_3)$ for $n = 2$, $c_4 = 24$. 
Further we will show that for $n = 3$ solutions to the heat equation with $x_2 = {1 \over 4} \wp(\tau + d; b_2, b_3)$ appear. 

The corresponding function $\Phi(z;\textbf{\textit{x}})$ can be considered as a series of the 
parameter $b_2$ and at $b_2 = 0$ it coincides 
with the Weierstrass sigma-function (see case $n=2$).

\subsection{Case $n = 3$.}
\[
\mathcal{D}_4(h) = \left({d \over d t} + 8 h\right) \mathcal{D}_3(h).
\]
We have $P_3(x_2, x_3, x_4) = c_5 x_2 x_3$, where $c_5$ is a constant and
\[
\mathcal{D}_{P_3,4}(h) = \mathcal{D}_4(h) - c_5 \mathcal{D}_1(h) \mathcal{D}_2(h).
\]

The system \eqref{e14} has the form
\begin{multline*}
{d \over d t} r =  - (\delta + {1 \over 2}) h,
\quad {d \over d t} h = - h^2 + x_2, \quad
{d \over d t} x_2 = x_3 - 4 h x_2, \quad 
{d \over d t} x_3 = x_4 - 6 h x_3, \quad 
{d \over d t} x_4 = c_5 x_2 x_3 - 8 h x_4.
\end{multline*}

Therefore $h(t)$ is a solution to
\begin{equation} \label{an3}
h'''' + 20 h h''' - 24 h' h'' + 96 h^2 h'' - 144 h (h')^2 + (48 - c_5) ( h' + h^2) ( h'' + 6 h h' + 4 h^3) = 0.
\end{equation}

\begin{lem} For $c_5 = 2 c_4$ the formula holds
\[
\mathcal{D}_{P_3,4}(h) = \left({d \over d t} + 8 h\right) \mathcal{D}_{P_2,3}(h).
\]
\end{lem}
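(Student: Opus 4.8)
The plan is to recognize this lemma as the special case $n=2$ of the theorem proved immediately above, which asserts that $\left({d \over d t} + 2(n+2)h\right)\mathcal{D}_{P_n,n+1}(h) = \mathcal{D}_{P_{n+1},n+2}(h)$ with $P_{n+1}(\textbf{y}) = d P_n(\textbf{y})$. Since $2(n+2) = 8$ when $n=2$, the left-hand side of that identity is exactly $\left({d \over d t} + 8 h\right)\mathcal{D}_{P_2,3}(h)$, and its right-hand side is $\mathcal{D}_{P_3,4}(h)$ for $P_3 = d P_2$. So the only point left to verify is that the polynomial produced by $d$ agrees with the $P_3$ fixed at the start of this subsection precisely when $c_5 = 2 c_4$.

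First I would compute $d P_2$. Writing $P_2(\textbf{y}) = c_4 y_1^2$ (recall the identification $y_1 \leftrightarrow \mathcal{D}_1 \leftrightarrow x_2$) and applying $d = \sum_{k \geqslant 1} y_{k+1} {\partial \over \partial y_k}$, only the $k=1$ term contributes, giving $d P_2 = 2 c_4 y_1 y_2$. In the variables $x_2 = y_1$, $x_3 = y_2$ this reads $2 c_4 x_2 x_3$. Comparing with $P_3 = c_5 x_2 x_3$ as fixed in this subsection yields $c_5 = 2 c_4$, which is the claim; the displayed operator identity then gives the asserted formula verbatim.

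For a self-contained derivation I would instead expand both sides directly. Using $x_2 = \mathcal{D}_1(h)$ and $x_3 = \mathcal{D}_2(h)$ one has $\mathcal{D}_{P_3,4}(h) = \mathcal{D}_4(h) - c_5 \mathcal{D}_1(h)\mathcal{D}_2(h)$ and $\mathcal{D}_{P_2,3}(h) = \mathcal{D}_3(h) - c_4 \mathcal{D}_1(h)^2$. Applying $\left({d \over d t} + 8 h\right)$ to the latter, the term $\left({d \over d t} + 8 h\right)\mathcal{D}_3(h)$ equals $\mathcal{D}_4(h)$ by the recursion \eqref{star}, while the weighted Leibniz rule $\left({d \over d t} + 2(a+b)h\right)(fg) = \left(\left({d \over d t} + 2 a h\right)f\right) g + f\left(\left({d \over d t} + 2 b h\right)g\right)$ applied with $a=b=2$, $f=g=\mathcal{D}_1(h)$ gives $\left({d \over d t} + 8 h\right)\mathcal{D}_1(h)^2 = 2 \mathcal{D}_1(h)\left({d \over d t} + 4 h\right)\mathcal{D}_1(h) = 2\mathcal{D}_1(h)\mathcal{D}_2(h)$, again by \eqref{star}. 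Hence the right-hand side equals $\mathcal{D}_4(h) - 2 c_4 \mathcal{D}_1(h)\mathcal{D}_2(h)$, and matching with the left-hand side forces $c_5 = 2 c_4$.

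There is no genuine obstacle here: the computation is routine once the recursion \eqref{star} and the identification $x_k = \mathcal{D}_{k-1}(h)$ are in hand. The only step requiring care is the bookkeeping of the weights in the Leibniz move, that is, the split $8h = 4h + 4h$ that lets one factor of $4h$ pair with ${d \over d t}$ on each copy of $\mathcal{D}_1(h)$. This is exactly the mechanism underlying the proof of the general theorem, so invoking that theorem is the cleanest route and reduces the whole lemma to the one-line computation $d(c_4 y_1^2) = 2 c_4 y_1 y_2$.
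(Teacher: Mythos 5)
Your proposal is correct and follows the route the paper intends: the lemma is the $n=2$ instance of the general theorem $\left({d \over dt} + 2(n+2)h\right)\mathcal{D}_{P_n,n+1}(h) = \mathcal{D}_{dP_n,n+2}(h)$ from Section 5, combined with the one-line computation $d(c_4 y_1^2) = 2c_4 y_1 y_2$, i.e.\ $c_5 = 2c_4$ (the paper states the lemma without a separate proof precisely because this is immediate). Your additional direct expansion via the recursion \eqref{star} and the weighted Leibniz rule is also correct and is essentially the same mechanism as the proof of that general theorem.
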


Thus for $c_5 = 48$ the equation \eqref{an3} is obtained 
from the Chazy-3 equation by the substitution $y(t) = - 6 h(t)$ 
and the action of the operator $\left({d \over d t} + 8 h\right)$.

\begin{lem} For $c_5 = 24$ and $y(t) = - 2 h(t)$ the equation \eqref{an3} 
takes the form
\begin{equation} \label{an35}
y'''' - 10 y y''' + 30 y^2 y'' - 30 y^3 y' + 6 y^5 = 0.
\end{equation} 
The equation \eqref{an35} is linear with respect to derivatives.
\end{lem}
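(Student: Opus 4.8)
The plan is to obtain \eqref{an35} from \eqref{an3} by a direct but structured computation: first specialize to $c_5 = 24$, and then apply the linear substitution $h = -\tfrac{1}{2} y$. I would begin by setting $c_5 = 24$ in \eqref{an3}, so that the coefficient $(48 - c_5)$ becomes $24$, and expanding the quadratic factor
\[
(h' + h^2)(h'' + 6 h h' + 4 h^3) = h' h'' + 6 h (h')^2 + 4 h^3 h' + h^2 h'' + 6 h^3 h' + 4 h^5.
\]
Adding $24$ times this expansion to the remaining terms of \eqref{an3} and collecting monomials by their structure in the derivatives of $h$ is the bulk of the work.

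The key observation — the one that makes the final assertion true — is that exactly the monomials which are \emph{nonlinear} in the derivatives of $h$ cancel at $c_5 = 24$. Indeed, the product $h' h''$ occurs in \eqref{an3} with coefficient $-24$ and in the expansion with coefficient $(48 - c_5)$, while the product $h (h')^2$ occurs with coefficient $-144$ in \eqref{an3} and $6(48 - c_5)$ in the expansion. Both pairs vanish under the single condition $48 - c_5 = 24$, that is $c_5 = 24$. After this cancellation the only surviving monomials are $h''''$, $h h'''$, $h^2 h''$, $h^3 h'$, and $h^5$, each containing at most one derivative factor, and the equation reduces to
\[
h'''' + 20 h h''' + 120 h^2 h'' + 240 h^3 h' + 96 h^5 = 0.
\]
This form is already manifestly linear with respect to the derivatives of $h$, which is the content of the second claim of the lemma.

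Finally I would substitute $h = -\tfrac{1}{2} y$. Since the substitution rescales $h$ and each of its derivatives by the same constant factor, a monomial $h^j h^{(m)}$ becomes $(-\tfrac{1}{2})^{j+1} y^j y^{(m)}$; in particular no products of derivatives can be created, so \eqref{an35} automatically inherits linearity in the derivatives from the reduced form above. Tracking the five surviving terms, they pick up the factors $-\tfrac{1}{2}$, $\tfrac{1}{4}$, $-\tfrac{1}{8}$, $\tfrac{1}{16}$, $-\tfrac{1}{32}$ respectively, and multiplying the whole equation by $-2$ then yields exactly the coefficients $1, -10, 30, -30, 6$ of \eqref{an35}.

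I do not expect any genuine conceptual obstacle: the argument is a finite computation whose only delicate point is the simultaneous cancellation of the two nonlinear-in-derivative terms. The structurally interesting feature — and the thing worth emphasizing — is that a single value of $c_5$ annihilates both of them at once, which is precisely what singles out $c_5 = 24$ as the value rendering the $n=3$ equation linear in the derivatives.
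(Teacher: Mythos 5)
Your computation is correct and is exactly the direct verification the paper leaves implicit (no proof is printed for this lemma): at $c_5=24$ the terms $h'h''$ and $h(h')^2$ cancel simultaneously, leaving $h''''+20hh'''+120h^2h''+240h^3h'+96h^5=0$, and the rescaling $h=-\tfrac12 y$ followed by multiplication by $-2$ gives precisely the coefficients $1,-10,30,-30,6$ of \eqref{an35}. Nothing is missing.
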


We have
\[
{1 \over b} \det S_3(h) = h'''' + 5 b h h''' + 10 b h' h'' + 10 b^2 h^2 h'' + 15 b^2 h (h')^2 + 10 b^3 h^3 h' + b^4 h^5.
\]
For $b = 4$ we obtain the equation \eqref{an3} with $c_5 = - 16$.

\begin{cor} 
The function 
\[
{1 \over 4} \left({1 \over t-a_1} + {1 \over t-a_2} + {1 \over t - a_3} + {1 \over t - a_4} \right)
 \] is the general solution of the equation \eqref{an3} with $c_5 = - 16$.
\end{cor}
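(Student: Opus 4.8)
The plan is to obtain the corollary as an immediate specialization of two results already established above: the explicit general solution of the determinantal equation \eqref{Sn}, and the coincidence, at $b=4$, of ${1 \over b}\det S_3(h)$ with the differential expression on the left of \eqref{an3} when $c_5 = -16$. Because both of these are in hand, essentially no new work is required beyond matching parameters and transporting the solution through the identification.

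First I would invoke the general-solution theorem for \eqref{Sn}, which states that $h(t) = {1 \over b}\sum_{k=1}^{n+1}{1 \over t-a_k}$ is the general solution of ${1 \over b}\det S_n(h) = 0$. Specializing to $n=3$ yields that
\[
h(t) = {1 \over b}\left({1 \over t-a_1}+{1 \over t-a_2}+{1 \over t-a_3}+{1 \over t-a_4}\right)
\]
is the general solution of the fourth-order equation ${1 \over b}\det S_3(h)=0$; it carries exactly the four free parameters $a_1,\dots,a_4$ demanded by the order of the equation, which is what makes it the general, rather than merely a particular, solution.

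Next I would use the identity recorded just before the statement: setting $b=4$ in the displayed formula for ${1 \over b}\det S_3(h)$ and collecting terms produces exactly the left-hand side of \eqref{an3} with $48-c_5 = 64$, that is $c_5 = -16$. Transporting the general solution of the previous step through this identification gives
\[
{1 \over 4}\left({1 \over t-a_1}+{1 \over t-a_2}+{1 \over t-a_3}+{1 \over t-a_4}\right)
\]
as the general solution of \eqref{an3} at $c_5=-16$, which is the assertion. The only piece with genuine content is the determinantal theorem itself, already proved through Newton's identities and the vanishing of the elementary symmetric function $\sigma_{n+2}$ of the $n+1$ shifted variables augmented by a zero; so no real obstacle remains here, and the single point needing care is the arithmetic of the coefficient comparison at $b=4$, $c_5=-16$, namely that the two polynomial differential expressions agree term by term after expanding $(48-c_5)(h'+h^2)(h''+6hh'+4h^3)$.
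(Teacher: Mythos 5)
Your proposal is correct and follows exactly the route the paper intends: the corollary is an immediate consequence of the theorem giving $h_n(t)={1\over b}\sum_{k}{1\over t-a_k}$ as the general solution of ${1\over b}\det S_n(h)=0$, combined with the coefficient identification (at $b=4$, $48-c_5=64$) stated immediately before the corollary. The arithmetic check of the term-by-term match is the only content, and it goes through as you describe.
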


Consider the functions \eqref{b20} (case $n=2$) as the limit of functions
\begin{equation} \label{gb23}
g_2(\tau) = 3 \wp(\tau + d; b_2, b_3), \quad g_3(\tau) = {1 \over 2} \wp'(\tau + d; b_2, b_3)
\end{equation}
at $b_2 \to 0$. The functions \eqref{gb23} along with $g_4 = {- b_2 \over 8}$ give 
the general solution to the system
\[
{d \over d \tau} g_2 = 6 g_3, \qquad {d \over d \tau} g_3 = {1 \over 3} g_2^2 + 2 g_4, \qquad {d \over d \tau} g_4 = 0. 
\]
For $x_2 = {1 \over 12} g_2$, $x_3 = {1 \over 2} g_3$, $x_4 = {1 \over 6} g_2^2 + g_4$ this system
\[
{d \over d \tau} x_2 = x_3, \qquad {d \over d \tau} x_3 = x_4, \qquad {d \over d \tau} x_4 = - 96 x_2 x_3, 
\]
corresponds to the case $n=3$, $c_5 = 48$. 
 
Therefore the solution $\psi(z,t)$ of the heat equation corresponding to this case has a parameter
$b_2$, and for $b_2 \to 0$ it tends to 
the solution corresponding to the case $n = 2$,  $c_4 = 24$.

Thus the function
$
e^{{1 \over 2} h(t) z^2 - r(t)} \psi(z;t)
$ at $t = 0$ 
is a deformation of the Weierstrass sigma-function $\sigma(z; g_2, g_3)$ with the deformation parameter $b_2$.

\subsection{Case $n = 4$.}
\[
\mathcal{D}_5(h) = \left({d \over d t} + 10 h \right) \mathcal{D}_4(h).
\]
We have $P_4(x_2, x_3, x_4, x_5) = c_{6,2} x_2^3 + c_{6,3} x_3^2 +  c_{6,4} x_2 x_4$, 
where $c_{6,2}$, $c_{6,3}$ and $c_{6,4}$ are constants,
\[
\mathcal{D}_{P_4,5}(h) = \mathcal{D}_5(h) - c_{6,2} \mathcal{D}_1(h)^3 - c_{6,3} \mathcal{D}_2(h)^2 - c_{6,4} \mathcal{D}_1(h) \mathcal{D}_3(h).
\]

The system \eqref{e14} has the form
\begin{multline*} 
{d \over d t} r = - (\delta + {1 \over 2}) h, \quad
{d \over d t} h = - h^2 + x_2, \quad
{d \over d t} x_2 = x_3 - 4 h x_2, \quad
{d \over d t} x_3 = x_4 - 6 h x_3, \\
{d \over d t} x_4 = x_5 - 8 h x_4, \quad
{d \over d t} x_5 = c_{6,2} x_2^3 + c_{6,3} x_3^2 + c_{6,4} x_2 x_4 - 10 h x_5.
\end{multline*} 

\begin{lem}
For $c_{6,2} = 0$, $c_{6,3} = c_5$ and $c_{6,4} = c_5$ the formula holds
\[
\mathcal{D}_{P_4,5}(h) = \left({d \over d t} + 10 h\right) \mathcal{D}_{P_3,4}(h).
\]
\end{lem}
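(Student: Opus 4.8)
The plan is to deduce this lemma as the $n=3$ instance of the general recursion established just above in this section, namely that for $P_n(\mathbf{y}) \in V_n$ one has $\left(\frac{d}{dt} + 2(n+2)h\right)\mathcal{D}_{P_n,n+1}(h) = \mathcal{D}_{P_{n+1},n+2}(h)$ with $P_{n+1}(\mathbf{y}) = d P_n(\mathbf{y})$. For $n=3$ the prefactor is $2(n+2) = 10$, matching the operator $\left(\frac{d}{dt}+10h\right)$, and the indices shift as $\mathcal{D}_{P_3,4}\mapsto\mathcal{D}_{P_4,5}$, so the identity reads $\left(\frac{d}{dt}+10h\right)\mathcal{D}_{P_3,4}(h) = \mathcal{D}_{dP_3,5}(h)$. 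Thus I would reduce the whole statement to the purely algebraic claim that $dP_3$ is exactly the polynomial $P_4$ singled out by $c_{6,2}=0$, $c_{6,3}=c_5$, $c_{6,4}=c_5$.

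Next I would make the variable identification explicit. Writing $y_k$ for the argument of $P_n$ that gets substituted by $\mathcal{D}_k(h)$, the gradings force $\mathcal{D}_k(h)$ to have the degree of $x_{k+1}$ (both equal $-4(k+1)$), so that $y_k\leftrightarrow x_{k+1}$; in particular the relation $P_3(\mathbf{x}) = c_5 x_2 x_3$ from the case $n=3$ reads $P_3(\mathbf{y}) = c_5 y_1 y_2$. The operator $d = \sum_{k\geqslant 1} y_{k+1}\frac{\partial}{\partial y_k}$ is the derivation determined by $dy_k = y_{k+1}$, so by the Leibniz rule $dP_3 = c_5\bigl((dy_1)y_2 + y_1(dy_2)\bigr) = c_5\bigl(y_2^2 + y_1 y_3\bigr)$. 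Translating back through $y_1=x_2$, $y_2=x_3$, $y_3=x_4$ gives $dP_3 = c_5 x_3^2 + c_5 x_2 x_4$.

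Finally I would compare this with the general form $P_4(\mathbf{x}) = c_{6,2}x_2^3 + c_{6,3}x_3^2 + c_{6,4}x_2 x_4$ recorded at the beginning of the case $n=4$: reading off coefficients yields precisely $c_{6,2}=0$, $c_{6,3}=c_5$, $c_{6,4}=c_5$, which is the hypothesis of the lemma. Hence under these constraints $P_4 = dP_3$, and the displayed identity is exactly the $n=3$ case of the general recursion, completing the proof. (Equivalently, the same conclusion can be reached by the brute-force route: substitute \eqref{subst} and \eqref{an3} directly and compare the two fifth-order ODEs in $h$; but the recursion route avoids this computation entirely.)

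The argument carries no analytic content, and the one place where care is genuinely needed is the index bookkeeping in the identification $y_k\leftrightarrow x_{k+1}\leftrightarrow \mathcal{D}_k(h)$, together with the observation that, by the grading, $P_3$ does not involve the top variable $x_4=x_{n+1}$ (there is no admissible cofactor of degree $-4$). This is what guarantees that $d$ acts only on the two factors actually present and produces no spurious monomial; confirming that no $y_1^3$ term is generated is exactly what pins down $c_{6,2}=0$, and this is the only subtlety I would verify explicitly.
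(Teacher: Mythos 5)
Your proof is correct and is precisely the intended route: the paper states this lemma without a separate proof because it is the $n=3$ instance of the general theorem of Section 5, $\left({d \over dt} + 2(n+2)h\right)\mathcal{D}_{P_n,n+1}(h) = \mathcal{D}_{dP_n,n+2}(h)$, and your computation $d(c_5 y_1 y_2) = c_5(y_2^2 + y_1 y_3)$ with the identification $y_k \leftrightarrow x_{k+1} \leftrightarrow \mathcal{D}_k(h)$ correctly pins down $c_{6,2}=0$, $c_{6,3}=c_{6,4}=c_5$. Nothing further is needed.
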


Thus for $c_{6,2} = 0$, $c_{6,3} = 48$ and $c_{6,4} = 48$ 
the equation $\mathcal{D}_{P_4,5}(h) = 0$ is obtained
from the Chazy-3 equation by the substitution $y(t) = - 6 h(t)$ and the action of the operator
\[
\left({d \over d t} + 10 h\right) \left({d \over d t} + 8 h\right).
\]

\begin{lem} For $c_{6,2} = - 120$, $c_{6,3} = 24$, $c_{6,4} = 44$ 
and $y(t) = - 10 h(t)$ the equation $\mathcal{D}_{P_4,5}(h) = 0$ takes the form
\begin{equation} \label{an4}
 y''''' - 3 y y'''' + 3 y^2 y''' - {6 \over 5} y^3 y'' + {9 \over 50} y^4 y' - {3 \over 500} y^6 = 0.
\end{equation} 
The equation \eqref{an4} is linear with respect to derivatives.
\end{lem}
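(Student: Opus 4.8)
The starting point is the identity recorded just before the lemma, namely
\[
\mathcal{D}_{P_4,5}(h) = \mathcal{D}_5(h) - c_{6,2}\,\mathcal{D}_1(h)^3 - c_{6,3}\,\mathcal{D}_2(h)^2 - c_{6,4}\,\mathcal{D}_1(h)\,\mathcal{D}_3(h),
\]
which reduces the whole statement to an explicit polynomial computation once $\mathcal{D}_1,\dots,\mathcal{D}_5$ are written out. First I would compute $\mathcal{D}_1,\dots,\mathcal{D}_5$ from $\mathcal{D}_1(h) = h'+h^2$ and the recursion $\mathcal{D}_n(h) = (\tfrac{d}{dt}+2nh)\,\mathcal{D}_{n-1}(h)$ of Theorem \ref{t21}. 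The leading and trailing coefficients can be checked against the corollary following Theorem \ref{t21}, $\mathcal{D}_n = h^{(n)} + n(n+1)h^{(n-1)}h + \dots + 2^{n-1}n!\,h^{n+1}$, and the order-four combination against the expanded form \eqref{an3} of the $n=3$ case, which already exhibits $\mathcal{D}_4$ and $\mathcal{D}_1\mathcal{D}_2$.

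Next I would expand the three products $\mathcal{D}_1^3$, $\mathcal{D}_2^2$ and $\mathcal{D}_1\mathcal{D}_3$, insert the values $c_{6,2}=-120$, $c_{6,3}=24$, $c_{6,4}=44$, and subtract from $\mathcal{D}_5$. The decisive point is the cancellation of every monomial that is nonlinear in the derivatives of $h$ — that is, $h'h'''$, $(h'')^2$, $hh'h''$, $(h')^3$ and $h^2(h')^2$. I expect this to be the main obstacle, because it is genuinely over-determined: there are five such monomials but only three parameters at one's disposal. Writing the nonlinear parts of $\mathcal{D}_5$, $\mathcal{D}_1^3$, $\mathcal{D}_2^2$ and $\mathcal{D}_1\mathcal{D}_3$ as vectors in the coordinates $(h'h''',(h'')^2,hh'h'',(h')^3,h^2(h')^2)$, the coefficients of $h'h'''$, $(h'')^2$ and $(h')^3$ force exactly $c_{6,4}=44$, $c_{6,3}=24$ and $c_{6,2}=-120$; the real content of the lemma is then that the two remaining coordinates, those of $hh'h''$ and of $h^2(h')^2$, match automatically for these values. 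Carrying out the subtraction leaves
\[
\mathcal{D}_{P_4,5}(h) = h^{(5)} + 30\,h h^{(4)} + 300\,h^2 h''' + 1200\,h^3 h'' + 1800\,h^4 h' + 600\,h^6,
\]
in which every monomial contains at most one differentiated factor; this establishes the final assertion that the equation is linear with respect to derivatives.

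Finally I would substitute $y=-10h$, so that $h^{(k)} = -\tfrac{1}{10}\,y^{(k)}$, collect the powers of $\tfrac{1}{10}$ term by term, and multiply through by $-10$ to normalise the coefficient of $y^{(5)}$ to $1$. Simplifying the resulting fractions, $\tfrac{12000}{10000}=\tfrac{6}{5}$, $\tfrac{18000}{100000}=\tfrac{9}{50}$ and $\tfrac{6000}{1000000}=\tfrac{3}{500}$, reproduces \eqref{an4} exactly. Throughout, the grading $\deg h = -4$, $\deg t = 4$, under which every monomial in sight has degree $-24$, provides a running consistency check that keeps the bookkeeping in the cancellation step manageable.
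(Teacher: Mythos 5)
Your computation is correct and is exactly the verification the paper leaves implicit (the lemma is stated without proof, being a direct polynomial identity): expanding $\mathcal{D}_5 - c_{6,2}\mathcal{D}_1^3 - c_{6,3}\mathcal{D}_2^2 - c_{6,4}\mathcal{D}_1\mathcal{D}_3$ with the given constants indeed yields $h^{(5)} + 30\,h h^{(4)} + 300\,h^2 h''' + 1200\,h^3 h'' + 1800\,h^4 h' + 600\,h^6$, and the substitution $y=-10h$ gives \eqref{an4}. Your observation that the five nonlinear-in-derivatives monomials are killed by only three parameters — with $h'h'''$, $(h'')^2$, $(h')^3$ pinning down $c_{6,4}=44$, $c_{6,3}=24$, $c_{6,2}=-120$ and the coefficients of $hh'h''$ and $h^2(h')^2$ then vanishing automatically — is a correct and worthwhile sharpening of the bare statement.
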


We have
\begin{multline*}
\det S_4(h) = h''''' + 6 b h h'''' + 15 b (h' +  b h^2) h''' + 10 b (h'' + 6 b h h' + 2 b^2 h^3) h'' + \\ + 15 b^2 \left((h')^2 + 3 b h^2 h' + b^2 h^4\right) h' + b^5 h^6.
\end{multline*}
For $b = 5$ we obtain $\mathcal{D}_{P_4,5}(h)$ 
with $c_{6,2} = - 45$, $c_{6,3} = -26$ and $c_{6,4} = -31$.

\begin{cor}
The function 
\[
{1 \over 5} \left({1 \over t-a_1} + {1 \over t-a_2} + {1 \over t - a_3} + {1 \over t - a_4} + {1 \over t - a_5} \right)
 \]
 is the general solution of the equation $\mathcal{D}_{P_4,5}(h) = 0$ 
  with $c_{6,2} = - 45$, $c_{6,3} = -26$ and $c_{6,4} = -31$.
\end{cor}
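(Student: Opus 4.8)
The plan is to reduce the statement to two results already in hand: the determinant identity worked out in this ($n=4$) subsection, and the general rational-solution theorem of the ``Rational solutions'' section. First I would write the equation $\mathcal{D}_{P_4,5}(h)=0$ out explicitly. Using $\mathcal{D}_{P_4,5}(h) = \mathcal{D}_5(h) - c_{6,2}\mathcal{D}_1(h)^3 - c_{6,3}\mathcal{D}_2(h)^2 - c_{6,4}\mathcal{D}_1(h)\mathcal{D}_3(h)$ together with the recursion $\mathcal{D}_k(h) = ({d\over dt}+2kh)\mathcal{D}_{k-1}(h)$ of Theorem \ref{t21}, I expand the left-hand side as a differential polynomial in $h$. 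By homogeneity every monomial that occurs has grading $-24$, that is, each corresponds to a partition of $6$ in the variables $h,h',\dots,h^{(5)}$ (since $\deg h^{(q)}=-4(q+1)$, the symbol $h^{(q)}$ carries weight $q+1$). This finite list of weight-$6$ monomials is the bookkeeping basis for everything that follows.

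Second, I would invoke the corollary following the rational-solution theorem, which shows that an identity of the shape $\mathcal{D}_{P_n,n+1}(h) = {1\over b}\det S_n(h)$ can hold only if $b=n+1$; here this forces $b=5$. With $b=5$ fixed I expand ${1\over b}\det S_4(h)$ from the $6\times 6$ determinant formula displayed above, in the same monomial basis. Matching the two expansions is then a finite (overdetermined) linear problem in the three unknowns $c_{6,2},c_{6,3},c_{6,4}$; carrying it out pins them down to $c_{6,2}=-45$, $c_{6,3}=-26$, $c_{6,4}=-31$ and, crucially, confirms that with these values the two weight-$6$ differential polynomials agree identically, i.e. $\mathcal{D}_{P_4,5}(h)={1\over 5}\det S_4(h)$.

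Once this identification is in hand, the corollary is immediate. By the rational-solution theorem applied with $n=4$ and $b=5$, the function $h(t)={1\over 5}\sum_{k=1}^{5}{1\over t-a_k}$ is the general solution of ${1\over 5}\det S_4(h)=0$: it satisfies the equation through the Newton-identity argument, which rewrites $\det S_4$ as a multiple of the elementary symmetric function $\sigma_6(z_1,\dots,z_5,0)\equiv 0$ for $z_k=(a_k-t)^{-1}$, and it carries $5=n+1$ free parameters, matching the order of the equation \eqref{Sn}. Since the equation $\mathcal{D}_{P_4,5}(h)=0$ with the stated constants is literally the same equation, the same rational function is its general solution.

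The main obstacle is the coefficient-matching verification of the second step: both sides are order-$5$ differential polynomials spanning all weight-$6$ monomials, so the iterated operators $({d\over dt}+2kh)$ defining $\mathcal{D}_5$ and the $6\times 6$ determinant $\det S_4$ must be expanded carefully and compared term by term. The linear system is overdetermined, with more monomials than free constants, so the real content is its \emph{consistency}; the necessity $b=n+1$ from the cited corollary is precisely what makes consistency possible, and the three displayed values are then uniquely determined.
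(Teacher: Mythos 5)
Your proposal is correct and follows essentially the same route as the paper: identify $\mathcal{D}_{P_4,5}(h)$ with ${1\over 5}\det S_4(h)$ by expanding both sides in the weight-$6$ monomials and matching coefficients (which yields $c_{6,2}=-45$, $c_{6,3}=-26$, $c_{6,4}=-31$), and then apply the rational-solutions theorem of the preceding section with $n=4$, $b=5$. The paper performs exactly this identification (stating the expansion of $\det S_4(h)$ and the resulting constants at $b=5$) before drawing the corollary, so there is no substantive difference.
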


\section{Addendum.} \text{}

In the main part of the paper we have examined the ansatz \eqref{sol} for solutions of the heat equation \eqref{heat}. Remark that ansatz \eqref{sol} is a  special case of 
\begin{equation} \label{sol2}
\psi(z,t) = e^{r(t)} \Psi(z; \textbf{\textit{x}}(t))
\end{equation}
where
\begin{equation} \label{e52}
\Psi(z; \textbf{\textit{x}}) = z^\delta + \sum_{k \geqslant 1} \Psi_k(\textbf{\textit{x}}) {z^{2 k+ \delta} \over (2 k + \delta)!},
\end{equation}
$\deg z = 2$, $\textbf{\textit{x}} = (x_1, x_2, \dots x_{n+1})$, $\deg x_q = - 4 q$,
and $\Psi_k(\textbf{\textit{x}})$ are homogeneous polynomials of degree $-4 k$.

Using the approach developed in sections 3 and 4, we come to the following results:

\begin{thm} \label{t52}
Among the following conditions each two imply the third one:

1) The function $\psi(z,t)$ in the form \eqref{sol2}
solves the heat equation.

2) The coefficients of $\Psi(z; \textbf{\textit{x}})$ are defined by the recursion 
\[
\Psi_{k+1}(\textbf{\textit{x}}) = 2 \sum_{j=1}^{n+1} p_{j+1}(\textbf{\textit{x}}) {\partial \over \partial x_j} \Psi_k(\textbf{\textit{x}}) + \Psi_1(\textbf{\textit{x}}) \Psi_k(\textbf{\textit{x}})
\]
and the condition
\[
r'(t) = {1 \over 2} \Psi_1(\textbf{\textit{x}}(t)).
\]

3) The functions $x_k(t)$ satisfy the homogeneous polynomial dynamical system 
\begin{equation} \label{ds2}
{d \over d t} x_k = p_{k+1}(\textbf{\textit{x}}), \qquad k = 1, 2, \dots, n+1.
\end{equation}
\end{thm}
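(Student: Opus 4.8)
The plan is to substitute the ansatz \eqref{sol2} into the heat equation \eqref{heat} and compare coefficients of powers of $z$, exactly as in the proofs of Theorems \ref{thm1} and \ref{t2}. Since $\frac{\partial}{\partial t}\psi = e^{r}\left(r'\,\Psi + \sum_{k=1}^{n+1}\frac{dx_k}{dt}\frac{\partial\Psi}{\partial x_k}\right)$ and $\frac{\partial^2}{\partial z^2}\psi = e^{r}\frac{\partial^2\Psi}{\partial z^2}$, dividing \eqref{heat} by $e^{r(t)}$ shows that condition (1) is equivalent to the single functional relation
\begin{equation} \label{mainplan}
r'\,\Psi + \sum_{k=1}^{n+1}\frac{dx_k}{dt}\,\frac{\partial\Psi}{\partial x_k} = \frac12\,\frac{\partial^2\Psi}{\partial z^2}.
\end{equation}

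The first step is to reformulate the recursion of (2) as a single operator identity, the analogue of \eqref{e6}. Expanding \eqref{e52} and using $\delta\in\{0,1\}$ (so $\delta!=1$), the coefficient of $z^{2j+\delta}$ in $\frac12\frac{\partial^2\Psi}{\partial z^2}$ is $\frac12\Psi_{j+1}/(2j+\delta)!$, while in $\sum_k p_{k+1}\frac{\partial\Psi}{\partial x_k}+\frac12\Psi_1\Psi$ it is $\bigl(\sum_k p_{k+1}\frac{\partial\Psi_j}{\partial x_k}+\frac12\Psi_1\Psi_j\bigr)/(2j+\delta)!$. Hence the recursion in (2) holds for all $j\geq1$ if and only if
\begin{equation} \label{daggerplan}
\frac12\,\frac{\partial^2\Psi}{\partial z^2} = \sum_{k=1}^{n+1} p_{k+1}(\textbf{\textit{x}})\,\frac{\partial\Psi}{\partial x_k} + \frac12\,\Psi_1(\textbf{\textit{x}})\,\Psi,
\end{equation}
where $\Psi_1(\textbf{\textit{x}}) = c_1 x_1$ is forced by homogeneity. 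I would record this equivalence first, since it is purely algebraic in $\Psi$ and the $p_{k+1}$ and involves neither $r$ nor $t$.

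With \eqref{mainplan} and \eqref{daggerplan} available, the three implications follow by subtraction. Granting the recursion \eqref{daggerplan}, relation \eqref{mainplan} becomes equivalent to
\begin{equation} \label{diffplan}
\left(r' - \tfrac12\Psi_1\right)\Psi + \sum_{k=1}^{n+1}\left(\frac{dx_k}{dt} - p_{k+1}\right)\frac{\partial\Psi}{\partial x_k} = 0,
\end{equation}
and its lowest coefficient, that of $z^\delta$ (the terms $\frac{\partial\Psi}{\partial x_k}$ begin at order $z^{2+\delta}$), reads $r' = \tfrac12\Psi_1$, the remaining half of (2). Thus (2) and (3) make both brackets in \eqref{diffplan} vanish and hence give (1); this solution-constructing direction needs only direct substitution. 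Conversely, (1) together with (3) turns \eqref{mainplan} into \eqref{daggerplan} together with $r'=\tfrac12\Psi_1$ (reading off the coefficient of $z^\delta$), which is exactly (2).

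The main obstacle is the direction (1) and (2) $\Rightarrow$ (3). Combining \eqref{mainplan} with \eqref{daggerplan} and $r'=\tfrac12\Psi_1$ reduces \eqref{diffplan} to $\sum_{k}\left(\frac{dx_k}{dt}-p_{k+1}\right)\frac{\partial\Psi}{\partial x_k}\equiv0$, and from this single scalar relation one must conclude $\frac{dx_k}{dt}=p_{k+1}$ for every $k$. This is precisely where a non-degeneracy hypothesis on $\Psi$ is needed, of the same kind introduced before Theorem \ref{t5}: that $\langle\varphi(\textbf{\textit{x}}),\operatorname{grad}\Psi\rangle\equiv0$ (gradient taken coefficient-wise in $z$) forces $\varphi\equiv0$. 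I expect this to be the only genuinely non-routine point; all remaining steps are coefficient comparisons in the spirit of Theorems \ref{thm1}, \ref{t2} and the Key Theorem \ref{t5}.
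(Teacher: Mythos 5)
Your proposal is correct and is essentially the argument the paper intends: the Addendum gives no explicit proof, only the remark that one should ``use the approach developed in sections 3 and 4,'' and your substitution of \eqref{sol2} into \eqref{heat}, reformulation of the recursion as the operator identity analogous to \eqref{e6}, and pairwise subtraction of the two resulting relations is exactly that method, here simplified by the absence of the $h$-term. You are also right to flag that the implication $(1)\wedge(2)\Rightarrow(3)$ requires the non-degeneracy of $\Psi$ in the sense defined before Theorem \ref{t5}; the paper silently omits this hypothesis both here and in the Key Theorem, so making it explicit is an improvement rather than a deviation.
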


\begin{ex}
For $n = 2$ let $x_1(t) = {1 \over 3} \left({1 \over t - a_1} + {1 \over t - a_2} + {1 \over t - a_3} \right)$. Then for $x_2(t) = - {1 \over 3} \left({1 \over (t - a_1)^2} + {1 \over (t - a_2)^2} + {1 \over (t - a_3)^2} \right)$ and $x_3(t) = {2 \over 3} \left({1 \over (t - a_1)^3} + {1 \over (t - a_2)^3} + {1 \over (t - a_3)^3} \right)$ we have
\[
{d \over d t} x_1 = x_2, \quad {d \over d t} x_2 = x_3, \quad {d \over d t} x_3 = - 3 (4 x_1 x_3 + 3 x_2^2 + 18 x_2 x_1^2 + 9 x_1^4).
\]
Denote $\mathcal{L} = 2 x_2 {\partial \over \partial x_1} + 2 x_3 {\partial \over \partial x_2} - 6 (4 x_1 x_3 + 3 x_2^2 + 18 x_2 x_1^2 + 9 x_1^4) {\partial \over \partial x_3} + \Psi_1(\textbf{\textit{x}})$.

Theorem \ref{t52} implies that for the function $\Psi(z; \textbf{\textit{x}})$ of the form \eqref{e52} where $\Psi_1(\textbf{\textit{x}}) = - {1 \over 2} x_1$ and $\Psi_k(\textbf{\textit{x}}) = \mathcal{L}^{k-1} \Psi_1(\textbf{\textit{x}})$
the function
\[
\psi(z,t) = {1 \over \sqrt[12]{(t-a_1)(t-a_2)(t-a_3)}} \Psi(z; \textbf{\textit{x}}(t))
\]
satisfies the heat equation.
\end{ex}

Using the methods of section 4, we bring the system \eqref{ds2} to the form
\[
{d \over d t} x_k = x_{k+1}, \qquad k = 1, 2, \dots, n, \qquad {d \over d t} x_{n+1} = K_n(\textbf{\textit{x}}).
\]

\begin{thm} \label{92}
Each solution of the heat equation of the form \eqref{sol2} is defined by the set $(n, K_n, x_1, r_0)$, where
$n$ is a natural number, 
$K_n$ is a homogeneous polynomial $K_n(\textbf{\textit{x}})$ of degree $- 4 (n+2)$, 
$x_1$ is a solution $x_1(t)$ of the equation $x_1^{(n+1)} = K_n(x_1, x_1', \dots, x_1^{(n)})$ and
$r_0$ is a constant.
\end{thm}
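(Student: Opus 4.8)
The plan is to run the argument of Theorem \ref{t9} in the enlarged setting, using Theorem \ref{t52} in place of the Key Theorem \ref{t5} and the reduction machinery of Section \ref{c3} applied to the extended variable set $\textbf{\textit{x}} = (x_1, \dots, x_{n+1})$. First I would invoke Theorem \ref{t52}: any solution of the heat equation of the form \eqref{sol2} determines a homogeneous polynomial dynamical system \eqref{ds2} of the type \eqref{e11} (now indexed from $1$), a vector-function $\textbf{\textit{x}}(t)$ solving it, coefficients $\Psi_k$ produced by the stated recursion from $\Psi_1$, and the relation $r'(t) = \tfrac12 \Psi_1(\textbf{\textit{x}}(t))$. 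Conversely, the same theorem shows that such data reconstructs $\psi$, so it suffices to normalize the data to the claimed minimal set.

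Next I would reduce the system \eqref{ds2}. Since \eqref{ds2} carries no $h$-term and is exactly of the type \eqref{e11}, the polynomial transformations \eqref{e12}, extended to act on $x_1$ as well, apply verbatim; arguing as in Lemmas \ref{l6} and \ref{l11} I may assume each $p_{k+1}$ depends substantially on $x_{k+1}$ and then, as for the reduced system \eqref{red}, choose new coordinates so that $\tfrac{d}{dt} x_k = x_{k+1}$ for $k = 1, \dots, n$ and $\tfrac{d}{dt} x_{n+1} = K_n(\textbf{\textit{x}})$, with $K_n$ homogeneous of degree $-4(n+2)$. The essential point, exactly as in Lemma \ref{l11}, is that the $\Psi_k$ are generated recursively from $\Psi_1$ and the $p_{j+1}$, so re-expressing them in the new coordinates yields the same function $\Psi(z;\textbf{\textit{x}}(t))$ and hence the same solution $\psi$ of the heat equation.

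In the reduced form the chain $\tfrac{d}{dt}x_k = x_{k+1}$ gives $x_k(t) = x_1^{(k-1)}(t)$, so the entire vector $\textbf{\textit{x}}(t)$ is recovered from $x_1(t)$ by differentiation, and substituting into the final equation turns $\tfrac{d}{dt}x_{n+1} = K_n(\textbf{\textit{x}})$ into the scalar ordinary differential equation $x_1^{(n+1)} = K_n(x_1, x_1', \dots, x_1^{(n)})$. Because $\Psi_1$ is homogeneous of degree $-4$ and $x_1$ is the only monomial of that degree, $\Psi_1$ is a fixed scalar multiple of $x_1$; hence $r'(t) = \tfrac12\Psi_1(\textbf{\textit{x}}(t))$ is a known multiple of $x_1(t)$, and integrating determines $r(t)$ up to a single additive constant $r_0$. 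Collecting these, the solution is completely specified by $(n, K_n, x_1, r_0)$, and the converse direction follows by reading Theorem \ref{t52} in reverse.

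I expect the reduction step to be the main obstacle: one must check that the transformations \eqref{e12}, now including the variable $x_1$, bring \eqref{ds2} to the stated normal form and, crucially, leave the heat-equation solution invariant. This is the analogue of Lemmas \ref{l6} and \ref{l11} for the larger index range, together with the verification that each intermediate transform \eqref{e12} preserves both the recursion for $\Psi_k$ and the relation for $r$. Once this is established, the remaining steps, namely the substitution $x_k = x_1^{(k-1)}$ and the single integration for $r$, are routine.
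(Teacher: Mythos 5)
Your proposal is correct and takes essentially the same route as the paper: the paper's entire justification of Theorem \ref{92} is the remark ``Using the methods of section 4, we bring the system \eqref{ds2} to the form $\dot x_k = x_{k+1}$, $\dot x_{n+1} = K_n(\textbf{\textit{x}})$,'' and your argument spells out exactly those methods --- Theorem \ref{t52} in place of the Key Theorem, the transformations \eqref{e12} extended to the index range starting at $x_1$, the chain substitution $x_k = x_1^{(k-1)}$, and the observation that $\Psi_1$ is forced by the grading to be a fixed multiple of $x_1$ so that $r$ is recovered up to the constant $r_0$. The verifications you flag as the main obstacle are precisely the analogues of Lemmas \ref{l6} and \ref{l11} that the paper leaves implicit.
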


\begin{cor}
Each solution of the heat equation of the form \eqref{sol2} is defined by a finite-dimensional numerical vector $(n, \textbf{\textit{K}}_n, \textbf{\textit{C}}_n, r_0)$, where $ \textbf{\textit{K}}_n$ is the vector of coefficients of the polynomial $K_n(\textbf{\textit{x}})$ and $\textbf{\textit{C}}_n$ is the vector of initial data in the Cauchy problem for the equation $x_1^{(n+1)} = K_n(x_1, x_1', \dots, x_1^{(n)})$.
\end{cor}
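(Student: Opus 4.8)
The plan is to treat this as a bookkeeping corollary of Theorem \ref{92}. That theorem already asserts that a solution $\psi(z,t)$ of the form \eqref{sol2} is completely determined by the tuple $(n, K_n, x_1, r_0)$, so all that remains is to verify that each of these four ingredients is itself captured by finitely many complex numbers, and to collect them into a single vector. The integer $n$ and the constant $r_0 \in \mathbb{C}$ are already numerical parameters and need no reduction, so the work concentrates on $K_n$ and on the function $x_1(t)$.

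For $K_n$ I would use its grading: it is a homogeneous polynomial of degree $-4(n+2)$ in the graded variables $x_1, \dots, x_{n+1}$ with $\deg x_q = -4q$. Its monomials $x_1^{j_1}\cdots x_{n+1}^{j_{n+1}}$ of this degree correspond to partitions of $n+2$ into parts of size at most $n+1$, so the space of admissible $K_n$ is finite-dimensional (of dimension $p(n+2)-1$, by a partition count analogous to Lemma \ref{l1}, the part $n+2$ being excluded since no variable $x_{n+2}$ is present). Fixing a monomial basis of this space identifies $K_n$ with its finite coefficient vector $\textbf{\textit{K}}_n$.

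For $x_1(t)$ I would invoke existence and uniqueness for the Cauchy problem. After the reduction of \eqref{ds2} to the form ${d \over d t} x_k = x_{k+1}$ for $k = 1, \dots, n$ and ${d \over d t} x_{n+1} = K_n(\textbf{\textit{x}})$, one has $x_k(t) = x_1^{(k-1)}(t)$, so $x_1$ satisfies the scalar equation $x_1^{(n+1)} = K_n(x_1, x_1', \dots, x_1^{(n)})$ of order $n+1$. Because $K_n$ is a polynomial, the right-hand side is an analytic function of its arguments, and the order-$(n+1)$ Cauchy problem admits a unique solution germ at a chosen base point $t_0$ once the initial data $(x_1(t_0), x_1'(t_0), \dots, x_1^{(n)}(t_0))$ are prescribed; these $n+1$ numbers form the vector $\textbf{\textit{C}}_n \in \mathbb{C}^{n+1}$. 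Through $x_k = x_1^{(k-1)}$ and the relation $r'(t) = {1 \over 2}\Psi_1(\textbf{\textit{x}}(t))$ of Theorem \ref{t52}, the data $\textbf{\textit{C}}_n$ together with $r_0$ then recover all of $\textbf{\textit{x}}(t)$ and $r(t)$, hence $\psi$ itself.

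Assembling the pieces, the solution is encoded by the finite-dimensional numerical vector $(n, \textbf{\textit{K}}_n, \textbf{\textit{C}}_n, r_0)$, which is the assertion. The only genuinely analytic point is the uniqueness step, and I expect the main obstacle to be making precise the class of solutions in which Theorem \ref{92} is phrased, so that prescribing Cauchy data really does pin down $x_1$ uniquely. Since $K_n$ is polynomial the equation is the classical Cauchy problem with analytic right-hand side, so uniqueness of the local solution germ at a regular base point holds and no ambiguity remains.
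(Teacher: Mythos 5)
Your proposal is correct and follows exactly the route the paper intends: the corollary is stated as an immediate consequence of Theorem \ref{92}, obtained by replacing $K_n$ with its finite coefficient vector (the space being $p(n+2)-1$ dimensional, as the paper notes right after the corollary) and replacing the solution $x_1(t)$ with its Cauchy data for the order-$(n+1)$ scalar equation. The paper gives no separate argument, so your explicit unpacking of the finite-dimensionality of the polynomial space and the uniqueness of the analytic Cauchy problem is precisely the bookkeeping the authors leave implicit.
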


The linear space of homogeneous polynomials $K_n(\textbf{\textit{x}})$ of degree $- 4 (n+2)$ is $p(n+2) - 1$ dimensional, 
where $p(n)$ is the number of partitions of $n$ (compare to lemma \ref{l1}).

\begin{cor}
For each $n$ the space of solutions of the heat equation of the form \eqref{sol2} is $p(n+2)+n+1$-dimensional and has the $p(n+2) - p(n+1) + n + 1$-dimensional subspace of solutions of the form \eqref{sol}.
\end{cor}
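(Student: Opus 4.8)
The plan is to read off both dimension counts from the structure theorems already proved for the two ans\"atze, and then to deduce the subspace statement from the observation, made at the start of the Addendum, that \eqref{sol} is a special case of \eqref{sol2}.

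First I would count the solutions of the form \eqref{sol2}. By Theorem \ref{92} and its corollary, at a fixed $n$ such a solution is determined by the coefficient vector $\textbf{\textit{K}}_n$ of the homogeneous polynomial $K_n(\textbf{\textit{x}})$ of degree $-4(n+2)$, the initial data $\textbf{\textit{C}}_n$ of the Cauchy problem for the equation $x_1^{(n+1)} = K_n(x_1, x_1', \dots, x_1^{(n)})$, and the constant $r_0$. These three blocks enter independently: $r_0$ appears only through the quadrature $r' = \frac{1}{2}\Psi_1(\textbf{\textit{x}}(t))$, shifting $r$ by an additive constant, while $\textbf{\textit{K}}_n$ fixes the dynamical system \eqref{ds2} and $\textbf{\textit{C}}_n$ fixes its orbit. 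The space of admissible $K_n$ has dimension $p(n+2)-1$, as recorded just above the statement (the monomials are indexed by partitions of $n+2$ into parts $\le n+1$, i.e. all partitions except the single block $n+2$). The Cauchy data contribute $n+1$ and $r_0$ contributes $1$, so the total is $(p(n+2)-1)+(n+1)+1 = p(n+2)+n+1$.

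Next I would count the subspace coming from \eqref{sol}. Setting $x_1 = h$ and $e^{-\frac{1}{2}x_1 z^2}\Phi(z;x_2,\dots,x_{n+1}) = \Psi(z;x_1,x_2,\dots,x_{n+1})$ exhibits \eqref{sol} as the locus inside \eqref{sol2} on which $\Psi$ factors through the Gaussian prefactor; thus these solutions do form a genuine subset. By Theorem \ref{t9} and its corollary such a solution is determined by the coefficients of $P_n \in V_n$, the initial data of the order-$(n+1)$ equation $\mathcal{D}_{P_n,n+1}(h)=0$, and $r_0$. Lemma \ref{l1} gives $\dim V_n = p(n+2)-p(n+1)-1$; adding the $n+1$ Cauchy parameters and the single $r_0$ gives $p(n+2)-p(n+1)+n+1$, as claimed.

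The two arithmetic sums are routine. The step that needs genuine care is the passage from ``number of free parameters'' to ``dimension of the solution set'': one must know that the parametrisations supplied by Theorems \ref{92} and \ref{t9} are injective, so that distinct data yield distinct solutions and no hidden relation collapses the count. Injectivity in the polynomial coefficients rests on the canonical form of the reduced systems (the reduction lemmas of Section \ref{c3} making the reduced $K_n$, respectively $P_n$, unique), injectivity in the Cauchy data is existence--uniqueness for the associated ODE, and the independence of $r_0$ is clear since it only rescales $\psi$ by a nonzero constant without altering $\textbf{\textit{x}}(t)$. Establishing that the three blocks are jointly independent --- equivalently that the parametrisation has full rank on a dense set --- is the main obstacle; once it is in hand, the subspace statement follows at once from the inclusion of the \eqref{sol}-solutions into those of the form \eqref{sol2}.
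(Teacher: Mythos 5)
Your argument is the same parameter count the paper intends: $(p(n+2)-1)+(n+1)+1$ for the ansatz \eqref{sol2} via Theorem \ref{92} and the stated dimension of the space of polynomials $K_n$, and $(p(n+2)-p(n+1)-1)+(n+1)+1$ for the ansatz \eqref{sol} via Theorem \ref{t9} and Lemma \ref{l1}, with the inclusion of the second family in the first giving the subspace statement. The injectivity concern you raise at the end is a fair point of rigour, but the paper treats the corollary as an immediate consequence of the two structure theorems and does not address it either, so your proposal matches the paper's route.
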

 
Therefore the introduction of the solution of the heat equation ansatz described in the main part of the work picks out a special class of non-linear differential equations of the form $x_1^{(n+1)} = K_n(x_1, x_1', \dots, x_1^{(n)})$ (see section \ref{c34}).


\begin{thebibliography}{88}


\bibitem{Trudy} E.~Yu.~Bunkova, V.~M.~Buchstaber, \emph{Heat Equations and Families of Two-Dimensional Sigma Functions},
Geometry, Topology, and Mathematical Physics. II, Collected papers.
Dedicated to Academician Sergei Petrovich Novikov on the occasion of his 70th birthday,
Tr. Mat. Inst. Steklova, 266, MAIK Nauka/Interperiodica, Moscow, 2009, 5–32.

\bibitem{Olver} P. J. Olver,  \emph{Applications of Lie Groups to Differential Equations}, Second Edition, Springer-Verlag, New York, 1993. 

\bibitem{CM} R. Conte, M. Musette, \emph{The  Painlev\'e handbook},
 Springer, (2008).
 
\bibitem{BLP} V. M. Buchstaber, D. V. Leikin, M. V. Pavlov, 
\emph{Egorov Hydrodynamic Chains, the Chazy Equation, and} $SL(2, \mathbb{C})$, Funct. Anal. Appl., 37:4 (2003), 251–262.

\bibitem{CAC} S. Chakravarty, M. J. Ablowitz, P. A. Clarkson, 
\emph{Reductions of self-dual Yang-Mills fields and classical systems}, 
Physical Review Letters {\bf 65} (1990) pp. 1085-1087.  
 
\bibitem{D} B. Dubrovin,
\emph{Geometry of 2D topological field theories,}
 Springer LNM, 1620 (1996), pp. 120–348.

\bibitem{CO} P. A. Clarkson, P. J. Olver, 
\emph{Symmetry and the Chazy equation}, 
J. Diff. Eq. 124 (1996), 225-246.

\bibitem{WW} E.~T.~Whittaker, G.~N.~Watson, \emph{A Course in Modern Analysis}, 4th ed. Cambridge, England:
Cambridge University Press, 1990.



\end{thebibliography}
\end{document}